\newtheorem{thm}{Theorem}[section] 
\newtheorem{cor}[thm]{Corollary} 
\newtheorem{prop}[thm]{Proposition} 
\theoremstyle{definition} 
\newtheorem{defn}[thm]{Definition}  
\theoremstyle{remark}  
\def\beq{\begin{eqnarray}}  
\def\eeq{\end{eqnarray}}  
\def\bsp{\begin{split}}  
\def\esp{\end{split}}
\def\Tr{\mathrm{Tr}}  
\def\d{\mathrm{d}}
\newacronym{gr}{GR}{General Relativity}
\newacronym[plural=THs,firstplural=trapping horizons (THs)]{ths}{TH}{trapping horizon}
\newacronym[plural=MTTs,firstplural=marginally trapped tubes (MTTs)]{mtts}{MTT}{marginally trapped tube}
\newacronym{neh}{NEH}{non-expanding horizon}
\newacronym{wih}{WIH}{weakly isolated horizon}
\newacronym[plural=MTSs,firstplural=marginally trapped surfaces (MTSs)]{mtss}{MTS}{marginally trapped surface}
\newacronym{foth}{FOTH}{future outer trapping horizon}
\newacronym{dh}{DH}{dynamical horizon}
\newacronym{bw}{b.w.}{boost weight}
\newacronym[plural=SPIs,firstplural=scalar polynomial (curvature) invariants (SPIs)]{spi}{SPI}{scalar polynomial (curvature) invariant}
\newacronym{np}{NP}{Newman-Penrose}
\newacronym{fkwc}{FWKC}{Fulling, King, Wybourne and Cummings}
\newacronym{ltb}{LTB}{Lemaitre-Tolman-Bondi}
\newacronym{nut}{NUT}{Newman, Unti, Tamburino}
\begin{document}  
  
\title{\Large\textbf{Identification of black hole horizons using scalar curvature invariants}}  
\author{{\large\textbf{Alan Coley$^{\heartsuit}$ and David McNutt$^{\spadesuit}$ }}
\vspace{0.3cm} \\ 
$^{\heartsuit}$Department of Mathematics and Statistics,\\
Dalhousie University,
Halifax, Nova Scotia,\\
Canada B3H 3J5
\vspace{0.3cm} \\ 
$^{\spadesuit}$ Faculty of Science and Technology,\\
University of Stavanger, 
N-4036 Stavanger, Norway  \\
 \vspace{0.3cm} \\
\texttt{aac@mathstat.dal.ca,david.d.mcnutt@uis.no}}  
\date{\today}  
\maketitle  
\pagestyle{fancy}  
\fancyhead{} 
\fancyhead[EC]{}  
\fancyhead[EL,OR]{\thepage}  
\fancyhead[OC]{}  
\fancyfoot{} 

\begin{abstract}

We introduce the concept of a geometric horizon, which is a surface distinguished by the vanishing of certain curvature invariants which characterize its special algebraic character. We motivate its use for the detection of the event horizon of a stationary black hole by providing a set of appropriate scalar polynomial curvature invariants that vanish on this surface. We extend this result by proving that a non-expanding horizon, which generalizes a Killing horizon, coincides with the geometric horizon. Finally, we consider the imploding spherically symmetric metrics and show that the geometric horizon identifies a unique quasi-local surface corresponding to the unique spherically symmetric marginally trapped tube, implying that the spherically symmetric dynamical black holes admit a geometric horizon. Based on these results, we propose a suite of conjectures concerning the application of geometric horizons to more general dynamical black hole scenarios.

\end{abstract} 

\newpage 

%

\section{Introduction}

In General Relativity (GR), black holes are exact solutions which may be interpreted as physical objects formed from the gravitational collapse of fuel-exhausted stars. As such, they present an excellent arena to explore the connection between gravitation, thermodynamics and quantum theory. A defining feature of a black hole is its event horizon, which is the boundary of the region from where signals can be sent to a distant asymptotic external region.
The event horizon is typically identified as the surface of the black hole and relates its area to the entropy of the black hole. However, the event horizon is essentially a {\em teleological} object, as we must know the global behaviour of the spacetime in order to determine the event horizon locally. That is, the event horizon depends on the {whole} future evolution of the spacetime \cite{AshtekarKrishnan}. 

To study the behaviour of black holes, for example in  numerical GR \cite{T} in the 3+1 approach  or in the Cauchy-problem in GR, it is crucial to locate a black hole locally. Of course, such a characterization may not rely on the existence of an event horizon alone, as realistic black holes undergo evolutionary processes and are usually dynamical. To address this, Penrose \cite{P2} introduced the important concept of {\em closed
trapped surfaces}, which are compact spacelike surfaces (usually
topological spheres) such that the expansion of the future-pointing null normal vectors are negative. Considering 
time-dependent situations, the event horizons (which are null surfaces) familiar from the study of stationary black holes are replaced in practice by {\it apparent horizons} defined as the locus of the vanishing expansion of a null geodesic congruence emanating from a trapped surface $S$ with spherical topology \cite{Booth2005}.

Unlike the event horizon, the apparent horizon is a quasi-local concept, and it is intrinsically foliation-dependent; this is  because it is a 2-surface that is dependent on the choice of asymptotically flat 3-surfaces which foliate spacetime, and consequently will depend on the observer in dynamical situations. In numerical studies of collapse, it is more practical to track apparent horizons because, as already noted, the event horizon requires the knowledge of the entire future history of the spacetime.  For example, apparent horizons are employed in simulations of high precision waveforms of gravitational waves arising from the merger of compact-object binary systems or in stellar collapse to form black holes in numerical relativity. The successful observations by the LIGO collaboration of gravitational waves from black hole mergers relied upon numerical simulations based on apparent horizons \cite{LIGO}. 

In practice the definition of an apparent horizon is difficult to use, and other quasi-local surfaces are often employed instead. In particular, two quasi-local surfaces, \glspl{mtts} and \glspl{ths}, which bound the event horizon of a dynamical black hole, play an important role  \cite{Senov}. These surfaces are extensions of the concept of a {\it future-trapped surface}, $S$, which is a closed two-surface with the property that the expansions in each of the two future-pointing null vectors normal to the surface, $\ell^a$ and $n^a$, are everywhere negative: 

\beq \theta_{(\ell)} = \bar{q}^{ab} \nabla_a \ell_b < 0,~~\text{ and } \theta_{(n)} = \bar{q}^{ab} \nabla_a n_b <0 \label{eqn:trappedS} \eeq

\noindent where $\bar{q}_{ab} = g_{ab} + \ell_a n _b + \ell_b n_a$ is the induced two metric on $S$. We will always normalize the null vectors such that $\ell_a n^a = -1$ to ensure they are outward/inward pointing null vector fields. As an example of such a surface, for equilibrium states of  dynamical black holes an alternative to the restrictive concept of a stationary horizon is given by the quasi-local weakly isolated horizons, which account for equilibrium states of black holes and cover all essential local features of event horizons which are unaffected by the dynamic evolution of the surrounding spacetime \cite{AshtekarKrishnan,DiazPolo}. 

\begin{defn}
A sub-manifold $\mathcal{H}$ of a spacetime is said to be a \gls{neh} if 
\begin{itemize}
\item $\mathcal{H}$ is topologically $S^2 \times \mathcal{R}$ and null.
\item Any null normal $\ell^a$ of $\mathcal{H}$ has vanishing expansion $\theta_{(\ell)} = 0$ and
\item The Einstein field equations hold at $\mathcal{H}$ and the stress-energy tensor $T_{ab}$ is such that $-T^a_{~b} \ell^b $ is future-causal for any future directed null normal $\ell^a$.
\end{itemize}
The pair $(\mathcal{H}, \ell)$ is said to constitute a \gls{wih} provided $\mathcal{H}$ is an NEH and any null normal proportional to $\ell^a$ satisfies
\beq (\mathcal{L}_\ell \mathcal{D}_a - \mathcal{D}_a \mathcal{L}_\ell) \ell^b =0 \nonumber \eeq
\noindent where $\mathcal{D}_a$ is the induced torsion-free derivative operator on $\mathcal{H}$. For any NEH with the condition that $\theta_{(\ell)} = 0$ on $\mathcal{H}$ implies that $\mathcal{D}_a$ will be unique \cite{Ashtekar}.

\end{defn} 

\noindent A WIH is essentially a three-dimensional (3D) null surface with topology $S^2 \times R$ with an outgoing expansion rate which vanishes on the horizon (with some additional conditions) \cite{AshtekarKrishnan}. The null normal vector is a local time-translational Killing vector field for the intrinsic geometry of the horizon, leading to the invariance under time evolution with respect to $\ell^a$ of the induced metric and the induced derivative operator, which is directly expressible by the isolated horizon condition. All these conditions are local to the horizon,  and require neither asymptotic structures nor foliations of spacetime. Every Killing horizon which is topologically $S^2 \times R$  is an isolated horizon. However, in general, spacetimes with isolated horizons need not admit any Killing vector fields for the spacetime, even in a neighborhood. 

A {\it trapped region} $\mathcal{T}$ is defined as a subset of spacetime where each point of this region passes a trapped surface. The {\it trapping boundary} $\partial \mathcal{T}$ is a connected component of the boundary of an inextendible trapped region. Unlike the MTTs, the trapping boundary is not foliated by \glspl{mtss} which are compact spacelike two-dimensional (2D) submanifolds on which the expansion of one of the null normals vanishes, and the other is non-positive. While $\partial \mathcal{T}$ is non-local, the concept of a trapping surface leads to the following quasi-local analogue of a future event horizon \cite{hayward1994a, hayward1994b} (also known as an apparent horizon \cite{kriele1997} in applications): 

\begin{defn}
A \gls{foth} is a smooth 3D submanifold $\underline{\mathcal{H}}$ of spacetime, foliated by closed 2D submanifolds $\underline{S}$, such that
\begin{itemize}
\item the expansion of one future direction null normal to the foliation, say $\ell^a$, vanishes, $\theta_{(\ell)} =0$; 
\item the expansion of the other future directed null normal, $n^a$, is negative, $\theta_{(n)} < 0$; and 
\item the directional derivative of $\theta_{(n)}$ along $n^a$ is negative, $\mathcal{L}_n \theta_{(\ell)} <0$. 
\end{itemize}
\end{defn}

\noindent The Raychaudhuri equation shows that $\underline{\mathcal{H}}$ is either spacelike or null if the shear $\sigma_{ab}$ of $\ell^a$ and the matter flux $T_{ab} \ell^a \ell^b$ across $\underline{\mathcal{H}}$ vanish. In this case $\underline{\mathcal{H}}$ is a NEH (i.e., a WIH). The FOTH is spacelike in the dynamical region where gravitational radiation and matter fields are pouring into it, and is null when it has reached equilibrium. By relaxing the condition that $\mathcal{L}_n \theta_{(\ell)}$ is negative we have the definition of a dynamical horizon: 
 
\begin{defn}
A smooth, 3D spacelike submanifold (possibly with boundary), $\underline{\mathcal{H}}'$ of spacetime is said to be a \gls{dh} if it can be foliated by closed 2D submanifolds $\underline{S}'$, such that
\begin{itemize}
\item the expansion of one future direction null normal to the foliation, say $\ell^a$, vanishes, $\theta_{(\ell)} =0$; 
\item the expansion of the other future directed null normal, $n^a$, is negative, $\theta_{(n)} < 0$.
\end{itemize}
\end{defn}

Since MTTs depend on the choice of a reference foliation of spacelike hypersurfaces, they are non-unique.
The non-uniqueness of trapped surfaces is inherited by everything based on them, such as MTTs and including dynamical horizons. To resolve this we could use the well defined event horizon and accept its teleological properties, treat all possible MTTs and dynamical/trapping horizons as equally valid, use some other non-local boundary, or try to define preferred marginally trapped tubes \cite{Senov}.

A dynamical horizon is better suited to analyse dynamical processes involving black holes, such as black hole growth and coalescence. A dynamical horizon is a 3D spacelike hypersurface foliated by marginally trapped 2D compact surfaces, which can transition to an isolated null NEH when the flux of gravitational radiation or matter across it is zero. Fluxes of energy and angular momentum carried by gravitational waves across a dynamical horizon necessarily cause the area of such surfaces to increase with time, and the corresponding change in the horizon cross section area was analysed in \cite{AK1,AK2}. Due to back scattering, the transition to equilibrium takes an infinite time.  However, considering a finite time transition is more instructive for it involves a smooth matching between dynamical and non-expanding horizons.  As it was illustrated in \cite{AK1,AK2}, angular momentum, energy, area, and surface gravity of the horizons cross sections match smoothly.

The Vaidya solution provides a simple and explicit example of a dynamical horizon \cite{Vaidyaa, Vaidyab, BonnorVaidya}. In addition, for an appropriate mass function the Vaidya solution provides examples of the transition from the dynamical to isolated horizons. The Vaidya solution admits spherically symmetric marginally trapped surfaces. The existence of non-spherically symmetric dynamical horizons which asymptote to the NEH was discussed in \cite{AshtekarKrishnan} where it was shown that if a hypersurface admits a dynamical horizon structure, it is unique. However, if a spacetime has several distinct black holes, it may admit several distinct non-unique  dynamical horizons.  If one considers dynamical horizons which are also FOTHs (spacelike future outer trapping horizons, SFOTHs), then one can show that if two non-intersecting SFOTHs become tangential in a finite time to the same NEH, then they either coincide or one is contained in the other.  However, one cannot rule out the existence of more than one SFOTH which asymptotes to the NEH if they intersect each other repeatedly.

In this paper we will explore the relationship between these surfaces for black holes admitting stationary horizons and NEHs, and for the spherically symmetric dynamical black hole solutions. We will introduce a new surface to study, defined by the requirement that the Ricci and Weyl tensors are more algebraically special on this surface as compared to the rest of the spacetime. This condition will be defined in terms of the vanishing of scalar curvature invariant, which implies that these surfaces are foliation independent. In section 2 we review the discriminant scalar polynomial invariants, and show how they can be used to determine when a spacetime becomes algebraically special. In section 3 we discuss the event horizon for stationary black holes which is a Killing horizon and are detectable by scalar curvature invariants; we posit that these invariants are related to the discriminant invariants. In section 4 we show that other horizons beyond Killing horizons are detectable by invariants, namely the NEHs, WIHs and the dynamical horizon of a spherically symmetric metric. We also discuss how the dynamical horizons of less idealized black hole solutions could be detected using invariants. Motivated by these results we introduce the geometric horizon detection conjectures in section 5. In section 6 we summarize the results and discuss their applications.

There are five appendices. In Appendix A we provide the Kerr-Newman-\glsdisp{nut}{NUT}-(Anti)-de Sitter metric as an example to show that a frame exists for which the curvature tensor and its covariant derivatives becomes algebraically special on the event horizons of this metric. In Appendix B we compare the Page-Shoom invariants and the discriminant scalar polynomial curvature invariants for the Kerr-Newman-NUT-(Anti)-de Sitter metric and show that the invariants share a common factor. In Appendix C we review the geometric identities for the contractions of the Riemann tensor and its covariant derivatives in order to determine a minimal basis for the set of polynomials formed from them and possibly simplify the discriminant invariants. In Appendix D we present the necessary type {\bf II/D} conditions for the Weyl tensor using discriminant invariants. In Appendix E we summarize the abbreviations frequently used in this paper. 
\newpage

\section{Discriminating Scalar Polynomial Curvature Invariants}

The introduction of {\em alignment theory} \cite{classa,classb,classc} allows for the algebraic classification of any tensor in a Lorentzian spacetime of arbitrary dimensions using \gls{bw}. The dimension-independent theory of alignment can be applied to the tensor classification problem for the Weyl tensor in higher dimensions \cite{classa,classb,classc}, and to the classification of second-order symmetric tensors, such as the Ricci tensor, and tensors involving covariant derivatives. The Ricci tensor can also be classified according to its eigenvalue structure. In a related way, the classification  of the Weyl tensor can be obtained by introducing {\em bivectors}, where the Weyl bivector operator is defined in a manner consistent with its b.w. decomposition \cite{BIVECTOR}. 

The classifications of the Weyl tensor are distinct in higher dimensions; however, in four dimensions (4D) they yield the Petrov classificaton \cite{kramer}. Using the b.w. decomposition and curvature operators together, the algebraic classification of the Weyl tensor and the Ricci tensor (and their covariant derivatives) in higher  dimensions can be refined by exploiting their eigenbivector and eigenvalue structure. A tensor of a particular special algebraic type will have an associated operator with a restricted eigenvector structure, and this can then be used to determine necessary conditions for the algebraic type. 

If a tensor is of alignment type {\bf II}, there exists a frame where all components with positive b.w. vanish. If a tensor is of alignment type {\bf D} then there exists a frame where all components with non-zero b.w. vanish. Using discriminants, we can completely determine the eigenvalues of the curvature operator (up to degeneracies) yielding, for example, necessary conditions in terms of simple \glspl{spi} for the Weyl and Ricci curvature operators to be of algebraic type {\bf II} or {\bf D} in arbitrary dimensions \cite{CHDG,CH}. Necessary conditions for the covariant derivatives of the Ricci and Weyl tensor to be algebraically special can be found by forming 2 or 4 index tensors from them.

\subsection{Discriminant Analysis}

A {\em SPI of order $k$} is a scalar obtained by the contraction of copies of the Riemann tensor and its covariant derivatives up to the order $k$.  In arbitrary dimensions, requiring that all of the zeroth order polynomial Weyl invariants vanish implies that the Weyl type is {\bf III}, {\bf N}, or {\bf  O} (similarly for the Ricci type). SPIs have been used in the study of $VSI$ and $CSI$ spacetimes, where all of the SPIs vanish or are constant, respectively \cite{CSI4a,CSI4b,CSI4c,Higher,CFH}. In \cite{CSI4a} it was proven that a 4D Lorentzian spacetime metric is either \emph{$\mathcal{I}$-non-degenerate}, and hence completely locally characterized by its SPIs, or it is either locally homogeneous or \emph{degenerate Kundt} \cite{Kundt}.

For any tensor of b.w. type {\bf II} (or {\bf D}) the eigenvalues of the corresponding operator need to be of a special form; the resulting invariants for a tensor of type {\bf II} are the same as that for a type {\bf D} tensor. For the ensuing discussion we will assume the tensor is of type {\bf II}. If the Ricci tensor is to be of type {\bf II}, it is of Segre type $\{(1,1)11...1\}$, or simpler. Therefore, the \emph{Ricci operator has at least one eigenvalue of (at least) multiplicity 2. Furthermore, all the eigenvalues are real.} In $D$ dimensions, we may consider the Weyl bivector operator $C_{ab}^{~~cd}$ as the map $${\sf C}:  \Lambda^2 M \to \Lambda^2 M$$
\noindent and examine its eigenvalues. If the Weyl tensor is of type {\bf D}, then the \emph{operator ${\sf C}$ has at least $(D-2)$ eigenvalues of (at least) multiplicity 2}  \cite{BIVECTOR}. In this manner, the algebraic types are connected to the eigenvalue structure and allow for the construction of the necessary discriminants.

In $D$ dimensions, the Ricci and Weyl type {\bf II}/{\bf D} necessary conditions are ($m=D(D-1)/2)$:
\beq \text{Ricci:} \quad && {}{^D}D_D=0, \\
\text{Weyl:}\quad && {}^mD_m= {}^{m}D_{m-1}=...={}^mD_{m-D+2}=0, \eeq
\noindent which are discriminants defined in terms of SPIs\cite{CHDG,CH}. Note that the Ricci syzygy, as a polynomial in terms of the Ricci tensor components, is of order $D(D-1)$, while the highest Weyl syzygy is of order $D(D^2-1)(D-2)/4$. We are interested in 4D, and so the Ricci and Weyl syzygies are of order 12 and 30, respectively. These conditions are necessary conditions, but are not sufficient, since the characteristic equation for different algebraic types may be identical and consequently the SPIs are also identical. For example, a five dimensional (5D) spacetime which has a Weyl tensor with $SO(2)$ isotropy fulfills the type {\bf II} or {\bf D} necessary conditions \cite{CHDG,CH}.

\subsubsection{Ricci Type {\bf  II}/{\bf D} in 4D}

To determine the type {\bf II/D} conditons for the Ricci tensor, we consider a general 2-index tensor, which is assumed to be symmetric and trace-free (${\it S_1}=0$) in 4D. The discriminant ${^4}D_4$ is given by:
\beq
{}^4D_4&=& \frac 1{8}\,{{ S_2}}^{6}-{\frac {5}{4}}\,{{ S_2}}^{4}{ S_4}-{\frac {17}{18}}\,{{ S_3}}^{2}{{ S_2}}^{3}\nonumber \\
&&+4\,{{ S_2}}^{2}{{ S_4}}^{2}+{2}\,{{ S_3}}^{2}{ S_2}\,{ S_4}-\frac 13\,{{ S_3}}^{4}-4\,{{ S_4}}^{3},\eeq
where $S_{ab}$ is the trace-free symmetric Ricci tensor $R_{ab} - \frac{1}{4} R g_{ab}$, and $S_i$ is the trace of the $i^{th}$ power of this tensor. 

This 12th order SPI can be written in a shorter form using:
\beq s_{2} &=&  -\frac12 S^a_{~b} S^b_{~a} = -\frac12 S_2, \nonumber \\
s_{3} &=& -\frac13 S^a_{~b} s^b_{~c} s^c_{~a} = -\frac13 S_3, \nonumber \\
s_{4} &=& \frac18 (S^a_{~b} S^b_{~a})^2 - \frac14 S^a_{~b} S^b_{~c} S^c_{~d} S^d_{~a} = \frac18 S_2^2 - \frac14 S_4. \label{riccidef}  \eeq
\noindent The condition ${^4}D_4=0$ for the 4D symmetric trace-free Ricci  tensor to necessarily be of Ricci type {\bf  II}/{\bf D} in 4D is then \cite{CHDG,CH}:

\begin{equation}
\mathcal{D} \equiv {^4}D_4 = -s_3^2(4 s_2^3 - 144 s_2 s_4 + 27s_3^2) + s_4(16 s_2^4 - 128 s_4 s_2^2  + 256 s_4^2) = 0.
\label{rictypeii2}
\end{equation}

\subsubsection{Weyl Type {\bf  II}/{\bf D} in 4D}

Necessary and sufficient real conditions for the Weyl tensor to be of type {\bf  II}/{\bf D} are given by the vanishing of the two SPIs \cite{CHDG,CH}:
\begin{equation}
\mathcal{W}_1 \equiv -11 W_{2}^3 + 33 W_2 W_4 - 18 W_6 , \label{weyl1}
\end{equation}
\begin{equation}
\mathcal{W}_2 \equiv (W_{2}^2 - 2 W_4)(W_{2}^2 + W_4)^2 + 18 W_3^2(6 W_6 - 2 W_{3}^2 -
9 W_{2} W_4 + 3 W_{2}^3) , \label{weyl2}
\end{equation}
where
\begin{eqnarray}
W_2 &=& \frac{1}{8}C_{abcd}C^{abcd},\\ \nonumber
W_3  &=& \frac{1}{16}C_{abcd}C^{cd}_{~~ pq}C^{pqab},\\ \nonumber
W_4 &=& \frac{1}{32}C_{abcd}C^{cd}_{~~pq}C^{pq}_{~~r s}C^{rsab}, \\ \nonumber
W_6 &=& \frac{1}{128}C_{abcd}C^{cd}_{~~pq}C^{pq}_{~~r s}C^{rs}_{~~tu}C^{tu}_{~~vw}C^{vwab}
.
\label{weyldef}
\end{eqnarray}
These two conditions  are equivalent to the real and imaginary parts of the complex syzygy $I^3-27J^2=0$ in terms of the complex Weyl tensor in the \gls{np} formalism \cite{kramer}. Computationally it might be useful to eliminate $W_6$ from  \eqref{weyl1} and \eqref{weyl2} in order to obtain a single necessary condition. We can apply this result to any Weyl candidate (a 4-index tensor with the same symmetries as Weyl tensor) 
\[ C^{c}_{~acb}=0, \quad C_{a(bcd)}=0.\] 

\noindent Alternatively, we can use $I_1 = C_{abcd}C^{abcd}$  to construct the the trace-free operator: $$T_a^{~e} = C_{abcd}C^{ebcd} - \frac{I_1}{4} \delta_a^{~e}$$ with the invariants:

\begin{eqnarray}
\tilde{W}_4&\equiv&
C^{abcd}C_{ebcd}
C^{eb_1c_1d_1}C_{ab_1c_1d_1}
\\ \nonumber
\tilde{W}_6&\equiv&
C^{abcd}C_{a_1bcd}
C^{a_1 b_1c_1d_1}C_{a_2b_1c_1d_1}
C^{a_2b_2c_2d_2}C_{ab_2c_2d_2}
\\ \nonumber
\tilde{W}_8&\equiv&
C^{abcd}C_{a_1bcd}
C^{a_1 b_1c_1d_1}C_{a_2b_1c_1d_1}
C^{a_2b_2c_2d_2}C_{a_3b_2c_2d_2}
C^{a_3b_3c_3d_3}C_{ab_3c_3d_3}.\
 \end{eqnarray}
The discriminant analysis also gives the coefficients of the characteristic equation as: 
\begin{eqnarray}
w_{2} &=& -\frac12 \tilde{W}_4 + \frac18 I_1^2 ,\\ \nonumber
w_{3} &=& -\frac16 \tilde{W}_6 + \frac14 I_1 \tilde{W}_4-\frac{1}{24}I_1^3 ,\\ \nonumber
w_{4} &=& \frac18 \tilde{W}_4^2 + \frac{1}{32}I_1^2 \tilde{W}_4 + \frac{5}{256}I_1^4-\frac14 \tilde{W}_8+\frac{1}{4}I_1 \tilde{W}_6.\label{NWeylRicci}
\end{eqnarray}
\noindent Therefore, the necessary condition for the operator  $T_a^{~e}$ to be type {\bf II/D} is similar to equation \eqref{rictypeii2}: 
\begin{equation}
{^4}D_4 \equiv -{^2}w_3^2(4 {^2}w_2^3 - 144 {^2}w_2 {^2}w_4 + 27{^2}w_3^2) + {^2}w_4(16 {^2}w_2 - 128 {^2}w_4 {^2}w_2^2  + 256 {^2}w_4^2) = 0
\label{rictypeii21}
\end{equation}

The discriminant analysis provides syzygies expressed in terms of the SPIs by treating the Weyl tensor as a trace-free operator acting on the six-dimensional vector space of bivectors. The type {\bf II/D} condition is ${^6}D_5 = {^6}D_6 = 0$; however, these conditions are very large (see Appendix D). 

As the necessary and sufficient conditions  \eqref{weyl1} and \eqref{weyl2} are of lower order than the corresponding discriminant SPI for the Weyl tensor, it is possible that this discriminant SPI can be factored.

\subsubsection{The Riemann tensor and other tensors in 4D}

If a spacetime is of Riemann type {\bf II}, the Weyl type {\bf II} and Ricci type {\bf II} necessary conditions hold, and there are additional alignment conditions (e.g., $C_{abcd}R^{bd}, C_{abcd}R^{be}R_{e}^{~d}$ are of type {\bf II}). Applying the necessary conditions to the full Riemann tensor (to be of type {\bf II}/{\bf D}), implies that both the Weyl and Ricci tensor are of type {\bf II}/{\bf D} and aligned. We note that we will also obtain syzygies for mixed tensors of the form:  \[
L_{ab}=C_{acbd}R^{cd}, \quad
M_{ab}=C_{acbd}R^{c}_{~e}R^{ed}, \quad
N_{ab}=C_{cafg}C^{fg}_{~~db}R^{cd},\]
\noindent to be of type {\bf II}/{\bf D}. The type {\bf II}/{\bf D} condition implies that we have the syzygy ${}^4D_4=0$ for \emph{all of} the trace-free tensors arising from ${\sf L}=(L^a_{~b}) $, ${\sf M}=(M^a_{~b}) $, and ${\sf N}=(N^a_{~b})$.

\subsection{Examples}
To illustrate the applicability of the discriminant SPIs we present four examples. 

\paragraph{An arbitrary 5D Spacetime:} For the trace-free Ricci tensor, we note that type {\bf D} has to be of Segre type $\{(1,1)111\}$ or simpler, implying that 2 eigenvalues are equal, while the remaining eigenvalue has to be real. The vanishing of ${^5}D_5$ is a necessary condition for the trace-free tensor $S$ to be of type {\bf II} (or {\bf D}) in 5D. Thus, the 20th order discriminant $\mathcal{D} \equiv {}^5D_5$ is the related SPI. 

For the ten-dimensional Weyl tensor, the type {\bf II}  bivector operator ${\sf C}$ has 3 eigenvalues of minimum multiplicity 2, and the necessary condition for the Weyl tensor to be of type {\bf II} (or {\bf D}) in 5D is the vanishing of the SPIs ${}_W^{10}D_{10}= {}_W^{10}D_{9}= {}_W^{10}D_{8}=0.$ These are discriminants of order 90, 72 and 56, respectively. Additional necessary conditions can also be found using combinations of the Weyl tensor; for example, the operator $T^{a}_{~b}=C^{acde}C_{bcde}$. This gives again ${}_T^5{D}_5=0$ (${}_T^5{D}_4\geq 0$), which is a 20th order syzygy (in the square of the Weyl tensor).

\paragraph{5D Schwarzschild spacetime:}  For the Weyl operator ${\sf C}$ we get \[ {}^{10}D_{10}=
{}^{10}D_{9}=\dots={}^{10}D_4=0, \quad {}^{10}D_3>0, \quad \quad {}^{10}D_2>0.\] This implies that
the Weyl operator has 3 distinct real eigenvalues 
and this spacetime is of type {\bf D} \cite{BIVECTOR}.

\paragraph{5D space with complex hyperbolic sections.}
Let us consider  \cite{BIVECTOR}:
\beq
\d s^2= -\d t^2+a(t)^2\Big{[}e^{-2w}\left(\d x+\tfrac 12(y\d z-z\d y)\right)^2\qquad &&\nonumber \\
 +e^{-w}\left(\d y^2+\d z^2\right)+\d w^2\Big{]}.&&
\label{HC2}\eeq
For the Weyl operator ${\sf C}$ we get
\[ {}^{10}D_{10}= {}^{10}D_{9}=\dots={}^{10}D_4=0, \quad {}^{10}D_3>0, \quad \quad {}^{10}D_2>0.\]
This implies that  the Weyl operator has 3 distinct real eigenvalues. However, the following coefficients of the characteristic equation vanish:
$a_{10}=a_9=a_8=...=a_4=0,$ signaling that there is a zero-eigenvalue of multiplicity 7. Thus, this spacetime is not of type {\bf II}/{\bf D}. In fact, it is $\mathcal{I}$-non-degenerate which can be shown by computing the operator $T^{a}_{~b}=C^{acde}C_{bcde}$ which is of ``Segre'' type $\{1,(1111)\}$.

\paragraph{The 5D rotating black ring.}
The 5D rotating black ring \cite{RBRa,RBRb} is generally of type ${\bf I_i}$, but in certain regions and for particular values of the parameters $\lambda$ and $\mu$ it can also be of type ${\bf II}$ or ${\bf D}$ (the case $\lambda=1$ corresponds to the type ${\bf D}$ Myers-Perry metric) \cite{AM2016, CKAHD}. The trace-free and symmetric part of the operator $T^{a}_{~b}=C^{acde}C_{bcde}$ gives a discriminant which leads to a necessary condition on the algebraic type of the Weyl tensor in the region of Lorentzian signature for the fixed `target' point locally defined by $x=0$, $y=2$ \cite{CHDG,CH}:
\beq 
{}_T^5D_5=\frac{\lambda^{12}(\lambda-\mu)^{12}(2\mu-1)^2(1-\lambda)^4(1+\lambda)^4}{(1-2\lambda)^{113}}F(\mu,\lambda).    
\eeq
where $F(\mu,\lambda)$ is a  polynomial which is generally not zero. For this particular choice of target point, the horizon is located there if $\mu=1/2$, and we see that ${}_T^5D_5=0$ (with ${}_T^5D_4>0$), which signals that the spacetime is of Weyl type {\bf II} on the horizon. Computationally, it is simpler to work with ${\bf T}$ and the 40th order SPI ${}_T^5D_5$ than the related SPI for the Weyl tensor as an operator.

\subsection{Differential invariants}

To determine whether the covariant derivatives of the Ricci tensor $R_{ab;cd...}$,  are also of type {\bf II} or  {\bf D}, we could study the eigenvalue structure of the operators constructed from the tensor $R_{ab;cd...}$ and apply the type {\bf II}/{\bf D} necessary conditions. For example, considering the trace-free parts of the tensors $T_{ab} = R_{ac:d}R_b^{~c;d}, R_{;ab}, \Box R_{ab}, \dots$, we obtain necessary conditions of the form of equation (\ref{rictypeii2}) but with the $s_i \equiv \Tr({\sf T}^i), i=2,3,4$. This can be repeated for the Weyl tensor and in higher dimensions \cite{invhigher}.

For example, we can construct the second order symmetric and trace-free operator, $T^{e}_{~f}$ , for the covariant derivative of the Weyl tensor, $C_{abcd;e}$ defined by:
\[ T^{e}_{~f} \equiv C^{abcd;e}C_{abcd;f}
- \frac{1}{4} \delta^{e}_{~f} C^{abcd;e'}C_{abcd;e'}\]

\noindent The resulting differential invariants may be simplified using the FKWC bases for the Riemann SPIs to eliminate Riemannian SPIs that can be expressed in terms of the bases. Additionally one can use geometric identities and conserved tensor quantities to induce further simplification (see Appendix C).

\paragraph{Example.} We consider the operator $T^{\mu}_{~\nu}$ defined above for the 4D type {\bf D} Kerr metric. The type {\bf D}/{\bf II} necessary condition is then the vanishing of:
\beq \label{kerrdiff}
{}_T^4D_4=\frac{m^{24}a^4 G^2_-G^2_+(r^2+a^2-2mr)^2(r^2+a^2\cos^2\theta-2mr)^2 \sin^4\theta}{(r^2+a^2\cos^2\theta)^{92} }f_1^2f_2, \nonumber 
\eeq
where  $G_{\pm} \equiv r^4\pm 4ar^3\cos\theta-6a^2r^2\cos^2\theta\mp 4a^3r\cos^3\theta+a^4\cos^4\theta$, and $f_1=f_1(a,m,r,\cos\theta)$ and $f_2=f_2(a,m,r,\cos\theta)$ are polynomials. With the exception of the horizon, the ergosphere, and some other special points, this invariant will be non-zero and so $C_{abcd;e}$ cannot be of type {\bf D}/{\bf II} (generically) outside the horizon. This is confirmed explicitly using the Cartan algorithm to construct the appropriate frame in Appendix A. 

\subsubsection{Necessary 4D conditions for the covariant derivative of the Weyl tensor to be of type {\bf  II}/{\bf D}}

To determine the algebraic type of the covariant derivative of the Weyl tensor, $C_{abcd;\mu}$, we consider a second order symmetric and trace-free operator $T^{a}_{~b}$ to  obtain the necessary type {\bf  II}/{\bf D} condition (\ref{rictypeii2}) of the form $\mathcal{D} \equiv {^4}D_4=0$.

Let us consider two possible combinations, and from both we can derive necessary conditions. The first is the trace-free symmetric tensor ${^1}S^{a}_{~b}={^1}T^{a}_{~b}$ defined above:
\begin{equation}
{^1}T^{a}_{~b} \equiv C^{efcd;a}C_{efcd;b}
- \frac{1}{4} \delta^{a}_{~b} {^1}I_2, \label{T1a}
\end{equation}
where
\begin{equation}
{^1}I_2\equiv
C^{abcd;e}C_{abcd;e}, \label{T1b}
\end{equation}
and we define
\small
\begin{eqnarray}
{^1}I_4&\equiv&
C^{abcd;e}C_{abcd;e_1}
C^{a_1b_1c_1d_1;e_1}C_{a_1b_1c_1d_1;e}
\\ \nonumber
{^1}I_6&\equiv&
C^{abcd;e}C_{abcd;e_1}
C^{a_1b_1c_1d_1;e_1}C_{a_1b_1c_1d_1;e_2}
C^{a_2b_2c_2d_2;e_2}C_{a_2b_2c_2d_2;e}
\\ \nonumber
{^1}I_8&\equiv&
C^{abcd;e}C_{abcd;e_1}
C^{a_1b_1c_1d_1;e_1}C_{a_1b_1c_1d_1;e_2}
C^{a_2b_2c_2d_2;e_2}C_{a_2b_2c_2d_2;e_3}
C^{a_3b_3c_3d_3;e_3}C_{a_3b_3c_3d_3;e}.\
 \end{eqnarray}
 \normalsize
\noindent Computing the coefficients of the characteristic equation in terms of these:

\begin{eqnarray}
{^1}s_{2} &=& -\frac12 {^1}I_4 + \frac18 {^1}I_2^2 ,\\ \nonumber
{^1}s_{3} &=& -\frac13 {^1}I_6 + \frac14 {^1}I_2 {^1}I_4-\frac{1}{24}{^1}I_2^3 ,\\ \nonumber
{^1}s_{4} &=& \frac18 {^1}I_4^2 - \frac{5}{32}{^1}I_2^2 {^1}I_4 + \frac{5}{256}{^1}I_2^4-\frac14 {^1}I_8+\frac{1}{4}{^1}I_2 {^1}I_6         .\label{riccidef12}
\end{eqnarray}

\noindent The necessary condition for this operator to be of type {\bf II/D} is equivalent in form to the condition given in equation \eqref{rictypeii2}:

\begin{equation}
{^1} D = {^1}X \equiv -{^1}s_3^2(4 {^1}s_2^3 - 144 {^1}s_2 {^1}s_4 + 27{^1}s_3^2) + {^1}s_4(16 {^1}s_2^4 - 128 {^1}s_4 {^1}s_2^2  + 256 {^1}s_4^2) = 0
\label{rictypeii212}
\end{equation}

\noindent Expanding this expression, we obtain explicitly:
\beq && {^1}X =  \frac83 \,{  {^1}I_2}\,{  {^1}I_6}\,{{  {^1}I_4}}^{4}-{\frac {25}{2}}\,{{  {^1}I_6}}^{2
}{  {^1}I_8}\,{{  {^1}I_2}}^{2}-4\,{{  {^1}I_8}}^{3}-\frac13\,{{  {^1}I_6}}^{4}+1/8\,{
{  {^1}I_4}}^{6}+{\frac {1}{576}}\,{{  {^1}I_2}}^{12}\nonumber \\
&&-11\,{  {^1}I_8}\,{  {^1}I_2} 
\,{  {^1}I_6}\,{{  {^1}I_4}}^{2} -{\frac {15}{4}}\,{  {^1}I_8}\,{{  {^1}I_2}}^{3}{
  {^1}I_6}\,{  {^1}I_4}-{\frac {27}{16}}\,{  {^1}I_8}\,{{  {^1}I_4}}^{2}{{  {^1}I_2}}
^{4}+12\,{{  {^1}I_8}}^{2}{  {^1}I_2}\,{  {^1}I_6} \nonumber \\
&&-{\frac {7}{32}}\,{  {^1}I_8}\,{
  {^1}I_4}\,{{  {^1}I_2}}^{6} -\frac74 \,{  {^1}I_8}\,{{  {^1}I_2}}^{5}{  {^1}I_6} +{\frac {
73}{48}}\,{{  {^1}I_2}}^{5}{  {^1}I_6}\,{{  {^1}I_4}}^{2}+\frac{3}{16} \,{{  {^1}I_2}}^{7}{
  {^1}I_6}\,{  {^1}I_4}+{\frac {73}{72}}\,{{  {^1}I_2}}^{3}{  {^1}I_6}\,{{  {^1}I_4}}
^{3}\nonumber \\
&&+{\frac {35}{6}}\,{{  {^1}I_6}}^{2}{{  {^1}I_2}}^{2}{{  {^1}I_4}}^{2}+{
\frac {37}{24}}\,{{  {^1}I_6}}^{2}{  {^1}I_4}\,{{  {^1}I_2}}^{4}+2\,{{  {^1}I_6}}^
{2}{  {^1}I_8}\,{  {^1}I_4}+\frac52 \,{{  {^1}I_8}}^{2}{{  {^1}I_2}}^{2}{  {^1}I_4}-{{  
{^1}I_6}}^{3}{  {^1}I_2}\,{  {^1}I_4} \nonumber \\
&& -\frac58 \,{  {^1}I_8}\,{{  {^1}I_2}}^{2}{{  {^1}I_4}}^{3}
+\frac{13}{3} \,{{  {^1}I_6}}^{3}{{  {^1}I_2}}^{3} -{\frac {17}{18}}\,{{  {^1}I_6}}^{2}{{
  {^1}I_4}}^{3} +{\frac {17}{18}}\,{{  {^1}I_6}}^{2}{{  {^1}I_2}}^{6}+4\,{{  
{^1}I_8}}^{2}{{  {^1}I_4}}^{2}+{\frac {13}{16}}\,{{  {^1}I_8}}^{2}{{  {^1}I_2}}^{4} \nonumber \\
&&-
\frac14 \,{{  {^1}I_2}}^{2}{{  {^1}I_4}}^{5}+{\frac {61}{192}}\,{{  {^1}I_4}}^{4}{{
  {^1}I_2}}^{4}+{\frac {95}{576}}\,{{  {^1}I_4}}^{3}{{  {^1}I_2}}^{6} +{\frac {
55}{768}}\,{{  {^1}I_4}}^{2}{{  {^1}I_2}}^{8}+{\frac {1}{192}}\,{  {^1}I_4}\,{{
  {^1}I_2}}^{10}\nonumber \\
&&-\frac54 \,{  {^1}I_8}\,{{  {^1}I_4}}^{4}-\frac{1}{16} \,{  {^1}I_8}\,{{  {^1}I_2}}
^{8}+{\frac {5}{72}}\,{{  {^1}I_2}}^{9}{  {^1}I_6}.
 \label{BigX1}  \eeq

Alternatively, we can construct the trace-free symmetric tensor
${^2}S^{a}_{~b} =
{^2}T^{a}_{~b}$ where
\begin{equation}
{^2}T^{a}_{~b} \equiv C^{afcd;e}C_{bfcd;e}
- \frac{1}{4} \delta^{a}_{~b} {^2}I_2,
\end{equation}
 where
\begin{equation}
{^2} I_1 \equiv {^1}I_2\equiv
C^{abcd;e}C_{abcd;e},
\end{equation}
and
\begin{eqnarray}
{^2}I_4&\equiv&
C^{abcd;e}C_{a_1bcd;e}
C^{a_1b_1c_1d_1;e_1}C_{ab_1c_1d_1;e_1}
\\ \nonumber
{^2}I_6&\equiv&
C^{abcd;e}C_{a_1bcd;e}
C^{a_1b_1c_1d_1;e_1}C_{a_2b_1c_1d_1;e_1}
C^{a_2b_2c_2d_2;e_2}C_{ab_2c_2d_2;e_2}
\\ \nonumber
{^2}I_8&\equiv&
C^{abcd;e}C_{a_1bcd;e}
C^{a_1b_1c_1d_1;e_1}C_{a_2b_1c_1d_1;e_1}
C^{a_2b_2c_2d_2;e_2}C_{a_3b_2c_2d_2;e_2}
C^{a_3b_3c_3d_3;e_3}C_{ab_3c_3d_3;e_3}.\
 \end{eqnarray}
\noindent Computing the coefficients of the characteristic equation in terms of these:
\begin{eqnarray}
{^2}s_{2} &=& -\frac12 {^2}I_4 + \frac18 {^2}I_2^2 ,\\ \nonumber
{^2}s_{3} &=& -\frac13 {^2}I_6 + \frac14 {^2}I_2 {^2}I_4-\frac{1}{24}{^2}I_2^3 ,\\ \nonumber
{^2}s_{4} &=& \frac18 {^2}I_4^2 - \frac{5}{32} {^2}I_4 ({^2}I_2^2) + \frac{5}{256}{^2}I_2^4-\frac14 {^2}I_8+\frac{1}{4}{^2}I_2 {^2}I_6         .\label{riccidef1}
\end{eqnarray}

\noindent The necessary conditions will take the form of \eqref{rictypeii2} or \eqref{rictypeii212}; explicitly this will be given by \eqref{BigX1} where the index ${^1}$ is replaced by ${^2}$. 

\subsubsection{Necessary 4D conditions for the second covariant derivative of the Weyl tensor to be of type {\bf  II}/{\bf D}}

The second covariant derivative of the Weyl tensor, $C_{abcd;ef}$, can be studied using a second order symmetric and trace-free operator $T^{a}_{~b}$ to  obtain the necessary type {\bf  II}/{\bf D} conditions of the form $\mathcal{D} \equiv {^4}D_4=0$. The simplest operator to consider is the following: 
\beq  \tilde{T}^{a}_{~b} \equiv  C^{a~~~;cd}_{~cbd}, \nonumber \eeq
\noindent and defining $\tilde{T}_i$ as the trace of the $i^{th}$ power of this tensor, this expression can be written in a shorter form using:
\beq t_{2} &=&  -\frac12 \tilde{T}^a_{~b} \tilde{T}^b_{~a} = -\frac12 \tilde{T}_2, \nonumber \\
t_{3} &=& -\frac13 \tilde{T}^a_{~b} \tilde{T}^b_{~c} \tilde{T}^c_{~a} = -\frac13 \tilde{T}_3, \nonumber \\
t_{4} &=& \frac18 \tilde{T}^a_{~b} \tilde{T}^b_{~a} - \frac14 \tilde{T}^a_{~b} \tilde{T}^b_{~c} \tilde{T}^c_{~d} \tilde{T}^d_{~a} = \frac18 \tilde{T}_2 - \frac14 \tilde{T}_4 \label{riccideft}  \eeq
\noindent The necessary type {\bf II/D} condition, which is of the same form as  \eqref{rictypeii2} and \eqref{rictypeii212}, is then:
\begin{equation}
{^4}D_4 = \tilde{X} \equiv -t_3^2(4 t_2^3 - 144 t_2 t_4 + 27t_3^2) + t_4(16 t_2 - 128 t_4 t_2^2  + 256 t_4^2) = 0
\end{equation}

\section{Scalar invariants in stationary spacetimes and type D conditions}

For a 4D Lorentzian manifold ${\mathcal M}$, if there exists any other geometric (invariantly defined) structure, such as for example an invariantly defined timelike vector field, we can construct other invariants, potentially of lower order. As an illustration, perfect fluid solutions have a timelike Ricci eigenfunction and a stationary spacetime has a timelike Killing vector.

A vector field ${\bf \zeta}$ on  ${\mathcal M}$ is called a {\em Killing vector} if satisfies $\pounds_{{\bf \zeta}} g_{ab}=0$, where $\pounds_{{\bf \zeta}}$ denotes the Lie derivative with respect to ${\bf \zeta}$. This condition is equivalent to $\nabla_{(a}\zeta_{b)} = 0$, which implies $F_{ab}\equiv \nabla_{a}\zeta_{b}$ is a closed (Killing) 2-form, and using the Ricci identity its covariant derivative becomes:
\begin{equation} \nabla_b F_{ac}=-R_{acbd}\zeta^d.  \label{eq:cdfz}
\end{equation}
In general, neither the norm of the Killing vector $\lambda\equiv\zeta_a\zeta^a$   nor the the spacetime dimension and its matter content need be restricted. However, in the particular case of Ricci-flat 4D spacetimes there are additional properties \cite{GPS}.

Employing {\em complex quantities} one can simplify the computations considerably.  The {\em self-dual} Weyl tensor, and the self-dual Killing 2-form, ${\mathcal F}_{ab}\equiv F_{ab}+\mbox{i}F^*_{ab}$, satisfying the algebraic properties displayed in \cite{GPS}, can be defined.
Then the differential conditions satisfied by $F_{ab}$ and $F^*_{ab}$ are: 
\begin{equation}
\nabla_{c}{\mathcal F}_{ab}=-{\mathcal C}_{abcd}\zeta^d\;,\quad \nabla_{[c}{\mathcal
F}_{ab]}=0.  
\label{eq:maxwell} \end{equation} 
Employing the Ernst 1-form $$\sigma_a\equiv{2\mathcal F}_{ab}\zeta^b, \hspace{1cm} (\sigma_a\zeta^a  =0)$$ then in a Ricci-flat spacetime (or more generally when $\zeta_{[c}R_{a]b}\zeta^b=0$ is satisfied)
\begin{equation}
\nabla_{[a}\sigma_{b]}=0\;, 
\end{equation}  
\noindent So that $\sigma_a$ is exact and there exists a local potential $\sigma$ for $\sigma_a$, known as the Ernst potential, which is a complex quantity constructed from the Killing norm and twist as $\sigma\equiv\lambda+2\;\mbox{i}\;\omega$ $\Rightarrow \nabla_a\sigma=\sigma_a$. The scalar $\sigma$ is defined up to the addition of an additive complex constant $\alpha$, giving the gauge freedom $\sigma \longrightarrow \sigma' =\sigma +\alpha$.
The tensor $\mathcal{F}_{ab}$ and scalar $\sigma_a$ are related through the tensor identity 
\begin{equation}
-\lambda\mathcal{F}_{ab}=\zeta_{[a}\sigma_{b]}+\frac{\rm
i}{2}\eta_{abcd}\zeta^c\sigma^d.  \label{eq:decomposeF} 
\end{equation}

At those points of $\mathcal{M}$ where the Ernst 1-form is exact and its potential $\sigma'$ does not vanish, the complex rank-4 {\em{Mars-Simon tensor} } can be defined \cite{MARS-KERR}
\begin{equation} 
\mathcal{S}_{abcd}\equiv \mathcal{C}_{abcd}+\frac{1}{\sigma'}\mathcal{Q}_{abcd}\;, 
\label{def:ms-tensor} \end{equation} 
where
\begin{equation}
\mathcal{Q}_{abcd}\equiv
6\left(\mathcal{F}_{ab}\mathcal{F}_{cd}-\frac{\mathcal{I}_{abcd}}{3}\mathcal{F}\cdot
\mathcal{F}\right), ~~~
{\mathcal{I}}^{ab}{}_{cd}\equiv\frac{1}{4}(\mbox{i}\;\eta^{ab}{}_{cd}+
\delta^a_c\delta^b_d-\delta^a_d\delta^b_c)\;.
\label{eq:kerr-cond-1} 
\end{equation} 
The Mars-Simon tensor is a Weyl candidate, implying that it has the same algebraic properties as the Weyl tensor, and it is self-dual. Its covariant divergence in Ricci-flat spacetimes is a linear expression in the Mars-Simon tensor itself. Note that $\mathcal{S}_{abcd}$ is affected by the residual gauge freedom. The tensor  $\mathcal{Q}_{abcd}$ is also a Weyl candidate. 

It is of interest to determine when a stationary spacetime is "close" or diffeomorphic to the Kerr metric. If a Lorentzian manifold is equivalent to the Kerr solution, then there exists a scaling of the Ernst potential such that  ${\cal S}_{abcd} = 0$. We can also define a tensor $\mathcal{S}_{abc}$ called the {\em spacetime Simon tensor} \cite{MARS-KERR} and for any spacetime with a Killing vector, this tensor can be defined independently of the existence of a potential $\sigma$ for $\sigma_a$. $\mathcal{S}_{abc}$ has the algebraic properties of a Lanczos potential and is totally {\em orthogonal} to the Killing vector $\zeta$ from which it is constructed. The vanishing of $S_{abc}$ implies $C_{abcd}$ and $\mathcal{Q}_{abcd}$ are proportional.

In addition, given a real Weyl candidate $W_{abcd}$, its Bel-Robinson tensor $T_{abcd}\{W\}$ is defined    in terms of the self-dual tensor $\mathcal{W}_{abcd}=W_{abcd}+{\rm i} W^*_{abcd}$ corresponding to the Weyl candidate and its dual \cite{SUPERENERGY}: 
\begin{equation} 
T_{abcd}\{{\mathcal W}\}\equiv
\mathcal{W}_{a\phantom{p}c}^{\phantom{a}p\phantom{d}m} \mathcal{\overline{W}}_{bpdm} =T_{abcd}\{W\} .  
\label{eq:nueva} \end{equation} 
This tensor is the basic {\em superenergy tensor} of the Weyl candidate $W_{abcd}$ in 4D and satisfies additional properties \cite{SUPERENERGY}; applying this to the Weyl tensor yields the Bel-Robinson tensor. The vanishing of the self-dual Weyl candidate ${\mathcal W}_{abcd}$ can be written as a scalar condition.  For any timelike vector ${\bf u}$ we can compute the {\em superenergy density} of $T_{abcd}\{W\}$ (which explicitly depends on the timelike vector ${\bf u}$):
\begin{equation}
U_{{\bf u}}({\mathcal W})\equiv T_{abmn}\{W\}u^{a}u^{b}u^{m}u^{n}
\label{eq:general-superenergy} \end{equation} 
If $W_{abcd}\neq 0$, $U_{{\bf u}}({\mathcal W})$ is a positive quantity, which for each timelike vector acts as a measure of the ``proximity'' to the geometric conditions determined by the tensor condition ${\mathcal W}_{abcd}=0$. 

In those cases where there is a timelike vector  ${\bf \zeta}$ defined invariantly a superenergy density that is invariant can be selected.  For example, this can be implemented for the particular case in which ${\mathcal W}_{abcd}=\mathcal{S}_{abcd}$. The scalar $U_{{\bf \zeta}}(\mathcal{S})$ is positive and it will vanish if and only if $\mathcal{S}_{abcd}=0$, and we can consequently take the quantity $U_{{\bf \zeta}}(\mathcal{S})$ as a local invariant measure of the deviation of the spacetime to the geometric conditions entailed by $\mathcal{S}_{abcd}=0$. This scalar is not a SPI; however, its construction is similar to the invariants discussed in \cite{PageInv} and hence are likely Cartan invariants. Additionally, in \cite{AbdelqaderLake2015} a SPI is given that measures the "Kerrness" for a given Lorentzian manifold; this suggests the existence of a collection of SPIs which could characterize the Kerr metric.

\subsection{Stationary Black Hole Horizon Detection}

The event horizon for a stationary black hole is a null hypersurface that is orthogonal to a Killing vector field that is null on this surface, and hence lies within the hypersurface and is its null generator. For several stationary type {\bf D} solutions, it is known that the SPI $R_{abcd;e}R^{abcd;e}$ detects the event horizon \cite{PRM1993}. However, in the case of the Kerr horizon, it was noted by Skea that this invariant detects the stationary limit, and not the outer horizon itself \cite{SkeaPhd}. A collection of invariants were  examined in \cite{AbdelqaderLake2015}, from which the parameters of the Kerr spacetime (including the detection of the horizons) were determined. These invariants are constructed from SPIs built from the Weyl tensor: 
\small
\beq &Q_1 = \frac{1}{3\sqrt{3}} \frac{(I_1^2 - I_2^2)(I_5-I_6)+4I_1 I_2 I_7}{(I_1^2 + I_2^2)^\frac94},& \nonumber \\~~ &Q_2 = \frac{1}{27}\frac{I_5I_6 - I_7^2}{(I_1^2 + I_2^2)^\frac52},~~Q_3 = \frac{1}{6\sqrt{3}} \frac{I_5 + I_6}{(I_1^2 + I_2^2)^\frac54}, & \label{Qs} \eeq
\normalsize
\noindent in terms of  from the following SPIs:  

%
\beq I_1 &=& C_{abcd} C^{abcd} \label{I1} \\
I_2 &=& C^{\star}_{abcd} C^{abcd} \label{I2} \\
I_3 &=& \nabla_e C_{abcd} \nabla^e C^{abcd} \label{I3} \\
I_4 &=& \nabla_e C_{abcd} \nabla^e C^{\star abcd } \label{I4} \\ 
I_5 &=& k_e k^e,~~ k_e = -\nabla_e I_1 \label{I5} \\
I_6 &=& l_e l^e,~~ l_e - \nabla_e I_2 \label{I6} \\
I_7 &=& k_e l^e, \label{I7} \eeq

\noindent where $C^\star_{abcd}$ is the dual of the Weyl tensor. 


The parameters are found by using the dimensionless invariants $Q_1$, $Q_2$ and $Q_3$ to locate the horizon and ergosurface in an algebraic manner. The local method does not require knowledge the location of the black hole or its event horizon.  By knowing the forms of the SPIs $I_1,...,I_7$, the mass and angular momentum can be expressed as functions in terms of these invariants.  In addition, a new syzygy for the Kerr spacetime was presented, which allows for the definition of an invariant measure of the "Kerrness" of a spacetime locally: 
\beq I_6 - I_5 + \frac{12}{5} (I_1 I_3 - I_2 I_4) = 0. \label{syz1} \eeq 
As an extension of \cite{AbdelqaderLake2015}, the syzygy \eqref{syz1} was shown to arise from the real-part of a complex syzygy and that by combining the real and imaginary part of this complex syzygy it is possible to write $Q_2$ as the norm of the wedge product of the gradients of $I_1$ and $I_2$, the Kretschmann and Chern-Pontryagin invariants \cite{PageShoom2015}. From this result the authors introduced a general approach to determine the location of the event horizon and ergosurface for any stationary horizon of a black hole. 

This result relies on the fact that the squared norm of the wedge product of $n$ gradients of functionally
independent local smooth curvature invariants will always vanish on the horizon of any stationary black hole, where
$n$ is the local cohomogeneity of the metric which is the codimension of the maximal dimensional orbits of the
isometry group of the local metric. This leads to the theorem \cite{PageShoom2015}:

\begin{thm} \label{PSthrm}
For a spacetime of local cohomogeneity $n$ that contains a stationary horizon, and which has $n$ SPIs $S^{(i)}$ whose gradients are well-defined there, the $n$-form wedge product 
\beq {\bf{W}} = dS^{(1)} \wedge ... \wedge dS^{(n)} \nonumber \eeq
\noindent has zero squared norm on the horizon: 

\beq ||{\bf{W}}||^2 = \frac{1}{n!} \delta^{\alpha_1,...,\alpha_n}_{\beta_1,...,\beta_n} g^{\beta_1 \gamma_1} ... g^{\beta_n \gamma_n} \times S^{(1)}_{;\alpha_1}...S^{(n)}_{;\alpha_n} S^{(1)}_{;\gamma_1}...S^{(n)}_{;\gamma_n} = 0. \nonumber \eeq
\noindent The permutation tensor $\delta^{\alpha_1,...,\alpha_n}_{\beta_1,...,\beta_n}$, is $+1$ or $-1$ if $\alpha_1,...,\alpha_n$ is an even or odd permutation of $\beta_1,...,\beta_n$ respectively, and is zero otherwise. 
\end{thm}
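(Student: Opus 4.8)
The plan is to show that, quite generally, the scalar $||\mathbf{W}||^{2}$ is controlled by the determinant of the spacetime metric restricted to the distribution orthogonal to the orbits of the isometry group, and then to observe that this restricted metric necessarily degenerates on a Killing (i.e.\ stationary) horizon. Fix a point $p$ of the horizon, write $O_{p}\subseteq T_{p}\mathcal{M}$ for the tangent space to the isometry orbit through $p$ (the span of all Killing vectors at $p$), and $O_{p}^{\perp}$ for its $g$-orthogonal complement; by the local cohomogeneity hypothesis $\dim O_{p}=D-n$ and $\dim O_{p}^{\perp}=n$ at a generic $p$.

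First I would record that each SPI $S^{(i)}$, being built from the metric and curvature by contractions, is invariant under every isometry and hence constant along orbits, so $\xi^{a}S^{(i)}_{;a}=0$ for every Killing vector $\xi$; equivalently $\nabla^{a}S^{(i)}\in O_{p}^{\perp}$ for all $i$. Since $\mathbf{W}=dS^{(1)}\wedge\cdots\wedge dS^{(n)}$ is a decomposable $n$-form, its squared norm is the Gram determinant $||\mathbf{W}||^{2}=\det\!\big(g^{ab}S^{(i)}_{;a}S^{(j)}_{;b}\big)$ --- precisely the contraction written in the statement. If the $n$ gradients are linearly dependent at $p$ this vanishes trivially; otherwise, by cohomogeneity $n$ they form a basis of the $n$-dimensional space $O_{p}^{\perp}$, the Gram matrix is the matrix of $g|_{O_{p}^{\perp}}$ in that basis, and $||\mathbf{W}||^{2}(p)=0$ iff $g|_{O_{p}^{\perp}}$ is a degenerate bilinear form. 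Because $g$ is non-degenerate and $(O_{p}^{\perp})^{\perp}=O_{p}$, this is equivalent to $O_{p}\cap O_{p}^{\perp}\neq\{0\}$, i.e.\ to $g|_{O_{p}}$ being degenerate. (Equivalently, in a frame adapted to the orbits one obtains $||\mathbf{W}||^{2}=(\det M)^{2}\,\det(g_{AB})/\det(g)$, with $g_{AB}$ the orbit block of the metric and $M$ the matrix of transverse derivatives of the $S^{(i)}$, via the Jacobi/Schur identity relating a block of $g^{-1}$ to the complementary block of $g$; one works in a chart regular across the horizon so that $\det(g)\neq 0$ there.)

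It then remains to prove that $g|_{O_{p}}$ is degenerate on the stationary horizon $\mathcal{H}$. By definition $\mathcal{H}$ is a Killing horizon: there is a Killing vector $\chi$ in the isometry algebra which is null on $\mathcal{H}$ and normal to it. Since the Killing vectors span $O_{p}$ we have $\chi_{p}\in O_{p}$; since $\mathcal{H}$ is an invariantly characterised hypersurface (e.g.\ the event horizon) it is preserved by the isometries, so every Killing vector is tangent to $\mathcal{H}$, i.e.\ $O_{p}\subseteq T_{p}\mathcal{H}$; and since $\chi_{p}$ is null and normal to $\mathcal{H}$, $T_{p}\mathcal{H}=\chi_{p}^{\perp}$. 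Combining, $O_{p}\subseteq\chi_{p}^{\perp}$ says $g(\chi_{p},\xi)=0$ for all $\xi\in O_{p}$, so $\chi_{p}\in O_{p}\cap O_{p}^{\perp}$ lies in the radical of $g|_{O_{p}}$, which is therefore degenerate. Hence $||\mathbf{W}||^{2}=0$ at $p$; the non-generic locus where the orbit dimension drops or where the $dS^{(i)}$ fail to be independent is lower-dimensional and is covered by continuity of $||\mathbf{W}||^{2}$.

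I expect the only genuine work to be the clean identification $||\mathbf{W}||^{2}\propto\det(g|_{O^{\perp}})$, respectively the coordinate identity with the factor $\det(g_{AB})/\det(g)$: the individual factors are basis- and chart-dependent while the product is invariant, and one must use a chart regular on $\mathcal{H}$ so that $\det(g)$ stays finite and nonzero there. The conceptual heart of the matter --- and the only place the stationary-horizon assumption enters --- is the single observation that on a Killing horizon the metric induced on the isometry orbits degenerates, with the null generator lying in its radical. Note that no use is made of the field equations or of the matter content: only the isometry group of the stated cohomogeneity and the Killing-horizon structure of $\mathcal{H}$ are needed.
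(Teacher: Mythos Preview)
The paper does not actually prove this theorem: it is quoted from Page and Shoom \cite{PageShoom2015} and stated without argument, so there is no in-paper proof to compare against. Your proposal is a correct and self-contained proof of the result, and it is essentially the argument underlying the original Page--Shoom observation: SPIs are isometry-invariant, hence their gradients lie in $O_p^{\perp}$; the squared norm of $\mathbf{W}$ is the Gram determinant of those gradients, i.e.\ $\det(g|_{O_p^{\perp}})$ in that basis; and on a Killing horizon the null generator $\chi$ sits in $O_p\cap O_p^{\perp}$, forcing that determinant to vanish.

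One small remark: your continuity argument for the ``non-generic locus'' is fine for the conclusion $||\mathbf{W}||^2=0$ on $\mathcal{H}$, but strictly speaking the theorem as stated does not assume the $dS^{(i)}$ are functionally independent, so the Gram-determinant step already covers the dependent case directly (you noted this yourself). The block-determinant/Schur-complement reformulation you sketch is unnecessary for the proof but is exactly the form in which the result is most useful in coordinates adapted to the symmetry, and it matches how the paper later uses the theorem in the Kerr--Newman--NUT--(A)dS example in Appendix~A.
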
 


There is then the problem of the existence of the $n$ functionally independent invariants involved in Theorem \ref{PSthrm}. However, Theorem \ref{PSthrm} can be generalized to the set of Cartan invariants in Theorem \ref{DPSthrm} below, and this problem is thus automatically solved since the number of functionally independent Cartan invariants, $t_p$, at the end of the algorithm is related to the dimension of the cohomogeneity \cite{GANG}.

\begin{thm} \label{DPSthrm}
For a spacetime of local cohomogeneity $n$ that contains a stationary horizon, and which has $n$ Cartan invariants $C^{(i)}$ whose gradients are well-defined there, the $n$-form wedge product 
\beq {\bf{W}} = dC^{(1)} \wedge ... \wedge dC^{(n)} \nonumber \eeq
\noindent has zero squared norm on the horizon: 

\beq ||{\bf{W}}||^2 = \frac{1}{n!} \delta^{\alpha_1,...,\alpha_n}_{\beta_1,...,\beta_n} g^{\beta_1 \gamma_1} ... g^{\beta_n \gamma_n} \times C^{(1)}_{;\alpha_1}...C^{(n)}_{;\alpha_n} C^{(1)}_{;\gamma_1}...C^{(n)}_{;\gamma_n} = 0. \nonumber \eeq

\end{thm}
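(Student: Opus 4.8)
The plan is to reduce Theorem~\ref{DPSthrm} to Theorem~\ref{PSthrm} by exploiting the relationship between Cartan invariants and scalar polynomial invariants, while replacing the global ``functional independence'' input of Theorem~\ref{PSthrm} with the local structure of the Cartan--Karlhede algorithm. First I would recall that the Cartan invariants $C^{(i)}$ are the components of the Riemann tensor and its covariant derivatives up to the order $q$ at which the algorithm terminates, expressed in a canonically fixed (partially or fully fixed) frame; at a generic point of a spacetime of local cohomogeneity $n$ there are exactly $n$ of them that are functionally independent, and every other Cartan invariant (and every SPI) is a function of these $n$ together with the constant frame-fixing data. The key observation is that $\|{\bf W}\|^2$ for ${\bf W} = dC^{(1)}\wedge\cdots\wedge dC^{(n)}$ is itself a scalar built algebraically from the metric and the gradients of the $C^{(i)}$, so it is frame-independent once the frame has been canonically fixed, and hence is a well-defined scalar field on the region where the algorithm proceeds uniformly.

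The core argument is then geometric: a stationary horizon $\mathcal{H}$ is a null hypersurface whose null generator is (the restriction of) the stationarity Killing vector $\zeta$ (more generally a Killing vector which becomes null on $\mathcal{H}$). Because every Cartan invariant is a scalar constructed invariantly from the curvature, it is annihilated by every Killing vector: $\pounds_\zeta C^{(i)} = \zeta^a \nabla_a C^{(i)} = 0$ identically, and likewise for any other independent Killing vectors generating the isometry group. Thus each $dC^{(i)}$ annihilates the full tangent space to the group orbits everywhere. On $\mathcal{H}$ the generator $\zeta$ is null and tangent to $\mathcal{H}$, so the span of $\{dC^{(1)},\ldots,dC^{(n)}\}$ is a set of (at most $n$) covectors all of which annihilate a null direction $\ell^a = \zeta^a|_{\mathcal{H}}$. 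The final step is the linear-algebra fact already used implicitly in Theorem~\ref{PSthrm}: if $n$ covectors $v^{(1)},\ldots,v^{(n)}$ on a Lorentzian space all annihilate a common null vector $\ell$, then the Gram determinant $\det(g^{-1}(v^{(i)},v^{(j)}))$ vanishes, i.e. $\|v^{(1)}\wedge\cdots\wedge v^{(n)}\|^2 = 0$; this is because the $v^{(i)}$ all lie in the degenerate hyperplane $\ell^\perp$, on which the induced metric is degenerate, so their span carries a degenerate inner product and its Gram matrix is singular. Writing this out is exactly the contraction of the generalized Kronecker delta displayed in the statement.

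I would organize the write-up as: (1) note $\pounds_X C^{(i)}=0$ for every Killing vector $X$, hence $X^a C^{(i)}_{;a}=0$; (2) on the horizon identify the null generator $\ell$ with the relevant Killing vector, so all $C^{(i)}_{;a}$ annihilate the null vector $\ell^a$; (3) invoke the degeneracy of $g$ restricted to $\ell^\perp$ to conclude the Gram determinant vanishes; (4) identify that Gram determinant with the displayed $\frac{1}{n!}\delta^{\alpha_1\ldots\alpha_n}_{\beta_1\ldots\beta_n}g^{\beta_1\gamma_1}\cdots$ contraction, which is $\|{\bf W}\|^2$. A remark should address that $\|{\bf W}\|^2$ is a genuine scalar field (not frame-dependent) because it is the squared norm of an $n$-form, and that the hypothesis ``gradients well-defined there'' guarantees the algorithm's frame fixing is smooth across $\mathcal{H}$ so no spurious discontinuities arise.

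The main obstacle is step~(4): matching the coordinate-free ``the $v^{(i)}$ span a subspace on which $g$ is degenerate'' statement to the explicit permutation-tensor formula, and being careful when the $C^{(i)}$ are \emph{not} functionally independent on $\mathcal{H}$ itself (e.g. if the horizon is a locus where independence drops) — but in that degenerate case ${\bf W}=0$ trivially, so the statement still holds. A secondary subtlety worth a sentence is the case where the relevant Killing vector is a \emph{combination} (as for Kerr, $\partial_t + \Omega_H \partial_\phi$): one must note that since \emph{each} Killing vector annihilates every $C^{(i)}$, so does any constant linear combination, and the horizon generator is such a combination with $\Omega_H$ constant along $\mathcal{H}$; thus the argument is unchanged.
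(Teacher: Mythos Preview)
The paper does not actually supply a proof of Theorem~\ref{DPSthrm}; it is stated as an immediate extension of Theorem~\ref{PSthrm} (whose proof is in \cite{PageShoom2015}), with the only added remark being that the Cartan algorithm automatically furnishes $n=t_p$ functionally independent invariants. So there is nothing to compare against directly, and your proposal is in fact more detailed than what the paper offers.

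That said, your argument has a genuine gap at the linear-algebra step. The ``fact'' you invoke --- that if $n$ covectors $v^{(1)},\dots,v^{(n)}$ all annihilate a common null vector $\ell$, then $\det\big(g^{-1}(v^{(i)},v^{(j)})\big)=0$ --- is false in general. In $4$D Minkowski with $\ell=\partial_t+\partial_x$, the covectors $dy,dz$ annihilate $\ell$ yet have Gram determinant $1$. Lying in the degenerate hyperplane $\ell^\perp$ is not enough: a proper subspace of $\ell^\perp$ that misses the null direction carries a perfectly nondegenerate induced metric.

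What makes the argument work is the cohomogeneity-$n$ hypothesis, which you have not used at this step. The $n$ gradients $\nabla C^{(i)}$ span (generically) the full orthogonal complement $O^\perp$ of the orbit tangent space $O$. On a Killing horizon, every Killing vector is tangent to $\mathcal{H}$ (isometries preserve the horizon), while the null generator $\ell$ is the \emph{normal} to $\mathcal{H}$; hence $\ell$ is orthogonal to all of $O$, i.e.\ $\ell\in O^\perp$. Thus $\ell\in O\cap O^\perp$, so $\ell=\sum_i a_i\,\nabla C^{(i)}$ for some nontrivial $a_i$, and since $\ell\in O$ we have $g(\ell,\nabla C^{(j)})=0$ for every $j$. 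This gives $\sum_i a_i G_{ij}=0$, whence $\det G=\|{\bf W}\|^2=0$. Your steps (1), (2) and the handling of the $\Omega_H$ combination are fine; replace the faulty justification in (3)--(4) with this $O\cap O^\perp\neq 0$ argument and the proof goes through.
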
 

\noindent Using the Cartan equivalence algorithm, the Cartan invariants (those arising from the covariant derivatives of the Riemann tensor) can be used to produce new invariants that detect the stationary horizons. These invariants will be much simpler to compute than the SPIs \cite{GANG}.

In the following section we will show that that the curvature tensor and its covariant derivatives are algebraically special on an isolated horizon, and so the associated discriminant SPI for any operator constructed from the curvature tensor and its covariant derivatives (see below) will vanish on this surface. As a Killing horizon is a special case of an isolated horizon, the Page-Shoom invariants and the discriminants for the covariant derivative of the Weyl tensor, $C_{abcd;e}$, share a common zero in stationary spacetimes.  This suggests that the Page-Shoom invariants might  indicate where the curvature tensor becomes type {\bf II/D} for stationary spacetimes, and hence provide a computationally simpler alternative to the discriminant SPIs; this is illustrated for the Kerr-Newman-NUT-(Anti)-de Sitter solution in Appendix B.  We conjecture that it is possible to use stationarity to simplify syzygies and relate the discriminant invariants for the Ricci and Weyl tensors and their covariant derivatives to the Page-Shoom invariants. For example, if there exists an invariantly defined timelike Killing vector field we can construct lower order invariants which are related to the Riemann tensor.  

\newpage

\section{Horizon Examples}

In this section we provide examples of non-stationary spacetimes in which curvature invariants are able to identify a geometrically preferred quasi-local surface, which will be called a {\it geometric horizon}. We will show that NEHs (and hence WIHs) can be detected by the vanishing of SPIs, which provides an alternative to the Page-Shoom invariants as detectors of stationary horizons. Additionally we will show that the dynamical horizons in the class of imploding spherically symmetric metrics can be detected by SPIs, and thus are geometric horizons. That is, we will construct a SPI which vanishes on the horizon, implying that this surface is foliation independent. 

In the case of NEHs (and hence WIHs), we will identify the coframe in which the Riemann tensor and its covariant derivatives are algebraically special on the horizon, which implies that a set of discriminant SPIs will vanish on the NEH. To move beyond equilibrium states of dynamical black holes, we briefly discuss current approaches to identifying quasi-local surfaces bounding dynamical black holes. In the case of imploding spherically symmetric metrics, we will show that the dynamical horizon is detected by SPIs. Therefore, the dynamical horizons of the Vaidya and \gls{ltb} solutions are detectable and foliation independent. This suggests that in the case of dynamical horizons, the geometric horizons will be a set of preferred quasi-local surfaces defined by the vanishing of appropriate SPIs. We shall also discuss identifying geometric horizons using Cartan invariants.

\subsection{Weakly Isolated Horizons}

It is known that the Riemann tensor is algebraically special on a WIH \cite{Lewandowski}. However, the covariant derivatives of the Riemann tensor are also algebraically special. That is, there is a frame in which the Riemann tensor and all higher derivatives are of type {\bf II/D} on the horizon and this will be reflected in the vanishing of the discriminant SPIs. We expect this result will be helpful in the case of spacetimes admitting a WIH and for which the algebraic Riemann tensor is itself of type {\bf II/D}.

\begin{thm} \label{thm:WIH}

On any weakly-isolated horizon (WIH) the Riemann tensor and its covariant derivatives are all of type {\bf II}.
\end{thm}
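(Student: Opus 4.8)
The plan is to pass to a Newman--Penrose null frame adapted to $\mathcal{H}$, fix the algebraic structure of the curvature at zeroth order, and then propagate it to all covariant-derivative orders by an induction in which transverse derivatives are eliminated using the Bianchi identities. Throughout, being of type {\bf II} aligned with $\ell^a$ means precisely that in the chosen frame every frame component of positive boost weight vanishes on $\mathcal{H}$.

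First I would set up the frame and dispose of the base case $k=0$. Let $\ell^a$ be the null normal generating the WIH $\mathcal{H}$, completed to a null frame $(\ell^a,n^a,m^a,\bar m^a)$ with $n^a$ transverse and $m^a,\bar m^a$ tangent to the cross-sections. The standard NEH structure equations then hold on $\mathcal{H}$: since $\mathcal{H}$ is null with generator $\ell^a$ the congruence is geodesic ($\kappa=0$) and twist-free, and $\theta_{(\ell)}=0$ forces $\rho=0$; the Raychaudhuri equation together with the energy condition in the definition of an NEH gives $\sigma=0$ and $R_{ab}\ell^a\ell^b=0$; and that same energy condition, $\ell^a$ being null, makes $-R^a{}_b\ell^b$ proportional to $\ell^a$, so $\ell^a$ is a Ricci eigendirection; feeding these into the Newman--Penrose (Ricci) equations yields $\Psi_0=\Psi_1=0$. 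Hence every positive-boost-weight component of the Riemann tensor vanishes on $\mathcal{H}$ in this frame, which is the $k=0$ case and the content of \cite{Lewandowski}. I would also record the extra input of the WIH condition $(\mathcal{L}_\ell\mathcal{D}_a-\mathcal{D}_a\mathcal{L}_\ell)\ell^b=0$: together with the field equations it yields the zeroth law (the surface gravity is constant along the generators) and the time-independence along $\ell^a$ of the remaining, boost-weight-zero horizon data, in particular $D\Psi_2=0$ and $D\pi=0$, where $D=\ell^a\nabla_a$.

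For the inductive step I would induct on the derivative order $k$, carrying the hypothesis that on $\mathcal{H}$ every positive-boost-weight frame component of $\nabla^{(j)}\mathrm{Riem}$ vanishes for all $j\le k$, together with the auxiliary statement (inherited from the WIH condition) that $D$ annihilates the boost-weight-zero frame components of $\nabla^{(j)}\mathrm{Riem}$ on $\mathcal{H}$. Expanding a positive-boost-weight frame component of $\nabla^{(k+1)}\mathrm{Riem}=\nabla(\nabla^{(k)}\mathrm{Riem})$ by the Leibniz rule, it is one of $D,\delta,\bar\delta,\Delta$ acting on a frame component of $\nabla^{(k)}\mathrm{Riem}$, plus connection-coefficient corrections multiplying frame components of $\nabla^{(k)}\mathrm{Riem}$. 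A correction whose connection coefficient has positive boost weight vanishes because the only such spin coefficients, $\kappa,\rho,\sigma$, vanish on $\mathcal{H}$; a correction with a boost-weight-$\le0$ connection coefficient multiplies, by additivity of boost weight under products, a frame component of $\nabla^{(k)}\mathrm{Riem}$ of strictly higher (hence positive) boost weight, which vanishes by hypothesis. The operators $D,\delta,\bar\delta$ are tangent to $\mathcal{H}$, so acting with them on a positive-boost-weight component of $\nabla^{(k)}\mathrm{Riem}$, which vanishes on $\mathcal{H}$, gives zero on $\mathcal{H}$; the borderline case in which $D$ acts on a boost-weight-zero component is killed by the auxiliary WIH statement. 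The genuinely new term is the transverse derivative $\Delta=n^a\nabla_a$ of a component of $\nabla^{(k)}\mathrm{Riem}$ of boost weight $\ge2$: here I would use the $k$-fold differentiated second Bianchi identity $\nabla_{[a}R_{bc]de}=0$ (reordering covariant derivatives with the Ricci identity, which only generates lower-order curvature products) to express $\Delta$ of such a component in terms of $D,\delta,\bar\delta$ of positive-boost-weight components of $\nabla^{(\le k)}\mathrm{Riem}$ — zero on $\mathcal{H}$ by hypothesis — plus algebraic terms each carrying a factor that is a positive-boost-weight curvature component or one of $\kappa,\rho,\sigma$, all vanishing on $\mathcal{H}$. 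Collecting the cases, every positive-boost-weight component of $\nabla^{(k+1)}\mathrm{Riem}$ vanishes on $\mathcal{H}$, and the auxiliary statement propagates in the same way, closing the induction; since everything is done in the single frame adapted to $\ell^a$, the Riemann tensor and all its covariant derivatives are of type {\bf II} on $\mathcal{H}$.

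The main obstacle I anticipate is the systematic removal of the transverse derivatives and the borderline terms. On a hypersurface, as opposed to an open region, a quantity vanishing on $\mathcal{H}$ need not have vanishing transverse derivative there, so one cannot merely differentiate $\Psi_0=\Psi_1=0$ and $\kappa=\rho=\sigma=0$; one must show, at every order, that the Bianchi identities convert the dangerous $\Delta$-contributions — and the WIH conditions the borderline $D$-contributions — into manifestly vanishing quantities, and that the curvature terms generated by reordering derivatives never reintroduce positive-boost-weight data. Making this airtight, rather than relying on order-by-order cancellations, is the crux. It is worth stressing that the target is only type {\bf II}, i.e.\ vanishing of the positive-boost-weight components, and not type {\bf D}; this one-sidedness is precisely what makes the induction close, since the negative-boost-weight components of $\nabla^{(k)}\mathrm{Riem}$ — the radiative data crossing the horizon — are left entirely unconstrained.
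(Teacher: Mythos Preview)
Your overall strategy—adapted NP frame, boost-weight bookkeeping, induction with Bianchi identities to trade transverse for tangential derivatives—is essentially the paper's. But there is a concrete gap in your connection-coefficient step: you assert that ``the only such spin coefficients [of positive boost weight], $\kappa,\rho,\sigma$, vanish on $\mathcal{H}$''. This omits $\epsilon$, which also has boost weight $+1$. On a non-extremal WIH, $\epsilon+\bar\epsilon=\kappa_{(\ell)}$ is the surface gravity; the WIH condition makes it constant along generators, but not zero. Hence terms of the schematic form $\epsilon\cdot\Psi_2$, $\epsilon\cdot\Phi_{11}$, \dots survive in the Leibniz expansion at b.w.\ $+1$ and spoil the induction exactly at the place you claim it closes automatically.

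The paper addresses precisely this point by first applying a boost along $\ell$ so that $\epsilon\ \hat=\ 0$ on $\mathcal{H}$ (affinely parametrised generator). Only then do all four positive-b.w.\ spin coefficients $\kappa,\rho,\sigma,\epsilon$ vanish on $\mathcal{H}$, and your Leibniz accounting becomes valid. After that, rather than a fully schematic induction, the paper identifies the specific residual b.w.\ $+1$ obstructions order by order—$D\Psi_2,\,D\Phi_{11},\,D\Lambda,\,D\Phi_{02}$ at first order; $D^2\Psi_3,\,D^2\Phi_{12}$ at second; $D^3\Psi_4,\,D^3\Phi_{22}$ at third—and kills them with particular Bianchi identities and NP field equations, together with the gauge $n_a=-dv$ (which forces $\mu=\bar\mu$ and $\pi=\alpha+\bar\beta$). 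Your ``auxiliary WIH statement'' that $D$ annihilates the b.w.\ $0$ data on $\mathcal{H}$ is morally what the paper derives, but the paper obtains it from these equations rather than directly from the WIH commutator condition, and only in the $\epsilon=0$ frame. Once you insert the missing boost, your outline aligns with the paper's proof.
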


\begin{proof}

To show that the covariant derivative of the Riemann tensor is of type {\bf II} on the horizon, we must show that  $R_{ab;c}$ and $C^a_{~bcd;e}$ are of type {\bf II} on the NEH $\mathcal{H}$\footnote{We will use different notation from \cite{Ashtekar} for the NEH; instead of $\Delta$ we will denote it by $\mathcal{H}$.}. The type {\bf II} condition for a tensor requires the existence of some frame where all positive b.w. components of the tensor are zero when pulled back to the surface $\mathcal{H}$. 

Due to the product rule for the covariant derivative, we may study the covariant derivatives of the frame basis $\{ \ell_a,  n_a, m_a, \bar{m}_a \}$ and the frame derivatives of the non-zero components of the Ricci and Weyl curvature scalars separately to show that the b.w. +1 terms coming from these quantities vanish on the horizon $\mathcal{H}$. Denoting the intrinsic covariant derivative operator $\mathcal{D}$ on $\mathcal{H}$ induced by the spacetime derivative operator by $\nabla$, equations (B3-B6) in \cite{Ashtekar} imply: 

\beq &\mathcal{D}_a \ell^b \hat{=} \omega_a \ell^b = [(\alpha + \bar{\beta}) m_a + (\bar{\alpha} + \beta) \bar{m}_a - (\epsilon + \bar{\epsilon}) n_a] \ell^b & \nonumber \\
& \bar{m}^b \nabla_a n_b \hat{=} \lambda m_a + \mu \bar{m}_a - \pi n_a &  \\
& m^b \nabla_a \bar{m}_b \hat{=} - (\alpha - \bar{\beta}) m_a + (\bar{\alpha} - \beta) \bar{m}_a + (\epsilon - \bar{\epsilon}) n_a & \nonumber \eeq

\noindent where we have used $\hat{=}$ to denote 'equals on $\mathcal{H}$ to'. The sole terms that can contribute b.w. +1 terms is the spin coefficient $\epsilon$ and its complex conjugate, $\bar{\epsilon}$. 

For any non-extremal WIH with $\kappa_{(\ell)} = 2 \epsilon ~\hat{=} \text{ constant} $ and $\epsilon \hat{=} \bar{\epsilon}$, we apply a boost so that $\epsilon = 0$, and hence the horizon is extremal. This implies that the geodesic $\ell_a$ is now affinely parametrized. Relative to this frame, the covariant derivatives of the frame vectors on the horizon do not contribute any b.w. +1 terms. 

The positive b.w. components of the Weyl and Ricci tensors vanish when pulled back to the horizon, $\Phi_{00} \hat{=} \Phi_{10} \hat{=} \Psi_0 \hat{=} \Psi_1 \hat{=} 0$; therefore, these are globally zero on the horizon and the frame derivatives of these quantities are zero on $\mathcal{H}$.  The frame derivatives of the Ricci and Weyl curvature scalars which contribute b.w. +1 terms are $D \Phi_{02}, D \Phi_{11}, D \Psi_2$, and $D R$. The pull back of the Bianchi identities (BI-b), (BI-c), and (BI-i) in \cite{Stewart} yield

\beq D \Phi_{02} \hat{=} 0,~~ D \Psi_2 + 2 D \Lambda \hat{=} 0,~~ D \Phi_{11} + 3 D \Lambda \hat{=} 0,  \eeq

\noindent where $\Lambda = R/24$. The NP equations (NP-c), (NP-d), and (NP-e) in \cite{Stewart} give the following conditions on the spin coefficients: 

\beq D \tau \hat{=}0,~~~ D \alpha \hat{=} 0,~~ D \beta \hat{=} 0.  \eeq

\noindent Furthermore, on the null surface $\mathcal{H}$ the gauge choice $n_a = -dv$, where $\mathcal{L}_\ell v = 1$, implies

\beq \mu \hat{=} \bar{\mu},~~\text{ and } \pi \hat{=} \alpha + \bar{\beta}.  \eeq

As $\kappa = \rho = \sigma = 0$, and $\gamma \hat{=} 0$, we use the pullback of the NP equations (NP-f) and (NP-l) to solve for $\Phi_{11}$:

\beq 2 \Phi_{11} \hat{=} (\tau + \bar{\pi}) \alpha + (\bar{\tau} + \pi) \beta + \delta \alpha - \bar{\delta} \beta - \tau \pi - \alpha \bar{\alpha} - \beta \bar{\beta} - 2 \alpha \beta.  \eeq

\noindent Noting that $[D, \delta] \hat{=} 0$ and $\pi \hat{=} \alpha + \bar{\beta}$ implies that $D \Phi_{11} \hat{=} 0$. Therefore, $$D \Phi_{02} \hat{=} D \Phi_{11} \hat{=} D \Psi_2 \hat{=} D \Lambda \hat{=} 0$$ and it follows that the covariant derivative of the Ricci and Weyl tensor have at most non-zero b.w. 0 terms. 

To show that the second covariant derivative of the Ricci and Weyl tensors have vanishing b.w. +1 terms, we need only examine the frame derivative of the NP curvature scalars: 

\beq D^2 \Phi_{12},~~\text{ and } D^2 \Psi_3.  \eeq

\noindent Taking the frame derivative of the pullback of the Bianchi identities (BI-d) and (BI-j) we find that

\beq & D^2 \Psi_3 - D^2 \Phi_{21} \hat{=} 0 & \nonumber \\
& D^2 \Phi_{12} \hat{=} 0.  \eeq 

\noindent It follows that $D^2 \Psi_3 \hat{=} D^2 \Phi_{12} \hat{=} 0 $ and so $R_{ab;cd}$ and $C_{abcd;ef}$ both have at most non-zero b.w. 0 terms. 

Lastly, to show that the third covariant derivatives of the Ricci and Weyl tensor have vanishing b.w. +1 terms, we must show that 

\beq D^3 \Psi_4 \hat{=} D^3 \Phi_{22} \hat{=} 0.  \eeq

\noindent This can be  achieved by taking the pullback of (BI-g) and (BI-k). With the vanishing of these remaining positive b.w. terms on the horizon, any higher covariant derivative of the Weyl and Ricci tensors will have no positive b.w. terms, implying that they are of type {\bf II}.
\end{proof}

As a corollary of Theorem \ref{thm:WIH} we have the following result which guarantees a necessary condition for the detection of a WIH in terms of the vanishing of SPIs.

\begin{cor}
On any WIH, the discrimiant SPIs of the Riemann tensor and its covariant derivatives must vanish on this surface. 
\end{cor}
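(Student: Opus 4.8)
The plan is to combine Theorem~\ref{thm:WIH} with the discriminant analysis of Section~2. Theorem~\ref{thm:WIH} asserts that on a WIH $\mathcal{H}$ there is a null frame $\{\ell_a, n_a, m_a, \bar m_a\}$ — the one obtained after the boost that sets $\epsilon = 0$ — in which the pullback to $\mathcal{H}$ of the Riemann tensor and of every covariant derivative $R_{abcd;e_1\cdots e_k}$ has only components of boost weight $\le 0$; that is, each of these tensors is of alignment type {\bf II} on $\mathcal{H}$. Since this is precisely the hypothesis under which Section~2 manufactures its necessary conditions, the corollary follows once we check that the operator constructions used there respect the alignment type.

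First I would recall that the discriminant SPIs of Section~2 are, in each case, obtained from a second-order symmetric trace-free operator ${\sf T}=(T^{a}{}_{b})$ built by contracting copies of the tensor in question — the trace-free Ricci tensor $S_{ab}$, the Weyl operator $T_a{}^{e} = C_{abcd}C^{ebcd} - \tfrac{I_1}{4}\delta_a{}^{e}$, and the operators ${}^1T^{a}{}_{b}$, ${}^2T^{a}{}_{b}$, $\tilde T^{a}{}_{b}$ formed from $C_{abcd;e}$ and $C_{abcd;ef}$ — and then imposing the constraint on the eigenvalues of ${\sf T}$ that is forced by the parent tensor being type {\bf II}/{\bf D}. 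Concretely this constraint is the single syzygy $\mathcal{D}\equiv {}^4D_4 = 0$ of equation~\eqref{rictypeii2}, expressed through the power traces $s_i = \Tr({\sf T}^i)$, together with $\mathcal{W}_1 = \mathcal{W}_2 = 0$ from \eqref{weyl1}--\eqref{weyl2} for the Weyl part. The explicit polynomials ${}^1X$, ${}^2X$, $\tilde X$ play the same role for the differential invariants.

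Next I would verify the elementary fact that a contraction of a tensor all of whose components have non-positive boost weight again has this property: index contraction pairs an index of weight $w$ with one of weight $-w$, and in general the boost weight of a component of a contracted tensor is a sum of boost weights of components of the parent tensor, so no positive-weight component can be generated from a type~{\bf II} tensor. Hence each operator built from $R_{abcd;e_1\cdots e_k}$ is itself of type~{\bf II}/{\bf D} on $\mathcal{H}$, so its eigenvalue structure is the restricted one of Section~2 and the associated discriminant ${}^4D_4$ — equivalently the explicit invariants $\mathcal{D}$, $\mathcal{W}_1$, $\mathcal{W}_2$, ${}^1X$, ${}^2X$, $\tilde X$, and the higher-order analogues obtained from the second and third covariant derivatives — vanishes on $\mathcal{H}$. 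Since these discriminants are scalar polynomial curvature invariants, their vanishing is a frame- and foliation-independent statement about the geometry along $\mathcal{H}$, even though they were evaluated in the adapted frame of Theorem~\ref{thm:WIH}.

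The main obstacle is not conceptual but a matter of bookkeeping: one must make sure that every operator entering the discriminant conditions is constructed \emph{only} by contraction (so that the boost-weight argument applies verbatim), and that ``type~{\bf II} on $\mathcal{H}$'' is consistently read as a statement about the pullback of the spacetime tensors to $\mathcal{H}$, since those tensors need not be algebraically special off the horizon. One should also note that the boost used in Theorem~\ref{thm:WIH} to set $\epsilon=0$ is a legitimate frame transformation, so the existence of the adapted frame is unobstructed. With these points in place the corollary is immediate.
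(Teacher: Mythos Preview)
Your proposal is correct and follows the same approach the paper takes: the corollary is stated as an immediate consequence of Theorem~\ref{thm:WIH} together with the fact that the discriminant SPIs of Section~2 are, by construction, \emph{necessary} conditions for a tensor (or an operator built from it) to be of alignment type~{\bf II}/{\bf D}. The paper offers no separate argument beyond this, so your write-up simply makes explicit the two ingredients the paper leaves implicit --- that tensor products and contractions of type~{\bf II} tensors in a common aligned frame remain type~{\bf II}, and that consequently every operator ${\sf T}$ of Section~2 inherits the restricted Segre/eigenvalue structure on $\mathcal{H}$.
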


\noindent This is a necessary condition, but it is not sufficient due to the possibility that the entire set of discriminant SPIs could vanish on other surfaces besides the WIH.

\subsection{Dynamical Black Holes: Shear-Free Surfaces}

Co-dimension two spacelike submanifolds play a central role in gravitational theories based on Lorentzian geometry. Trapped, or marginally trapped submanifolds are examples of these submanifolds. The quasi-local horizons are submanifolds of co-dimension one (hypersurfaces) foliated by marginally trapped compact spacelike submanifolds of co-dimension two.  
Unlike event horizons, trapping horizons and dynamical horizons depend on the foliation of spacetime, and are not invariant under conformal transformations \cite{FN2011}. This dependence on foliation implies they are highly non-unique \cite{AG2005}. The boundary $B$ of the future-trapped region $T$ of a spacetime (the region through which future-trapped surfaces pass) is foliation independent; however, it has been shown to be non-local, and is not a MTT \cite{Senov}.

In \cite{BS,BS1}, a preferred marginally trapped tube is proposed based on the determination of which region of the spacetime is absolutely indispensable for the existence of the black hole by sustaining the existence of closed trapped surfaces. A region ${Z}$ is called the {\em core} of the ``trapped'' region ${T}$ if it is a minimal closed connected set that must be removed from the spacetime in order to eliminate all closed trapped surfaces in ${T}$, and such that any point on the boundary $\partial{Z}$ is connected to $B=\partial {T}$ in the closure of the remainder \cite{Senov2}. While ${Z}\subset {T}$, in general ${Z}$ is substantially smaller than the corresponding trapped region ${T}$. The cores are not unique, although this non-uniqueness is often due to the existence of a high degree of symmetry of the spacetimes. In less symmetric cases the uniqueness of the cores cannot be assumed \cite{Senov}. 




In the case of spherically symmetric spacetimes the preferred MTT $r=2m$ happens to be the unique boundary of the spherically symmetric core; however, there are other non-spherically symmetric cores. While for the quasi-spherical Szekeres dust solutions \cite{sussman2012, Szekeres}, the preferred MTT is given by an apparent horizon $R = 2M$. A physically sound criterion is needed for selecting a preferred MTT. Another geometrical quantity associated with horizons and their evolution is the {\em shear} of the normal vectors of the surface \cite{Umbilical}. As an example, the standard horizons of isolated black holes in equilibrium have vanishing shear. The shear can be seen as a measure of the local instantaneous deformation of a given submanifold with its volume fixed as it starts to evolve. Mathematically, this is also associated to the extrinsic properties of the submanifold, and the shear-free property corresponds to the submanifold being umbilical along the evolution direction \cite{Umbilical, Senov2016}.

To discuss shear-free surfaces \cite{Umbilical}, we must introduce additional concepts. A spacelike submanifold is an orientable 2D Riemannian manifold $(S,g)$ immersed in a 4D Lorentzian manifold $(M,\bar{g})$. If $\bar{\nabla}$ and $\nabla$ are the Levi-Civita connections for $(M, \bar{g})$ and $(S, g),$ respectively, and $X,Y \in \mathcal{X}(S)$ are tangent to $S$ and $\zeta \in \mathcal{X}(S)^\perp$ is normal to $S$, then the relationship between the two are given by the formulas of Gauss and Weingarten: 
\beq & \bar{\nabla}_X Y = \nabla_X Y + K(X,Y), &  \\
& \bar{\nabla}_X \zeta = - A_\zeta X + \nabla_X^\perp \zeta, & \eeq
\noindent where $K(X,Y) = K(Y,X) \in \mathcal{X}(S)^\perp$ for all $X,Y \in \mathcal{X}(S)$ is the {\it second fundamental form} (or shape tensor) of the immersion, $A_{\zeta}$ is a self-adjoint operator called the {\it shape operator} associated to $\zeta$, and $\nabla^\perp$ is a connection in the normal bundle \cite{Oneill83}.

Demarcating the indices $A,B,\cdots$ as indices within $S$, the induced metric is now $\bar{q}_{AB}$. The second fundamental form may be treated as the following mixed index rank-three tensor $K^a_{~AB}$, while the shape operator becomes $A_\zeta = K^a_{~AB} \zeta_a$. We may define the {\it shear} scalars along the normal null vectors $k^+_a = \ell_a,$ and $k^-_a = n_a$ in terms of these quantities:

\begin{defn}
The shear along $k^\pm$ is given at the surface $S$ by 
\beq (\sigma^\pm)^2 = \left( K_{AB}(k^\pm) - \frac12 \theta^\pm \bar{q}_{AB} \right) \left( K^{AB}(k^\pm) - \frac12 \theta^\pm \bar{q}^{AB} \right) \eeq
\noindent where $K_{AB}(k^\pm) = A^a_{~AB} k_a^\pm$ and $\theta^\pm = \theta_{(k^\pm)}$. 
\end{defn}
 
\noindent A normal null direction, say $k^+$, is {\it shear-free} at $S$ if the corresponding shear scalar vanishes; $\sigma^+ = 0$. This can be generalized to non-null normal vectors as well.

\begin{defn}
A spacelike surface is said to be {\it shear-free} along a normal direction $r^a$ if and only if the following condition 
\beq r_a K^a_{~AB} = \frac{1}{2} \theta_{(r)} \bar{q}_{AB} \nonumber \eeq
\noindent where $\bar{q}_{AB}$ is the first fundamental form and $K^a_{~AB}$ is the second fundamental form of the surface. 
\end{defn}
%
%
%
%

Requiring that the MTT is foliated by 2D surfaces which are shear-free leads to a new quasi-local surface \cite{Senov2016}:

\begin{defn}
An {\it endo-shear-free Marginally Trapped Tube} (ESF-MTT) is a hypersurface foliated by shear-free marginally trapped surfaces.
\end{defn}

\noindent For spherically symmetric spacetimes, the round spheres are totally umbilical, and hence the spherically symmetric MTT is an ESF-MTT. For more general spacetimes, to determine if a particular shear-free MTS produces a unique ESF-MTT, one must deform the surface along normal directions and see if the shear-free property is preserved for one of the MTTs. In the case of an ESF-MTT, the vanishing of the variation of the expansion, $\delta_\zeta \theta^{+} = 0$, imposes additional conditions \cite{AMS2005,AMS2007} that depend on the  Gaussian curvature on $S$, the Einstein tensor, $G_{ab}$, the covariant derivative on $S$ and $$ W\equiv \left.G_{ab}k_{+}^a k_{-}^{b}\right|_S +\sigma^2, \label{W} $$ with $\sigma^2$ the shear scalar of $\vec k$ at $S$ \cite{Senov2016}.  Obviously $W\geq 0$ whenever $\left.G_{ab}k_{+}^a k_{-}^{b}\right|_S\geq 0$.  Assuming $\left.G_{ab}k_{+}^a k_{-}^{b}\right|_S \geq 0$, $W$ will vanish only if $\left.G_{ab}k_{+}^a k_{-}^{b}\right|_S=\sigma^2=0$, which leads to isolated horizons \cite{AshtekarKrishnan}. Given that the WIHs are detected by SPIs we conjecture that the preferred MTT in the dynamical regime will also be detected by SPIs.

\subsection{Imploding Spherically Symmetric Metric}

In the study  of spherically symmetric spacetimes, the variation $\delta_{\zeta} \theta_{sph}^{+}$ along normal directions simplifies drastically \cite{Senov, Senov2}, because $\sigma^2=0$ ($\vec k$ is a shear-free) and $s_B=0$, yielding a simplification to the variations of the second fundamental form and the expansion. It was shown that in spherically symmetric spacetimes there are closed ``trapped'' surfaces (topological spheres) penetrating both sides of the (non-isolated part of the) apparent horizon with arbitrarily small portions outside the region $\{r>2m\}$. Further, it has been shown that the MTT is unique \cite{BS,BS1}.

Let us consider the imploding spherically symmetric metric in advanced coordinates:

\beq ds^2 = - e^{2\beta(v,r)} \left(1 - \frac{2m(v,r)}{r}\right) dv^2 + 2 e^{\beta(v,r)}dv dr + r^2 d \Omega^2, \label{SphSymMtrc} \eeq

\noindent where $m(v,r)$ is the mass function and $\beta(v,r)$ is an arbitrary function. In this form all gauge freedom has been used and, in general, further simplification of the Einstein tensor is not possible; for example, for a perfect fluid solution the fluid (or dust) is not, in general, comoving.

The spherically symmetric metric is not an exact solution. However, it does contain exact solutions as special cases. For example, if $\beta =0$ and the matter source consists of null radiation we recover the imploding Vaidya exact solution, and the unique dust solution with $\beta_{,v} = 0$ is given by the LTB solution, which follows from the fact that the frame vectors on the horizon are geodesic and hence the horizon is ``geodesic-lined'' (see below) since  $\beta_{,v} =0$. 

We choose the two future pointing radial null geodesic vector fields:  

\beq \ell = \partial_v + \frac12 \left( 1 - \frac{2m}{r}\right) \partial_r ,~~n = e^{-\beta} \partial_r,  \label{SSln} \eeq
\noindent where $\ell_a n^a = -1$, and complete the non-coordinate basis using the complex spatial vector and its complex conjugate:
\beq m = \frac{1}{\sqrt{2}r} \partial_\theta + \frac{i}{\sqrt{2} r \sin(\theta) } \partial_\phi.  \label{SSmmb} \eeq
\noindent Relative to this basis the metric is now diagonalized: 

\beq g_{ab} = -\ell_{(a} n_{b)} + m_{(a} \bar{m}_{b)}. \label{gfrm} \eeq
 
The mean curvature vector for each round sphere (with $r$ and $v$ constant), is given by:
\beq H = \frac{2}{r} \left(e^{-\beta} \partial_v + \left(1 - \frac{2m}{r}\right) \partial_r\right). \nonumber \eeq
\noindent This implies that relative to the null frame, the future null expansions for these null vectors are:
\beq \theta_{(\ell)} = \frac{e^{\beta}}{r}\left(1-\frac{2m}{r}\right),~~~ \theta_{(n)} = - \frac{2e^{-\beta}}{r}. \eeq

The unique spherically symmetric FOTH is given by the surface $r-2m(v,r) = 0$, which is equivalent to $\theta_{(\ell)} = 0$ \cite{Senov}, and we will denote it as $\tilde{\mathcal{H}}$. This surface can be timelike, null or spacelike depending on the sign of the magnitude of the normal vector,
\beq n_a = \nabla_a(r-2m) = (1-2m_{,r}) dr - 2m_{,v} dv, \nonumber \eeq
\noindent and so evaluating the norm on the surface we obtain:
\beq |n| = g^{ab}n_a n_b = -4e^{-\beta}m_{,v}(1-2m_{,v})|_{\tilde{\mathcal{H}}}. \label{Normalnorm} \eeq

\noindent Assuming that $m_{,v} \neq 0$, the surface $\tilde{\mathcal{H}}$ will be spacelike or timelike and hence a dynamical horizon. 

Previously it was noted that the vanishing of  $m_{,v}$ on  $\tilde{\mathcal{H}}$ implies that the surface $\tilde{\mathcal{H}}$ is an isolated horizon \cite{Senov}. While this surface is indeed null, $\tilde{\mathcal{H}}$ will be an isolated horizon of interest only if the Einstein field equations are satisfied on $\tilde{\mathcal{H}}$. Thus, any spherically symmetric exact solution with $m_{,v} = 0$ will satisfy the geometric horizon conjecture. For example, the Weyl and Ricci tensors and their covariant derivatives are of algebraic type {\bf II} on the surface $r=2m$ for the ingoing Vaidya solution and the LTB solution (both of which admit an isolated horizon; i.e., $m_{,v} =0$). 

More generally there is a class of imploding spherically symmetric metrics which are not exact solutions, but emulate exact solutions like the Vaidya solution and the class of LTB solutions admitting isolated horizons. For these solutions, when $m_{,v} = 0$ on $\tilde{\mathcal{H}}$, the frame vectors normal to $\tilde{\mathcal{H}}$ are geodesic, leading to the following definition:

\begin{defn}
$\hat{\mathcal{H}}$ is a {\it geodesic lined horizon} \cite{AD} if 
\begin{itemize}
\item $\hat{\mathcal{H}}$ is diffeomorphic to the product $\check{\mathcal{H}} \times \mathbb{R}$ where $\check{\mathcal{H}} $ is a 2-sphere and the fibers of the projection: 
\beq \Pi : \check{\mathcal{H}}\times \mathbb{R} \to \check{\mathcal{H}} \nonumber \eeq
\noindent are geodesic curves. 
\item On each leaf $\hat{\mathcal{H}}$ the expansion $\theta_{(\ell)}$ of the null normal $\ell^a$ vanishes. 
\item The expansion $\theta_{(n)}$ of the null normal $n^a$ is negative. 
\end{itemize}
\end{defn}

\noindent For the class of spherically symmetric metrics admitting a null geodesic-lined horizon, and hence $m_{,v}=0$, the condition for $\ell^a$ and $n^a$ to be geodesic requires the following condition on $\beta$:

\beq \beta_{,v} = 0. \nonumber \eeq

With this condition, we may employ the NP formalism to show that the Weyl and Ricci tensors and their covariant derivatives are of algebraic type {\bf II} on the null surface $r=2m$ . 

\begin{prop} \label{thm:GLhorizons}
For any imploding spherically symmetric metric admitting a geodesic-lined null horizon, $\hat{\mathcal{H}}$, the Riemann tensor and its covariant derivatives  are of type {\bf II/D} on $\hat{\mathcal{H}}$.
\end{prop}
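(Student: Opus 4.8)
The plan is to run the same kind of argument as in the proof of Theorem~\ref{thm:WIH}, but carried out directly on the explicit metric \eqref{SphSymMtrc}, since a geodesic-lined null horizon of a metric that need not satisfy the Einstein equations is not literally a WIH and Theorem~\ref{thm:WIH} cannot be quoted verbatim. First I would fix the null frame \eqref{SSln}--\eqref{SSmmb} and record what spherical symmetry already gives us: the radial null directions $\ell$ and $n$ are shear- and twist-free, so $\kappa=\sigma=\lambda=\nu=\tau=\pi=0$ identically; the only non-vanishing Weyl scalar is $\Psi_2$, so the Weyl tensor is of type {\bf D} everywhere; the only Ricci scalars that can be non-zero are $\Phi_{00},\Phi_{11},\Phi_{22}$ and $\Lambda$; and all of these, together with the surviving spin coefficients $\rho,\mu,\epsilon,\gamma,\alpha$ and the Newman--Penrose $\beta$, are functions of $(v,r)$ alone (up to fixed angular factors). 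In particular $\rho=-\tfrac12\theta_{(\ell)}$ vanishes on $\hat{\mathcal H}=\{r-2m=0\}$, and since $m_{,v}\hat{=}0$ the horizon sits at a fixed areal radius so that $\ell\hat{=}\partial_v$ is tangent to $\hat{\mathcal H}$; the condition $\beta_{,v}\hat{=}0$ makes $\ell$ and $n$ geodesic there, $n=e^{-\beta}\partial_r$ being already affinely parametrised so that $\gamma=0$. After an affine reparametrisation (a boost) we may take $\epsilon\hat{=}0$, after which the pulled-back covariant derivatives of the frame vectors carry no positive boost-weight pieces, exactly as in the $\mathcal{D}_a\ell^b$, $\bar m^b\nabla_a n_b$, $m^b\nabla_a\bar m_b$ computation of Theorem~\ref{thm:WIH}, now further simplified by $\tau=\pi=0$.

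Next I would show the Riemann tensor itself is of type {\bf II} on $\hat{\mathcal H}$: $\Psi_0=\Psi_1\hat{=}0$ and $\Phi_{01}=\Phi_{10}\hat{=}0$ hold identically by spherical symmetry, while $\Phi_{00}\hat{=}0$ follows from the Sachs/Newman--Penrose equation for $D\rho$, since $\kappa=\sigma=0$ and $\rho$ vanishes on all of $\hat{\mathcal H}$ so $D\rho\hat{=}0$ ($D$ being tangent to $\hat{\mathcal H}$). For the first covariant derivative the only possible positive boost-weight terms are $D$ applied to the boost-weight-zero curvature scalars, i.e. $D\Phi_{02},D\Psi_2,D\Phi_{11},D\Lambda$; here $\Phi_{02}\equiv 0$, and pulling back the Bianchi identities (BI-b), (BI-c), (BI-i) to $\hat{\mathcal H}$ collapses them, using $\kappa=\sigma=\tau=\pi=\epsilon=0$, $\rho\hat{=}0$, and $\Psi_0=\Psi_1=\Phi_{00}=\Phi_{01}=\Phi_{02}\hat{=}0$, to $D\Psi_2+2D\Lambda\hat{=}0$ and $D\Phi_{11}+3D\Lambda\hat{=}0$, precisely as in Theorem~\ref{thm:WIH}. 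To close the system I would use the remaining Newman--Penrose equations (the analogues of (NP-f), (NP-l)), which with $\tau=\pi=0$ express $\Phi_{11}$ on $\hat{\mathcal H}$ through $\alpha,\beta$ and their $\delta,\bar\delta$ derivatives; since $\alpha,\beta$ are of the form (function of $\theta)/r$ and hence $v$-independent, and $[D,\delta]\hat{=}0$, it follows that $D\Phi_{11}\hat{=}0$, hence $D\Lambda\hat{=}0$ and $D\Psi_2\hat{=}0$. Therefore $R_{ab;c}$ and $C_{abcd;e}$ have no positive boost-weight components on $\hat{\mathcal H}$.

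The higher covariant derivatives then follow by induction exactly as in the proof of Theorem~\ref{thm:WIH}: the new positive boost-weight pieces of $R_{abcd;e_1\cdots e_k}$ can only come from $D$ acting on the boost-weight-zero scalars of the $(k-1)$-st derivative, since with $\epsilon\hat{=}0$ the frame connection terms do not raise boost weight; differentiating the pulled-back Bianchi identities once more at each stage — $D^2\Psi_3\hat{=}D^2\Phi_{12}\hat{=}0$ from (BI-d), (BI-j), then $D^3\Psi_4\hat{=}D^3\Phi_{22}\hat{=}0$ from (BI-g), (BI-k), and so on — removes them. Since $R_{abcd}$ and all $R_{abcd;e_1\cdots e_k}$ then have vanishing positive boost-weight parts in the one frame on $\hat{\mathcal H}$, they are of type {\bf II} there, hence satisfy the type {\bf II/D} conditions and (using the residual symmetry) are in fact of type {\bf II/D}.

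I expect the crux to be the step $D\Psi_2\hat{=}D\Phi_{11}\hat{=}D\Lambda\hat{=}0$: a naive direct computation fails, because $\Psi_2$, $\Phi_{11}$ and $R$ involve transverse ($r$-)derivatives of $m$ and $\beta$ that are \emph{not} controlled by $m_{,v}\hat{=}0$ and $\beta_{,v}\hat{=}0$ alone, so one genuinely has to feed in the pulled-back Bianchi and Newman--Penrose identities (which hold with no field-equation input) together with the vanishing of $\kappa,\sigma,\rho,\tau,\pi,\epsilon$ on $\hat{\mathcal H}$. A lesser subtlety is checking directly from \eqref{SphSymMtrc}, using $m_{,v}\hat{=}\beta_{,v}\hat{=}0$, that one may consistently boost to $\epsilon\hat{=}0$ over all of $\hat{\mathcal H}$ — the analogue here of the zeroth law.
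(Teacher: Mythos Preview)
Your approach is correct but takes a genuinely different route from the paper's. You re-run the abstract Newman--Penrose/Bianchi machinery of Theorem~\ref{thm:WIH} in the spherically symmetric setting, closing the system $D\Psi_2+2D\Lambda\hat{=}0$, $D\Phi_{11}+3D\Lambda\hat{=}0$ by appealing to NP-f and NP-l and the $v$-independence of the angular spin coefficients $\alpha,\beta_{\mathrm{NP}}$. The paper instead works \emph{directly} with the explicit coordinate expressions \eqref{SSpsi2} and \eqref{SSP00}--\eqref{SSrhomu}: $\Phi_{00}\hat{=}0$ is read off from \eqref{SSP00} using $r-2m=0$ and $m_{,v}=0$; after boosting to $\epsilon=0$ the positive b.w.\ pieces of $\nabla R$ and $\nabla C$ are identified as built from $D\Phi_{11},D\Lambda,D\Psi_2$ together with $\rho$ (and $\mu$ multiplying $\Phi_{00}$), and these are asserted to vanish on $r=2m$ directly from $\beta_{,v}=0$, with $\rho\hat{=}0$ coming from \eqref{SSrhomu}. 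Higher orders are then handled by the same explicit bookkeeping rather than by the (BI-d), (BI-j), (BI-g), (BI-k) cascade you propose.

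What each buys: the paper's argument is much shorter and exploits that all curvature scalars are already written down as functions of $m,\beta$ and their derivatives, so one can simply differentiate. Your route is more intrinsic, makes clearer why the result is really the WIH mechanism specialised to spherical symmetry, and does not depend on the particular coordinate gauge. Two remarks on details: your claim $\gamma=0$ is in fact correct here (a short computation shows $n=e^{-\beta}\partial_r$ is affinely parametrised and, by spherical symmetry, $\gamma-\bar\gamma=0$), so NP-f/NP-l go through; and your final caveat that ``a naive direct computation fails'' is precisely what the paper contests --- its proof \emph{is} the direct computation you thought would not work, leaning on $\beta_{,v}=0$ (taken globally in this class, as for Vaidya and LTB) together with $m_{,v}\hat{=}0$ and $\rho\hat{=}0$.
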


\begin{proof}
The proof of the proposition follows immediately by assuming $m_{,v} = 0$ on $\hat{\mathcal{H}}$. Then if $\Phi_{00} \neq 0$ the metric is of Weyl type {\bf D} and of general algebraic Ricci type {\bf I} everywhere except on the horizon where it is of type {\bf II} since the sole positive b.w. component $\Phi_{00}$ vanishes there. In the case that $m_{,v} = 0$ on $\hat{\mathcal{H}}$ and $\Phi_{00} =0$ we have, from condition \eqref{Phi000cond} below and applying a boost, that $\Phi_{22} = \frac{-(r-2m)^2 m_{,v}}{4r}$, wherein the Ricci tensor is of type {\bf II} everywhere except on the horizon where it is of type {\bf D}. To show that this also holds for all higher covariant derivatives, we apply a boost to fix $\epsilon = 0$; then the positive b.w.  terms of the covariant derivative of the Weyl and Ricci tensors are constructed from the frame derivatives: $D\Phi_{11}$, $D\Lambda$ and $D\Psi_{2}$, along with the non-zero spin-coefficients $\rho$ and $\mu$. It is clear that the spin coefficient $\rho$ vanishes on $r=2m$ from \eqref{SSrhomu}. As $\beta_{,v} = 0$ the frame derivatives  $D \Phi_{11}, D \Lambda,$ and $D \Psi_2 $ vanish when $r = 2m$. Therefore, the Ricci and Weyl tensors are of type {\bf II/D} on the horizon. It can also be shown that the higher order covariant derivatives of the Weyl and Ricci tensors are algebraically special by identifying the b.w. $+1$ terms that arise at each higher order and showing that they must vanish on the surface $r=2m$.

\end{proof}

\noindent If the metrics are solutions to the Einstein field equations, the null surface will be an isolated horizon. This result is consistent with the previous results of \cite{AD}.

However, in general, $m_{,v}$ need not be zero on $\tilde{\mathcal{H}}$. Let us consider the case with no further constraints from the field equations (i.e., this will not represent an exact solution) when the surface $r=2m$ is spacelike and hence a FOTH. We can apply the NP formalism to show that the covariant derivatives of the Weyl and Ricci tensors are no longer of type {\bf II} on the FOTH. However, relative to a particular frame, the structure of the covariant derivative of the Weyl tensor changes in a consistent manner on the surface $r=2m$. Additionally, the behaviour of the Weyl tensor's structure on the horizon can be defined in a frame-independent manner by the vanishing of a SPI.

Relative to the coframe given by \eqref{SSln} and \eqref{SSmmb}, the only non-zero component of the Weyl spinor is

\beq \Psi_2 &=& \frac{e^{-\beta} \beta_{,v,r}}{6} - \frac{m_{,r} \beta_{,r}}{2r} - \frac{r(r-5m)\beta_{,r}}{6r^3}+\frac{2 m_{,r}}{3 r^2} - \frac{m_{,r,r}}{6r} \label{SSpsi2} \\
& &+ \frac{r^2(r-2m) \beta_{,r}^2}{6r^3} -\frac{6m - r^2(r-2m) \beta_{,r,r}}{6r^3}, \nonumber \eeq

\noindent and so the parameters of the null rotations about $\ell_a$ and $n_a$ are fixed to identity. The remaining frame freedom consists of boost and spins. However, the boosts can be fixed at zeroth order as well since the non-zero NP curvature scalars for the Ricci spinor and the Ricci scalar $R = \Lambda/24$ are: 
\small
\beq & \Phi_{00} =  \frac{e^{2\beta} (r-2m)^2 \beta_{,r}}{4 r^3} + \frac{e^\beta m_{,v}}{r^2},&  \label{SSP00}\\
& \Phi_{11} =  \left( \frac{\beta_{,r}^2}{4r} + \frac{\beta_{,r,r}}{4r} - \frac{3 \beta_{,r}}{8 r^2} \right) (r-2m) + \frac{\beta_{r,v} e^{-\beta}}{4} - \frac{3 \beta_{,r} m_{,r}}{4r} + \frac{3 \beta_{,r}}{8r} - \frac{m_{,r,r}}{4r} + \frac{m_{,r}}{2r^2}  & \label{SSP11}\\
& \Phi_{22} = \frac{e^{-2\beta} \beta_{,r}}{r}, &  \label{SSP22} \eeq

\beq  \Lambda = \left( - \frac{\beta_{,r}^2}{12 r} - \frac{\beta_{,r,r}}{12 r} - \frac{\beta_{,r}}{24 r^2}\right)(r-2m) - \frac{\beta_{,r,v} e^{-\beta}}{12} + \frac{\beta_{,r} m_{,r}}{4r} - \frac{\beta_{,r}}{8r} + \frac{m_{,r,r}}{12 r} + \frac{m_{,r}}{6r^2}  \label{SSLambda} \eeq
\normalsize


\noindent The components of the Ricci spinor are related to the Ricci tensor by
\beq & \Phi_{00} = \frac{1}{2} R_{11},~~\Phi_{22} = \frac12 R_{22}, \Phi_{11} = \frac14 (R_{12} + R_{34}), \nonumber \eeq

\noindent with all other components of the Ricci tensor vanishing. Therefore, the Cartan scalars $\Phi_{00}, \Phi_{11}$ and $\Phi_{22}$ have b.w. $+2$,0,$-2$, respectively. Similarly, the nonzero Weyl spinor component is related to the only algebraically independent component of the Weyl tensor:

\beq \Psi_2 = - C_{1324}. \nonumber \eeq

Thus the Weyl tensor is of algebraic type {\bf D}, and the Ricci tensor is generally of algebraic type {\bf I} ($\Phi_{00} \neq 0$) relative to the alignment classification \cite{classa,classb,classc}. At zeroth order the isotropy group of the Riemann tensor consists of spins \footnote{In fact, the spins belong to the isotropy group of the metric, and so all higher covariant derivatives of the Riemann tensor are invariant under spins.}. The type {\bf II/D} SPIs for the spin invariant Ricci tensor are insensitive to the FOTH defined by $r= 2m(v,r)$. To detect the FOTH, we can investigate the covariant derivative of the Weyl tensor.

Taking the covariant derivative of the Weyl and Ricci tensors and applying the differential Bianchi identities, the components can be expressed in terms of $\Psi_2$, $\Phi_{00} = \Phi_{22}$,$\Phi_{11}$, $\Delta \Phi_{11}$ and the spin coefficients:
\beq & \epsilon = \frac{r(r-2m) e^\beta \beta_{,r}}{2r^2} + \frac{\beta_{,v} }{2} - \frac{e^\beta m_{,r}}{r} + \frac{m e^\beta}{2r^2}, & \label{SSeps} \\
& \rho = -\frac{e^\beta(r-2m)}{2r^2}, \text{ and }\mu =- \frac{e^{-\beta}}{r}. & \label{SSrhomu} \eeq

The covariant derivative of the Weyl tensor will have components of b.w. $+1, 0$ and $-1$. In particular, the non-zero components of b.w. $+1$ are: 

\beq C_{1214;3} = C_{1434;3} = C_{1213;4} = C_{1334;4} = 3 \rho \Psi_2, \label{SSBW10} \eeq
\noindent and 
\beq 2C_{1423;1} = C_{1212;1} = C_{3434;1} =  -4D \Lambda - 2\Delta \Phi_{00} -2 \mu \Phi_{00} + \rho (6 \Psi_2 + 4 \Phi_{11}), \label{SSBW01} \eeq
\noindent where we have simplified using the differential Bianchi identities.

Interestingly, the components in \eqref{SSBW10} vanish on the FOTH and hence identify the horizon. That is, this subset of algebraically special b.w. $+1$ components identify the horizon. However, the components in \eqref{SSBW01} do not; this can be seen explicitly by expressing the left-hand-side of \eqref{SSBW01} in coordinates. This implies that although the covariant derivative of the Weyl tensor is algebraically special on $\tilde{\mathcal{H}}$ it will not generally be of type {\bf II}. However, the existence of the FOTH clearly affects the structure of this tensor, and this condition on the structure of the Weyl tensor is reflected in the vanishing of a SPI:

\begin{thm}
For any spherically symmetric metric, the structure of the covariant derivative of the Weyl tensor changes on the FOTH $r=2m(v,r)$ and this can be detected by the invariant: 

\beq 4 I_1 I_3  - I_5,  \eeq 

\noindent where $I_1 = C_{abcd} C^{abcd},~~I_3 = C_{abcd;e} C^{abcd;e}$ and $I_5 = I_{1,a}I_1^{~a}$. 
\end{thm}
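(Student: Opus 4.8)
The idea is to collapse the combination $4I_1I_3-I_5$ into a single scalar proportional to the spin coefficient $\rho$ of $\ell^a$, which vanishes precisely on $\tilde{\mathcal H}:\,r=2m(v,r)$. First I would use that the metric is of Weyl type $\mathbf D$ with $\Psi_2=-C_{1324}$ (eq.~\eqref{SSpsi2}) its only Weyl scalar, which by spherical symmetry is real; hence $I_1=C_{abcd}C^{abcd}=48\,\Psi_2^{\,2}$. Writing $C_{abcd}=\Psi_2\,\hat C_{abcd}$ with $\hat C_{abcd}$ the normalised type-$\mathbf D$ Weyl candidate, one has $\hat C_{abcd}\hat C^{abcd}=48=\mathrm{const}$. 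Differentiating, $C_{abcd;e}=(\nabla_e\Psi_2)\hat C_{abcd}+\Psi_2\,\hat C_{abcd;e}$; contracting, the cross term drops (because $\hat C_{abcd}\hat C^{abcd}$ is constant) and
\[ I_3 = 48\,\nabla_e\Psi_2\nabla^e\Psi_2 + \Psi_2^{\,2}\,\hat C_{abcd;e}\hat C^{abcd;e},\qquad I_5=\nabla_eI_1\nabla^eI_1 = 4\cdot48^2\,\Psi_2^{\,2}\,\nabla_e\Psi_2\nabla^e\Psi_2 . \]
These combine into the algebraic identity $4I_1I_3-I_5 = 192\,\Psi_2^{\,4}\,\hat C_{abcd;e}\hat C^{abcd;e}$, valid (first where $\Psi_2\ne0$, then everywhere by polynomial continuation) for any metric of the form \eqref{SphSymMtrc}. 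So it suffices to show the factor $\hat C_{abcd;e}\hat C^{abcd;e}$ vanishes on $\tilde{\mathcal H}$.

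Next I would analyse the boost-weight content of $\hat C_{abcd;e}$. Since $\delta\Psi_2=\bar\delta\Psi_2=0$ and $\nabla C$ is invariant under the spins in the isotropy group of the metric, every non-zero component of the five-index tensor $\hat C_{abcd;e}$ must have spin $0$; together with the Weyl symmetries in $abcd$ this forces all its components to have boost weight $\pm1$. Those boost-weight-$\pm1$ components whose derivative slot is $\ell$ or $n$ vanish identically: $\hat C_{abcd;1}$ is a Weyl candidate whose boost-weight-$\ne0$ components are killed by spin invariance, so it is fixed by its single boost-weight-$0$ scalar $-\hat C_{1324;1}$, which is zero because the connection terms in $C_{1324;1}=DC_{1324}=-D\Psi_2$ exactly cancel the subtracted gradient term in $\hat C_{abcd;e}=C_{abcd;e}/\Psi_2-(\nabla_e\Psi_2/\Psi_2)\hat C_{abcd}$; similarly $\hat C_{abcd;2}=0$. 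Hence, by \eqref{SSBW10} and its GHP prime, $\hat C_{abcd;e}=\rho\,A_{abcde}+\mu\,B_{abcde}$, where $A$ is the fixed boost-weight-$(+1)$ tensor with non-zero components $A_{1214;3}=A_{1434;3}=A_{1213;4}=A_{1334;4}=3$ (plus those forced by the Weyl symmetries), $B$ its boost-weight-$(-1)$ prime, and $\rho,\mu$ the spin coefficients \eqref{SSrhomu}. Because $A$ and $B$ are each boost-weight homogeneous, contracting either with itself — which pairs boost weight $w$ with boost weight $-w$ — gives zero, so only the cross term survives: $\hat C_{abcd;e}\hat C^{abcd;e}=2\,(A_{abcde}B^{abcde})\,\rho\mu$, a fixed non-zero constant times $\rho\mu$. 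Using \eqref{SSrhomu}, $\rho\mu=(r-2m)/2r^3$, so finally
\[ 4I_1I_3-I_5 = 384\,(A_{abcde}B^{abcde})\,\Psi_2^{\,4}\,\rho\mu = \frac{192\,(A_{abcde}B^{abcde})\,\Psi_2^{\,4}\,(r-2m)}{r^3}, \]
which carries the explicit factor $(r-2m)$ and therefore vanishes on $\tilde{\mathcal H}$; since $\Psi_2$ (equivalently $I_1$) is generically non-zero there, the invariant genuinely detects the horizon. Note that what the invariant "sees" is the algebraic degeneration of the \emph{normalised} derivative $\hat C_{abcd;e}$ (which loses its boost-weight-$(+1)$ block on $\tilde{\mathcal H}$), not of $C_{abcd;e}$ itself — consistent with the earlier observation that $C_{abcd;e}$ is not of type $\mathbf{II}$ there, the discrepancy being the gradient term in $\hat C_{abcd;e}$ that absorbs the combination \eqref{SSBW01}.

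The hard part will be the second step, specifically showing that the boost-weight-$(+1)$ components \eqref{SSBW01}, which do \emph{not} vanish at $r=2m$, are removed by the $\Psi_2$-normalisation — i.e. that $-4D\Lambda-2\Delta\Phi_{00}-2\mu\Phi_{00}+\rho(6\Psi_2+4\Phi_{11})$ is a constant multiple of $D\Psi_2$ as a consequence of the contracted Bianchi identities (this is exactly what makes $\hat C_{abcd;1}$ vanish identically). Verifying this, and pinning down the tensors $A,B$ and the constant $A_{abcde}B^{abcde}$, requires careful bookkeeping with the NP/GHP equations for \eqref{SphSymMtrc}. Alternatively one can bypass the frame analysis altogether: compute $I_1,I_3,I_5$ directly in the $(v,r,\theta,\phi)$ coordinates of \eqref{SphSymMtrc}, form $4I_1I_3-I_5$, and exhibit the overall factor $(r-2m)$ — a finite if lengthy computation that also confirms the identity above.
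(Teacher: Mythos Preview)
Your approach is correct and genuinely different from the paper's. The paper simply computes $I_1,I_3,I_5$ explicitly in the null frame and reports
\[
4I_1I_3-I_5 \;=\; 2^{12}\,3^3\,\rho\mu\,\Psi_2^{\,4} \;=\; \frac{2^{11}\,3^3\,(r-2m)\,\Psi_2^{\,4}}{r^3},
\]
without explaining \emph{why} the combination collapses to $\rho\mu\,\Psi_2^{\,4}$. Your derivation supplies that explanation: the identity $4I_1I_3-I_5=192\,\Psi_2^{\,4}\,\hat C_{abcd;e}\hat C^{abcd;e}$ is exact for any type~\textbf{D} metric with real $\Psi_2$, and then the content specific to spherical symmetry is that $\hat C_{abcd;e}$ is purely boost-weight $\pm1$ with coefficients $\rho,\mu$. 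What you flag as the ``hard part''---that $\hat C_{abcd;1}=\hat C_{abcd;2}=0$---does in fact hold, and can be seen directly: $\hat C_{abcd}$ is a fixed frame tensor, so $\hat C_{ABCD;1}$ is linear in the rotation coefficients $\Gamma^{F}{}_{A1}$; in spherical symmetry these are all proportional to $\epsilon$ (since $\kappa=\pi=0$ and $\epsilon$ is real, so $Dm^a=0$), and the $\epsilon$-contributions cancel pairwise because each $\ell$--index in a boost-weight--zero combination is matched by an $n$--index. Equivalently, $\hat C_{1324;1}$ is a GHP scalar of weight $(+1,0)$ built linearly from spin coefficients, and $\rho$ is the only such scalar available---but the coefficient is zero. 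This also confirms your suspicion that \eqref{SSBW01} equals $\hat C_{1212}\,D\Psi_2$ identically, via Bianchi. The payoff of your route is a transparent explanation of why the particular combination $4I_1I_3-I_5$ isolates the ``shape'' derivative $\nabla\hat C$ from the ``size'' derivative $\nabla\Psi_2$, and hence why it sees $\rho$ rather than the full $D\Psi_2$; the paper's direct computation is quicker but opaque on this point. Your alternative fallback (compute in coordinates) is exactly the paper's proof.
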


\begin{proof}
To prove this we may compute the explicit forms of the invariants $I_1, I_3$ and $I_5$ and combine them to show that

\beq 4 I_1 I_3  - I_5 =  2^{12} 3^3 \rho \mu \Psi_2^4 = \frac{2^{11} 3^3 (r-2m) \Psi_2^4}{r^3}. \eeq

\noindent Since $r-2m = 0$ on $\tilde{\mathcal{H}}$, the invariant vanishes on $\tilde{\mathcal{H}}$.
\end{proof}

\noindent Noting that $I_1 = 48 \Psi_2^2$ \cite{GANG}, we may normalize the above SPI to produce an invariant whose vanishing is necessary and sufficient to detect the FOTH. The resulting invariant will be proportional to $\rho$ which is a Cartan invariant  relative to the coframe which vanishes on the FOTH. The behaviour of the b.w. $+1$ components of the first covariant derivative of the Weyl tensor and the positive b.w. components of the second covariant derivative of the Weyl tensor can be investigated symbolically where it can be shown that, in general, only a subset of the highest b.w. terms of the covariant derivatives vanish on the surface $r=2m$. Each of the components in the subset of highest b.w. terms that vanish on the horizon may be expressed in terms of $\rho$. 

In general $m_{,v}$ is not necessarily zero on $\tilde{\mathcal{H}}$. However, additional constraints (field equations) will be imposed for physically realistic exact solutions (such as the Vaidya and LTB exact solutions discussed above).  To study the subset of imploding spherically symmetric metrics which satisfy the Einstein field equations, we can impose conditions on the components of the Einstein tensor relative to the coframe basis, $\{ n_a, \ell_a, \bar{m}_a, m_a \}$, which diagonalizes the metric \eqref{gfrm}, which are:

\beq G_{11} &=& \frac{e^{2\beta} \beta_{,r} (r-2m)^2}{2r^3} + \frac{2 e^\beta m_{,v}}{r^2} \nonumber \\ 
G_{12} &=& -\frac{\beta_{,r} (r-2m)}{r^2} + \frac{2m_{,r}}{r^2} \nonumber \\
G_{22} &=& \frac{2e^{-2\beta} \beta_{,r}}{r} \nonumber \\
G_{34} &=& -\frac{(-2\beta_{,r}^2 + \beta_{,r} - 2 \beta_{,r,r} r )(r-2m)}{2r^2} - \frac{-2e^{-\beta} \beta_{,r,v} r^2 + 6 \beta_{,r} m_{,r} r - \beta_{,r} r+ 2m_{,r,r} r }{2r^2}. \nonumber  \eeq

\noindent Equivalently (and for reference) the components of the Einstein tensor relative to the coordinate basis are:
\small
\beq G_{vv} = \frac{2 e^{2\beta} m_{,r} (r-2m)}{r^3} + \frac{2e^\beta m_{,v}}{r^2},~~ G_{vr} = e^{2\beta} G_{22},~~ G_{rr} = \frac{2\beta_{,r}}{r},~~ G_{\theta\theta} &=& G_{\phi \phi} = r^2 G_{34}. \nonumber \eeq
\normalsize
\noindent We can consider several possible matter fields by imposing conditions on the NP Ricci scalars \cite{kramer} : 

\begin{itemize}
\item Null Radiation (Null Electromagnetic (EM) Field):

\beq \Phi_{00}=\Phi_{11} =0,~~ \Phi_{22} \neq 0  \eeq

\item Non-Null EM field:

\beq \Phi_{00}=\Phi_{22} = 0,~~ \Phi_{11} \neq 0  \eeq

\item Perfect Fluid:

\beq \Phi_{00} \Phi_{22} = 4 (\Phi_{11})^2 \label{perffluid} \eeq

\item Dust Solution 

\beq  \Phi_{00} \Phi_{22} = 4 (\Phi_{11})^2, 2\Phi_{11} =  \frac{R}{4} \nonumber \eeq

\end{itemize}
\noindent where $R = 24 \Lambda$ (this corresponds to the LTB solution).
\vspace{5 mm}

In the case of null or non-null radiation, $\Phi_{00} = 0$, and so \eqref{SSP00} implies that the surface $r = 2m$ is null; i.e., $m_{,v} = 0$. In the case that $\Phi_{00} =0$, the Ricci tensor is of type {\bf II}  and $\beta$ satisfies 
\beq (e^\beta )_{,r} = \frac{-4 m_{,v} r}{(r-2m)^2}. \nonumber \eeq\noindent If $\Phi_{11}=0$, then the EM field is null, and a boost can always be made so that the negative b.w. term of the Ricci tensor is of the form:
\beq \Phi_{22} = \frac{-(r-2m)^2 m_{,v}}{4r}, \label{Phi000cond} \eeq
\noindent 
\noindent at the cost of $\ell$ no longer being affinely parametrized. Thus, in this case, the Ricci tensor is of type {\bf III} everywhere except on the horizon where it is of type {\bf 0}. If this is an exact solution of Einstein's field equations then this will be an isolated horizon, and the covariant derivatives of the Weyl and Ricci tensors will be algebraically special on the isolated horizon, and the conjecture is satisfied. 

To further study the behaviour of an imploding spherically symmetric metric satisfying the Einstein field equations, we consider a perfect fluid solution and explicitly exclude the case of dust  (i.e., the exact LTB solution). The condition \eqref{PerfFluid} for a perfect fluid implies that the scalar, $I \equiv \Phi_{00}\Phi_{22} - 4\Phi_{11}^2$, must vanish. In coordinates this implies

\beq 
 I &=&  \left(\frac{e^{2\beta} (r-2m)^2 \beta_{,r}}{4 r^3} + \frac{e^\beta m_{,v}}{r^2} \right) \left(\frac{e^{-2\beta} \beta_{,r}}{r} \right) \label{PerfFluid}  \\ && - \left[ \frac{e^{-\beta} \beta_{,v,r} }{4} +\frac{r(r-2m)\beta_{,r}^2}{4r^2} - \frac{3 \beta_{,r}}{4}\left(\frac{m}{r}\right)_{,r}+ \frac{2m_{,r} - m_{,r,r} r}{4r^2} + \frac{r(r-2m)\beta_{,r,r}}{4r^2}\right]^2,   \nonumber \eeq

\noindent must vanish. We may derive additional conditions by imposing $I_{,r} = I_{,v} = 0$. 

However, we will focus here on a subclass of perfect fluid solutions by assuming that $\beta_{,v} = 0$ on the horizon. Evaluating $I = 0$ on the surface $r=2m$ gives an expression for $m_{,v}$ on this surface: 

\beq 
& I_0 \equiv  \frac{e^{-\beta} \beta_{,r} m_{,v}}{r^3} - \left[ - \frac{3 \beta_{,r}}{4}\left(\frac{m}{r}\right)_{,r}+ \frac{2m_{,r} - m_{,r,r} r}{4r^2} \right]^2 = 0. & \label{I0}  \eeq

\noindent Calculating $C_{1212l;1}$ evaluated on $r=2m$ (with $\beta_{,v} = 0$) we get:

\beq 12 r^5 C_{1212;1}\displaystyle|_{r=2m} &=& 4 \beta_{,r}^2 m_{,v} r^4 + 6 \beta_{,r} m_{,rv} r^4 + 4 \beta_{,rr} m_{,v} r^4 \nonumber \\ && - 10 \beta_{,r} m_{,v} r^3 + 2m_{,rrv} r^4 - 8 m_{,rv} r^3 + 12 m_{,r} r^2.  \label{dagger} \eeq

\noindent By evaluating $I_{,r}=0$ and $I_{,v}=0$ on the surface $r=2m$, we can derive expressions for $m_{,vr}$ and $m_{,rrv}$ which will allow us to represent this expression as: 

\beq C_{1212;1}|_{r=2m} = \{~\}m_{,v} + \{~\}(m_{,v})^\frac12 + \{~\}\frac{m_{,vv}}{(m_{,v})^\frac12 }. \nonumber \eeq
\noindent If $m_{,v} =0$ and $\frac{m_{,vv}}{\sqrt{m_{,v}}} =0$ on $r=2m$, it follows that $C_{1212;1} = 0$ there.

Let us compute $m_{,v} = f(v,r)$ close to a point on the FOTH, $$r=2m \equiv 2m_0.$$ In a neighbourhood $U$ of $r=2m_0$, we may set $v =0$ at the point and allow $v$ to vary in $U$. We will assume that as $r$ varies off the horizon, $r-2m_0$ is small, and that $m_{,v}$ is bounded in $U$, allowing for an analytic expansion. Thus, we may write a local expression for $m$ on U as:

\beq (m-m_0) = \frac12 r' + (m_1(r')v + m_2(r') v^2 + \cdots ) \nonumber \eeq

\noindent where $r' \equiv r-2m_0$ is small\footnote{Essentially in these coordinates, $r' = 0$ on the horizon, as in near horizon calculations \cite{Kunduri2013}. }. From this we find that $m_{,v}|_{r' =0} = m_1(r' = 0)$, so that $m_{,v} \cong m_1(r)$ on $U$. It can be then shown that  $m_1(r) \approx 0$ on $U$. Assuming $\beta_{,r} \neq 0$, and using \eqref{I0}, we may solve  for $\beta_{,r}$ perturbatively  and integrate to determine a form for $\beta$. Substituting $\beta$ into \eqref{dagger} then yields: 

\beq C_{1212;1} \sim (r-2m)^2 + {\cal O} [(r-2m)^3]. \eeq

From this we can conclude that for a well-behaved perfect fluid spherically symmetric solution,  in a neighbourhood of a point on the horizon, locally all of the b.w. $+1$ terms of $C_{abcd;e}$ are zero on the horizon. We emphasize that this is not a proof but rather an argument indicating how the development of a dynamical horizon into an isolated horizon occurs in a smooth manner, and correspondingly how the geometric horizon conjecture may be applied to dynamical horizons.

\subsection{Further Work} \label{subsec:FW}


In future work we will study the geometric horizon conjecture for less idealized dynamical black holes. First, we would complete a more comprehensive analysis of spherically symmetric models and further explore the transition from dynamical to isolated horizons. In particular, the behaviour of the geometric horizon could be studied in piecewise linear Vaidya and LTB solutions to determine whether the evolution of this surface is smooth at all times. Additionally, the collapse of matter into a black hole can be generalized to configurations which are not spherically symmetric. For example, the quasi-spherical Szekeres dust models are a generalization of the LTB solutions representing collapsing non-concentric shells of matter which admit apparent horizons \cite{Szekeres}. To extend this to arbitrary dynamical black holes we could consider a generic metric arising by perturbatively reconstructing spacetime near the horizon \cite{booth} and examine the behaviour of the discriminant SPIs for the Ricci and Weyl tensors. These invariants will be large, but it is possible that they can be simplified by imposing additional conditions on the horizon such as extremality or axisymmetry.

Coalescing black holes provide another scenario where dynamical horizons appear, and hence provide a test for the geometric horizon conjectures. While a numerical simulation of physically realistic black holes coalescing is unavailable to examine the algebraic type of the curvature tensor on the horizon, there are exact solutions that can be studied, such as the Kastor-Traschen solution \cite{KT}. The Kastor-Traschen solution represents $N$ charge-equal-to-mass black holes in a spacetime with a positive cosmological constant, $\Lambda$:  
\begin{eqnarray}
& ds^2=-W^{-2}dt^2+W^2(dx^2+dy^2+dz^2)\,;~W=-H\tau+\displaystyle\sum_{i=1}^N\frac{m_i}{r_i}. & 
\end{eqnarray}
Here $H=\sqrt{\Lambda/3}$, where $\Lambda \ge 0$ is the cosmological constant, $t\in(-\infty,0)$, $m_i$ $i\in [1,N]$ are the black hole masses, and $r_i=\sqrt{(x-x_i)^2+(y-y_i)^2+(z-z_i)^2}$, are the black hole positions where $r_i = 0$, $i \in[1, N]$, represent a 3D infinite cylinder, with 2D cross-sectional area of $4\pi m_i^2$ for each black hole. The electromagnetic 4-potential is $A=W^{-1}dt$.

The existence of horizons has been examined in the case of two coalescing black holes \cite{NSH}. In this case, we can choose coordinates so that the black holes are located on the $z$-axis at the coordinate distance $c>0$ from the origin,
\begin{eqnarray}
r_{\pm}&=&\sqrt{x^2+y^2+(z\pm c)^2}.
\end{eqnarray}


\noindent If the sum of the masses of the two black holes is below a critical mass, the black holes will coalesce into a larger single black hole. For these spacetimes the SPI $\mathcal{W}_2$ vanishes while $\mathcal{W}_1$ is generally non-zero\footnote{The type {\bf II}/{\bf D} SPIs $\mathcal{W}_1$ and $\mathcal{W}_2$ are constructed from traces of the powers of the Weyl tensor and are defined by equations \eqref{weyl1} and \eqref{weyl2}.}.  At earliest times, ${\cal W}_1 \to  0$ as $t \to-\infty$, and there are two 3D geometric horizons enclosing the two black holes.  

In the  exact equal mass Kastor-Traschen solutions the type  {\bf II}/{\bf D} discriminant  ${\cal W}_1$ vanishes on segments of the symmetry-axis, at  the coordinate locations of the black holes $r_{\pm}=0$  and on a ``dynamical" 2D (cylindrical) surface  around the symmetry-axis that appears in the center of mass plane \cite{AD}. At earlier stages of the coalescence, this 2D surface has a finite cross-sectional radius (from the symmetry-axis), while at later stages this surface expands as the two black holes move together. Using a measure of the separation between the black holes introduced in \cite{NSH},  it may be shown that as $t \to 0^- $ this measure approaches zero as the two black holes merge and the 2D surface forms around the two black holes,  suggesting that we can identify the location of a geometric horizon in the dynamical regime. After the black holes have merged, the spacetime will eventually settle down to a type {\bf{D}} Reissner-Nordstrom-de Sitter black hole of mass $m_1+m_2$, since ${\cal W}_1 \to  0$ as  $t \to0^-$, and in the quasi-stationary regime there will be a single 3D horizon \cite{GANG, AD}.

The type {\bf II/D} discriminant SPIs for the trace-less Ricci tensor and any trace-less operator produced from the covariant derivatives of the Ricci tensor will also vanish on 3D surfaces at a finite distance from the axis of symmetry.  This implies that the trace-less Ricci tensor and its covariant derivatives are of type {\bf II/D} on these surfaces. In addition, there is numerical evidence for a minimal 3D geometric surface where the invariant $\mathcal{W}_1$ takes on a constant non-zero minimum value and evolves in time. These results suggest the existence of a geometric horizon in the case of the dynamical regime of the Kastor-Traschen spacetime. A comprehensive analysis will be provided in future work, which will also consider two unequal black holes, and various configurations of three black holes in this class of exact spacetimes \cite{AD}.

\newpage

\section{Conjectures}

The  SPIs constructed by Page and Shoom \cite{PageShoom2015} that detect the stationary horizon are well-motivated for stationary black holes, as any compact Cauchy horizons admitted by real analytic vacuum solutions to the 4D Einstein equations must necessarily be Killing horizons when the horizon generating null geodesics are all assumed to be closed curves. Additionally, if a horizon generating Killing vector field exists in cases for which the null generators are not all closed then one or more additional Killing vector fields must also exist, which generate a certain (commutative) action of isometries of the spacetime \cite{Isenberg}. In dimension $D \geq 4$, it was proven that if a stationary, real analytic, asymptotically flat vacuum black hole spacetime contains a non-degenerate horizon with compact cross sections that are transverse to the stationarity generating Killing vector field then, for each connected component of the black hole's horizon, there is a Killing vector field which is tangent to the generators of the horizon \cite{Isenberg}. For the case of rotating black holes, the stationarity generating Killing vector field is not tangent to the horizon generators and therefore the isometry group of the spacetime is at least two dimensional. 


Although this concept can be extended to dynamical black holes conformally related to stationary black holes \cite{PageMcNutt,McNutt2017}, it is too restrictive for most dynamically evolving black holes, as the entire spacetime history must be known to characterize an event horizon.
A less restrictive notion of a black hole horizon is given by a quasi-local isolated horizon, which is a specialization of a NEH, which accounts for equilibrium states of black holes and encompasses all essential local features of an event horizon \cite{AshtekarKrishnan,DiazPolo}.  Isolated horizons employ only local time-translational Killing vector fields and do not require asymptotic structures nor foliations of spacetime. We note that every Killing horizon that has the topology $S_2 \times R$ is an isolated horizon \cite{AshtekarKrishnan}. This implies that, in particular, the event horizon of Kerr geometry is an isolated horizon. The rotating Kerr-type isolated horizons in the astrophysical context can account for, to a good approximation, supermassive spinning black holes and the associated accretion disks in active galactic nuclei.  It is expected that the majority of physical black holes actually appear to rotate rapidly around an internal symmetry axis.

Working with a NEH avoids many of the problems that beset horizon identification (including, for example, locally determinability, uniqueness, and smoothness) as we only use quantities that are intrinsic to the horizon. In 4D, and assuming the usual energy inequalities, the existence of an induced degenerate metric tensor locally identified with a metric tensor defined on the 2D tangent space and its induced covariant derivative, leads to the condition that on the NEH the Weyl tensor is at most of algebraic type {\bf II}. 
As any NEH can be given the structure of a WIH, we conclude  {\em{that for a WIH, this implies that the Weyl tensor is of type {\bf II}/{\bf D} on the horizon}}  \cite{Lewandowski}.

Furthermore, all of the known exact higher dimensional black holes are algebraically special of Weyl (curvature) type {\bf II} or {\bf D} \cite{Pelavas}. This has led to a conjecture that asserts that
stationary higher dimensional black holes (with the additional conditions of vacuum and/or asymptotic flatness) are necessarily of Weyl type {\bf D} \cite{Pelavas}.  This conjecture has been supported by the study of local non-expanding null surfaces in $n$ dimensions \cite{Lewandowski}.  Assuming the usual energy inequalities, it was found that the vanishing of the expansion of a null surface 
leads to the condition that on the NEH the boost order of the null direction tangent to the surface is at most 0, so that the Weyl tensor is at most of algebraic type {\bf II} (where the aligned null vectors tangent to the surface correspond to a double principal null direction of the Weyl tensor in the 4D case). Therefore, the corresponding WIH arising from the NEH must be of type {\bf II}

In the case of a star collapsing to form a black hole, it is expected that the exterior of the black hole will settle down to a stationary state eventually, such as the axisymmetric Kerr solution. If the exterior settles down to the Kerr metric, it is expected that due to continuity there will be a region of the interior near the horizon that should be close to the interior Kerr metric despite what the interior of the black hole settles down to. Within the black hole event horizon the Kerr metric has an inner horizon which is also a null surface. However, this inner horizon is unstable, and so for a spacetime that begins close to the Kerr metric the inner horizon may be something else, possibly even a singularity \cite{MassInflation}. Such a singularity is believed to maintain the inner horizon's character as a null surface, and this has been reinforced by a variety of analytic arguments, mathematical results, and numerical simulations\footnote{Note that a black hole with charge but no spin is described by the Reissner-Nordstrom metric, which is spherically symmetric like Schwarzschild but has an unstable inner horizon like Kerr.} \cite{Ashtekar,AshtekarKrishnan}.

These arguments support the notion that at later times the horizon is smooth and unique, and in principle might be identified by algebraic/geometrical conditions. Motivated by the  results for stationary horizons and NEHs, we assume that there is geometrically defined unique, locally determinable, smooth (dynamical) horizon for which the  curvature tensor is algebraically special due to the vanishing of the  expansion of a preferred null congruence which shields all other horizons, and identifies the region of interest. These horizons can be identified and located by SPIs, which are gauge invariant (and, in particular, are not dependent on spacetime foliations). This will not necessarily work in all possible cases but we do expect it to work in generic physical collapse, black hole coalescences, and exact black hole solutions. It is possible that the invariants may also vanish at fixed points of any isometries and along any axes of symmetry, and hence do not specify the horizon completely. However, we expect that identifying a smooth surface for physical situations is always possible, and that as we follow this unique, smooth surface back in time (during the physics of collapse or merger) this surface may suffer a bifurcation and may no longer unique or smooth (or even differentiable). 

%
%

\subsection{The Geometric Horizon Detection Conjectures}

For the black holes we have considered here, the horizon is always more algebraically special than other regions of spacetime, and if the Riemann tensor is of algebraic type {\bf II/D}, then so are the Ricci and the Weyl tensor in the same frame. To state the conjectures, we will say a tensor ${\bf T}$ is {\em $n^{th}$-order algebraically special} if ${\bf T}$ and all covariant derivatives of ${\bf T}$ up to order $n$ are of algebraic type {\bf II} or more special. 
  
 
\paragraph{Conjecture Part I:}
{\em{If the whole spacetime is zeroth-order algebraically general, then on the horizon the spacetime is algebraically special, and this can be identified using SPIs.}}
\vspace{4 mm}

This part of the conjecture is more practical, and will conceivably be of use to numerical relativists who study the collapse or merger of real black holes, which are typically of general algebraic type away from the horizon. We expect that the conjecture might be qualitatively different for dynamical horizons, and that the vanishing condition might perhaps be replaced with a ``minimal condition''.  Since such a condition is defined in terms of an invariant quantity, it may (or may not) be possible to define it in a foliation independent manner.

\paragraph{Conjecture Part II:}
{\em{If the whole spacetime is zeroth-order algebraically special (and on the horizon the spacetime is thus also algebraically special) and if the  whole spacetime has an algebraically general first order covariant derivative of the Riemann tensor, $R_{abcd;e}$, then on the horizon $R_{abcd;e}$ will be  algebraically special and we can identify this surface using SPIs.}} 
\vspace{3 mm}

This can be repeated for higher covariant derivatives of the Riemann tensor if necessary. This part of the conjecture is more theoretical, necessitating analytic calculations, and can be applied to exact solutions. It may not be desirable to require in a general spacetime that the covariant derivatives be algebraically special (i.e., of type  {\bf II} or {\bf D}) to each order (i.e., of type $D^k$) on the black hole horizon as this might be too restrictive, since these spacetimes are necessarily degenerate Kundt or locally homogeneous \cite{Kundt,typedk}. 

\newpage

\section{Discussion}

In this paper we have proven that the horizon for a stationary black hole can be detected by the vanishing of discriminant SPIs. Furthermore, by considering NEHs (which generalize Killing horizons) we have shown that the Ricci and Weyl tensors and their covariant derivatives become more special in terms of their algebraic type on this surface and consequently may be identified in terms of the vanishing of discriminant SPIs. Motivated by this result we have introduced the concept of a geometric horizon, which is a surface defined by conditions on (i.e., vanishing of) certain SPIs. In the case of a NEH, the discriminant SPIs vanish on this particular surface, implying that it is a geometric horizon. We then considered the class of imploding spherically symmetric metrics, which contain simple dynamical black hole solutions describing spherically symmetric matter falling towards a black hole, such as the Vaidya and LTB solutions, and showed that the unique spherically symmetric dynamical horizon can be defined by the vanishing of a SPI, and hence is foliation independent. We are currently investigating how the geometric horizon conjecture can be applied to dynamical horizons in more general spacetimes.

However, in physical problems with dynamical evolution, the horizon might not be unique or may not exist at all, and amendments to the conjecture may be necessary (e.g., replace vanishing conditions with minimum conditions). In order to make the definition of a geometric horizon more precise we need to focus on physical black hole solutions, and in order to prove definitive results we must append some physical conditions to the definition such as, for example, energy conditions, a particular theory of gravity, and some asymptotic conditions. The  geometric horizon conjectures are expected to apply in higher dimensions \cite{GANG}; however, we are primarily interested in applications in 4D. In higher dimensions, by algebraically special we mean algebraic type {\bf II} or more special, and by algebraically more general we mean of algebraic type more general than type {\bf II}.

In 4D, the uniqueness property of the Kerr solution has led to the belief that the stationary equilibrium limit reached in the evolution of an isolated system undergoing gravitational collapse (i.e., once the gravitational wave content has been radiated away) will eventually settle down to the exterior region of a Kerr geometry. This is believed, despite the fact that the canonical definition of a future event horizon refers to future null infinity and requires the existence of a global time translational Killing vector field for static spacetimes and an asymptotic time-translational Killing vector field at spacelike infinity for stationary spacetimes.  A number of {\em local and invariant} criteria have been proposed which act as {\em quality factors} to measure the deviation of a given stationary spacetime from the Kerr spacetime \cite{GPS}. Such criteria are formulated in terms of scalar quantities which  are not SPIs, and are only determined for spacetimes for which there exists a timelike Killing vector.  It is plausible that the geometric horizon could be employed by specifying initial data in numerical simulations of a radiating isolated system to test whether in the asymptotic regime it is close to the Kerr solution. In particular, by treating such a system at late times as a perturbed Kerr metric, if there is a well-posed initial-value problem we could feasibly integrate back and find a unique hypersurface. 

In numerical relativity the generation of gravitational waveform templates for gravitational wave data analysis has been helpful in understanding the behaviour of black holes. The calculation of gravitational wave signals in the theoretical modelling of $D = 4$ sources in the framework of GR is well understood \cite{T, Bishop}, and is one of the most important diagnostic tools for studying the strong-field dynamics of compact objects in 4D spacetimes, as illustrated by the LIGO detection of GW150914 \cite{LIGO}.  The wave extraction techniques presently used in numerical simulations of astrophysical gravitational wave sources can be classified using the Weyl scalars from the NP formalism in 4D, while in higher dimensions a numerical implementation using the  formalism developed by  Godazgar and Reall \cite{Godazgar} and a generalisation of the NP formalism for which the Weyl tensor in higher dimensions is decomposed allows for gravitational wave extraction from numerical simulations of rapidly spinning objects in higher dimensions \cite{cook}. To employ the NP approach, the frame must be completely fixed  so that the resulting ``Weyl scalars'' are Cartan scalars. This has not yet been implemented for higher order (derivative) SPIs using the NP formalism.  

Of course, if our goal is to provide results that could be useful to numerical relativists, computability is an important issue.  And in this regard Cartan invariants have an  advantage over the related SPIs \cite{GANG,AM2016}. If a numerical calculation has already been done, and a postiori we want to investigate where a particular horizon or surface is located in the numerical solution, then for most computations involving Cartan invariants would be difficult (since Cartan invariants depend in principle on a particular choice of frame). However, we could use the same frame used in numerical simulations once the Cartan algorithm has been implemented and relate the resulting frame to the frame used in numerical simulations. It is possible a theory of approximate equivalence can be developed, giving a topology on the space of metrics. 

Whether or not our conjectures are useful will have to be evaluated. We have attempted to support the conjectures with some analysis and the study of some practical examples, but further work is required and perhaps additional refinement of the conjectures will be necessary. In a sense the conjectures refer to ``peeling'' properties (of the geometrical curvature) close to the horizon; i.e., the curvature is of algebraically special type {\bf II} close to the horizon. It is possible that as gravitational wave modes (of algebraic types {\bf III} and {\bf N}) dissipate to infinity, the {\em{horizon eventually settles down to be type {\bf D}}} under some reasonable asymptotic conditions.

In future work we will investigate the geometric horizon conjectures for exact solutions describing dynamical black holes such as black hole mergers and in-falling matter into a black hole. To examine the dynamical horizon arising from in-falling matter into a black hole, we will study the quasi-spherical Szekeres dust models, which represent collapsing shells of matter, but are not spherically symmetric like the LTB solutions \cite{Szekeres}. Using a generic metric admitting a horizon structure \cite{booth}, we can study the behaviour of the discriminant SPIs near an arbitrary dynamical horizon. We will also provide a comprehensive analysis of the geometric surfaces discussed in subsection \ref{subsec:FW} for  the Kastor-Traschen solution with two or three charge-equal-to-mass Reissner-Nordstr\"{o}m black holes \cite{KT}. The conjectures can be further tested for the exact solution arising from the binary black hole merger in the extreme-mass ratio \cite{Emparan2016}. Finally, we would like to consider the possibility of imposing initial data on a geometric horizon, and whether a well-posed initial value problem can be implemented for numerical simulations of dynamical black holes.

\newpage 
\section{ Appendix A: 4D Example} \label{4Dexa}

While the discriminant analysis provides necessary conditions for the curvature tensor or its covariant derivatives to be of algebraic type {\bf II/D}, they are difficult to compute in practice. The frame approach provides a direct confirmation of the type {\bf II/D} property of the horizon by explicitly constructing the frame in which the Ricci or Weyl tensors become algebraically special on the horizon. We employ the Cartan algorithm to determine the relevant frame \cite{GANG, CKAHD}. For a black hole solution, a frame may be chosen  where the curvature tensor or its covariant derivatives are of algebraic type {\bf I} outside of the horizon. Relative to this frame, these tensors  will become type {\bf II} or more algebraically special on the horizon. 

To illustrate this, we review the Kerr-Newman-NUT-(Anti)-de Sitter solution which admits stationary horizons \cite{GANG}. Since a stationary horizon is a special case of a WIH, we will show that the covariant derivative of the Ricci and Weyl tensors are of algebraic type {\bf II/D} on the horizon using the spinor formalism. The b.w. of the components of the first covariant derivative of the Ricci and Weyl spinors for non-vacuum type {\bf D} spacetimes are:  

\small
\begin{align} \begin{array}{c|c|c|c|c}
 -2 & -1 & 0 & +1 & +2 \\ \hline 
 & D \Psi_{20'} & D \Psi_{21'},~ D \Psi_{30'} & D_{31'} & \\ [0.5 mm] \hline
 D \Phi_{10'00'}  & D \Phi_{11'00'},~D \Phi_{10'01'} & D \Phi_{11'01'},~ D \Phi_{11'10'} & D \Phi_{11'11'},~ D \Phi_{12'01'} & D \Phi_{12'11'} \\   
 D \Phi_{01'00'} & D \Phi_{10'10'},~D \Phi_{01'01'} & D \Phi_{10'11'},~ D \Phi_{12'00'} & D \Phi_{12'10},~ D \Phi_{21'01'} & D \Phi_{21'11'} \\ 
 &  D \Phi_{01'10'} & D \Phi_{01'11'},~ D \Phi_{21'00'} & D \Phi_{21'10'} & \end{array} \nonumber
\end{align}
\normalsize

\subsection{Kerr-Newman-NUT-(Anti) de Sitter metric}

The 4D Kerr-Newman-NUT-(Anti)-de Sitter metric is given by \cite{PlebDem1976, Griffiths2005, Griffiths2007}:
\beq ds^2 &=& -\frac{Q}{\tilde{\rho}^2} \left[ dt - \left(a\sin^2 \theta + 4 l \sin^2 \frac{\theta}{2}\right) d\phi \right]^2 \label{KNNAds}\\ 
&& +  \frac{\tilde{\rho}^2}{Q}dr^2 + \frac{P}{\tilde{\rho}^2}\left[ a dt - \left(r^2 +(a+l)^2\right) d\phi \right]^2  + \frac{\tilde{\rho}^2}{P} \sin^2 \theta d\theta^2, \nonumber \eeq
where $\tilde{\rho}^2 \equiv \tilde{\rho}(r, \theta)$, $P\equiv P(\theta)$ and $Q \equiv Q(r)$ are functions of $\cos \theta$ and $r$ containing the parameters   $m,e,g, a, l,$ and $\Lambda$ which are, respectively, the mass, the electric and magnetic charges, a rotation parameter, a NUT parameter in a de Sitter or anti-de Sitter background and the cosmological constant, and where: 

\beq
\tilde{\rho}^2 &=& r^2 + (l + a\cos \theta)^2 \label{KNA1}\\ 
P &=& \sin^2 \theta( 1 + \frac43 \Lambda a l \cos \theta + \frac13 \Lambda a^2 \cos^2 \theta) \\
Q &=& (a^2 - l^2 + e^2 + g^2) - 2mr + r^2 \label{KNA3}\\ &&  - \Lambda[(a^2-l^2)l^2 + (\frac13 a^2 + 2l^2)r^2 + \frac13 r^4]. \nonumber \eeq

The locations of the event horizon for this solution are denoted by the roots of $Q(r)$. The expressions for the discriminant SPIs are very large polynomials in $\cos \theta$ and $r$, and it is not clear if they can be factored into irreducible polynomials. Cartan invariants on the other hand allow for the construction of simpler candidates for detecting the horizon.

Defining: 
\beq & t^0 = \frac{\sqrt{Q}}{\rho} \left[ dt - \left(a\sin^2 \theta + 4 l \sin^2 \frac{\theta}{2}\right) d\phi \right],~~t^1 = \frac{\rho}{\sqrt{Q}}dr, & \nonumber \\
&~~t^2 = \frac{\sqrt{P}}{\rho}\left[ a dt - \left(r^2 +(a+l)^2\right) d\phi \right],~~t^3 = \frac{\rho}{\sqrt{P}} \sin \theta d\theta, \nonumber \eeq
the null frame we will work in is:
\begin{eqnarray}
\ell = \frac{t^0 - t^1}{\sqrt{2}},~~ n = \frac{t^0+t^1}{\sqrt{2}},~~m = \frac{t^2-i t^3}{\sqrt{2}},~~\bar{m} = \frac{t^2+i t^3}{\sqrt{2}}. \eeq
The only non-zero NP curvature scalars are $\Lambda_{NP} = \frac16 \Lambda$, 
\beq
 \Psi_{2} &=& -\left( m + i \left( \frac13(a^2-4l^2)l\Lambda \right) \right) \left( \frac{1}{ir +l + a \cos \theta} \right)^3 \nonumber \\  &&+ (e^2 + g^2) \left( \frac{1}{ir +l + a \cos \theta} \right)^3 \left( \frac{1}{-ir +l + a \cos \theta} \right), \label{A6} \eeq
\noindent and
\beq & \Phi_{11} =  \frac12 \frac{e^2+g^2}{| a \cos \theta + l + ir |^2 }. & \label{A7}
\eeq
At zeroth order of the Cartan algorithm we obtain as our Cartan invariants the real and imaginary parts of $\Psi_2$, which are functionally independent ($t_0 =2$). The zeroth order isotropy group consists of boosts and spins (dim $H_0$ = 2).

At the first iteration of the algorithm, we get that the non-zero components of the  covariant derivative of $\Psi$ are: 

\beq D^1 \Psi_{20'} = D \Psi_2 , \quad 
D^1 \Psi_{30'} = 3 \tau \Psi_2 , \quad 
D^1 \Psi_{21'} = \delta \Psi_2 , \quad
D^1 \Psi_{31'} = -3 \rho \Psi_2,  \eeq

\noindent where we have used $\rho = \mu$ and $\pi = -\tau$. The components of the covariant derivative of $\Phi$ are:

\beq & D^1\Phi_{11'00'} = D \Phi_{11},~~ D^1 \Phi_{11'11'} = \Delta \Phi_{11}, & \nonumber \\ 
& D^1 \Phi_{11'01'} = \overline{D^1 \Phi_{11'01'}} = \delta \Phi_{11},& \nonumber \\
& D^1 \Phi_{10'01'} = D^1 \Phi_{12'10'} = \overline{D^1 \Phi_{01'01'}} = \overline{D^1 \Phi_{21'10'}} = \bar{\rho} \Phi_{11},  & \nonumber \\
& D^1 \Phi_{10'11'} = D^1 \Phi_{12'00'} = \overline{ D^1 \Phi_{01'11'}} = \overline{ D^1 \Phi_{21'00'}} = \bar{\tau} \Phi_{11}. & \nonumber \eeq

\noindent These components may be simplified using the Bianchi identities:

\beq &D \Psi_2 = 3 \rho \Psi_2 + 2 \rho \Phi_{11},~~ \Delta \Psi_2 = - 3 \mu \Psi_2 - 2 \mu \Phi_{11}, & \nonumber \\ 
& \delta \Psi_2 = -3\tau \Psi_2 + 2 \tau \Phi_{11},~~\bar{\delta} \Psi_2 = 3\pi \Psi_2 - 2 \pi \Phi_{11}. & \nonumber \\
& D\Phi_{11} = 2(\rho + \bar{\rho}) \Phi_{11},~~ \delta \Phi_{11} = 2(\tau - \bar{\pi}) \Phi_{11},~~ \Delta \Phi_{11} = 2(\mu + \bar{\mu}) \Phi_{11}.& \nonumber \eeq

\noindent Computing the spin-coefficient $\rho = \mu$ we obtain:

\beq \rho = \mu = - \frac12  \frac{\sqrt{Q} [r + i(a\cos\theta + l)]}{\tilde{\rho} | a\cos\theta + l + i r|^2 }.  \eeq

%
\noindent The location of the event horizons are obtained from the roots of $Q(r)$. It is clear that consequently all b.w. -1 and +1 terms vanish on the horizon. This implies that the first covariant derivative of the Weyl and Ricci tensors are of type {\bf D}. 

Using the formulae $(4.3a')-(4.3i')$ in \cite{ref1,Stewart} we can calculate the second covariant derivatives of the Weyl and Ricci spinors. It may be shown that on the horizon all positive and negative b.w. terms vanish, implying that the second covariant derivative of the Ricci and Weyl spinors are of type {\bf D} as well.

We note that it is possible to apply Theorem \ref{PSthrm} or Theorem \ref{DPSthrm} to construct a SPI that will detect the horizon regardless of the frame \cite{GANG}. For the Kerr-Newmann-NUT-(Anti)-de Sitter metric the cohomogeneity is two, and so we can combine the zeroth order invariants $I_1 = C_{abcd} C^{abcd}$ and $I_2 = C^{*abcd} C_{abcd}$ to produce $\Psi_2$ and $\bar{\Psi}_2$, and compute the norm of ${\bf W} =|| d\Psi_2 \wedge d \bar{\Psi}_2 ||^2$ which is a degree eight, first order SPI that detects the horizon (see below):  
\vspace{ 3 mm}

\beq \frac{9 | a\cos\theta + l + i r|^6}{2} {\bf W} =   - Q (a^2 \sin^2 \theta ) (\Lambda( a^2 \cos^2 \theta + 4 a l \cos \theta) + 3) |\Psi_2|^2 \label{A10} .  \eeq

\newpage 

\section{Appendix B: Relationship between the Page-Shoom invariants and SPIs for the Kerr-Newman-NUT-(Anti)-de Sitter Spacetime}

The Page-Shoom invariants are defined in Theorem \ref{PSthrm} as the norm of the wedge product of the exterior derivatives of $n$ SPIs. We will show that the Page-Shoom invariants and the discriminants for the covariant derivative of the Weyl tensor, $C_{abcd;e}$, share a common zero in the stationary Kerr-Newman-NUT-(Anti)-de Sitter black hole solution. Due to the extensive size of these invariants relative to a coordinate system, we will use the NP formalism to show this property relative to the frame given in \cite{GANG}.   

The 4D Kerr-Newman-NUT-(Anti)-de Sitter metric is given by the line element \eqref{KNNAds} with metric functions defined in (\ref{KNA1}-\ref{KNA3}) containing the parameters $m,e,g, a, l,$ and $\Lambda$ which are, respectively, mass, the electric and magnetic charges, the rotation parameter, the NUT parameter in a de Sitter or anti-de Sitter background, and the cosmological constant. The surfaces for which $Q(r)=0$ denote the horizons of this solution \cite{Griffiths2005}.

%
%

As the cohomogeneity of the Kerr-Newman-NUT-(Anti)-de Sitter spacetime is two, any Page-Shoom invariant is then the norm of the wedge product of the exterior derivatives of two SPIs. We will work with the simplest Page-Shoom invariant, constructed from linear combinations of the zeroth order SPIs defined in terms of $I_1$ and $I_2$ in \eqref{I1} and \eqref{I2}:

\beq \Psi_2^2 = \frac{( I_1 - I_2)}{48},~~\bar{\Psi}_2^2 = \frac{( I_1 + I_2)}{48}, \nonumber \eeq

\noindent where $\Psi_2$ is the only non-zero NP Weyl scalar. From the perspective of the Cartan algorithm, it is sufficient to study the roots of the resulting Page-Shoom invariant, as any other SPI can be expressed in terms of the functionally independent Cartan invariants $\Psi_2$ and its complex conjugate. For any two functionally independent SPIs, the Page-Shoom invariant is: 
\beq {\bf W}' = ||d J_1 \wedge d J_2||^2,~~ J_i = F_i(\Psi_2, \bar{\Psi}_2),~i=1,2. \nonumber \eeq 
\noindent Applying the chain rule will ensure that this can be expressed in terms of the frame derivatives of $\Psi_2$ and $\bar{\Psi}_2$:

\beq {\bf W}' \propto ||d \Psi_2 \wedge d \bar{\Psi}_2||^2 = {\bf W}. \nonumber \eeq

Constructing the wedge product, $d\Psi_2 \wedge d \bar{\Psi}_2$ and using the Bianchi identities \cite{GANG}, the norm is then:

\beq {\bf W} &=& - 16 |3\mu \Psi_2 + 2\mu \Phi_{11}|^2 | 3  \pi \Psi_2 - 2 \pi \Phi_{11}|^2 + \nonumber \\
&&  16 \mathfrak{R} [ (\mu \Psi_2 + 2 \mu \Phi_{11})^2 (3 \bar{\pi} \bar{\Psi}_2 - 2 \bar{\pi} \Phi_{11})^2 ], \label{WeqnSym} \eeq

\noindent where the frame dependent identity $\tau = -\pi$ has been used and $\Phi_{11}$ is the only non-zero NP Ricci scalar. Relative to the coordinate system, $\Psi_2$ and $\Phi_{11}$ are given by equations \eqref{A6} and \eqref{A7}, 
%
%
%
%
\noindent and the relevant spin-coefficients are: 

\begin{eqnarray}
&\mu = \rho = - \frac12  \frac{\sqrt{Q} [r + i(a\cos\theta + l)]}{\tilde{\rho} | a\cos\theta + l + i r|^2 }, & \nonumber \\ 
& \tau = -\pi = -\frac{1}{\sqrt{2}} \frac{a (ir - a \cos \theta - l) \sqrt{1 + \frac43 \Lambda a l \cos \theta + \frac13 \Lambda a^2 \cos^2 \theta} }{\tilde{\rho} (\cos^2 \theta a^2 + 2 a l \cos \theta + l^2 + r^2)}. & \nonumber \end{eqnarray}

\noindent It is clear that $\rho =\mu$ vanishes on the stationary horizons, and so the invariant ${\bf W}$ must vanish there as well. Indeed, from equation \eqref{A10} the invariant may be expressed as, 
\beq {\bf W} = Q F(r, \cos \theta), \label{WeqnKerr} \eeq
\noindent where $F$ is a ratio of two polynomials in terms of $r$ and $\cos \theta$, arising from the coordinate form of the related spin coefficient $\mu$ and $\pi$, and the curvature scalars \eqref{WeqnSym}.

The discriminants related to the covariant derivative of the Weyl tensor will also detect the horizon since $C_{abcd;e}$ will be of algebraic type {\bf II/D} on the horizon. We will examine the first discriminant, ${^1}D$, given in equation \eqref{rictypeii212}, but the result carries over to the second discriminant, ${^2}D$, whose form is similar to \eqref{rictypeii212} with invariants defined in \eqref{riccidef1}. Both of these invariants  act as necessary conditions for $C_{abcd;e}$ to be of type {\bf II/D}. 

The first discriminant is constructed  using the trace-free symmetric tensor ${^1}S^{a}_{~b}={^1}T^{a}_{~b}$ defined in \eqref{T1a} and \eqref{T1b}. 
The necessary condition for this operator to be of type {\bf II/D} is \eqref{rictypeii212}:
\begin{equation}
{^1} D =  -{^1}s_3^2(4 {^1}s_2^3 - 144 {^1}s_2 {^1}s_4 + 27{^1}s_3^2) + {^1}s_4(16 {^1}s_2 - 128 {^1}s_4 {^1}s_2^2  + 256 {^1}s_4^2) = 0  \end{equation} 
\noindent where ${^1}s_2, {^1}s_3,$ and ${^1}s_4$ are defined in \eqref{riccidef12}. It may be shown by direct computation that ${^1}D$ shares a common root with ${\bf W}$: 
\beq {^1} D =  Q^6 G(r, \cos \theta), \nonumber \eeq
\noindent where $G$ is a ratio of polynomial functions. 

The fact that ${\bf W}$ and ${^1}{\bf D}$ share roots is in itself unsurprising as they both vanish on the horizon. Through direct knowledge of the frame in which the covariant derivative of the Weyl tensor, $C_{abcd;e}$ is of type {\bf II/D} on the horizon \cite{GANG}, we have shown for the Kerr-Newman-NUT-(Anti)-de Sitter spacetime the invariant ${\bf W}$ and the discriminant for the Weyl curvature tensor, ${\bf {^1} D}$, vanish when $ C_{abcd;e}$ is of type {\bf II}/{\bf D}. Therefore, for the Kerr-Newman-NUT-(Anti)-de Sitter solution, the invariant {\bf W} provides an SPI of lower order than ${^1}D$ giving necessary conditions for when the $C_{abcd;e}$  is of type {\bf II/D}. 


\newpage


\section{Appendix C: Geometric Identities} \label{geomid}

\subsection{FKWC Bases in 4D}

\gls{fkwc} \cite{Fulling} systematically expanded the Riemann polynomials encountered in calculations on standard bases constructed from group theoretical considerations. Bases for scalar Riemann polynomials of order eight or less in the derivatives of the metric tensor and for tensorial Riemann polynomials of order six or less were presented.  The FKWC-bases were modified to be dimensionally independent \cite{DecaniniFolacci2007}, allowing for irreducible expressions \cite{CH2}.

The geometrical identities utilized to
eliminate ``spurious" Riemann monomials of order up to six by
expressing them in terms of elements of FKWC-bases have been derived using: 
\begin{enumerate}
\item The commutation of covariant derivatives in the form
\begin{eqnarray}\label{CD_NabNabTensor}
&  & T^{a \dots}_{~~~~ b \dots ;c d} -
T^{a \dots}_{~~~~ b \dots ;d c} = \nonumber \\
&   & \qquad  + R^{a}_{~ e d c} T^{e
\dots}_{~~~~ b \dots } + \dots -
R^{e}_{~ b d c} T^{a
\dots}_{~~~~ e \dots} - \dots.
\end{eqnarray}
\item The ``symmetry" properties of the Ricci and the
Riemann tensors (pair symmetry, antisymmetry, cyclic symmetry)

\begin{eqnarray}
& & R_{ab}=R_{ba},  \label{SymRicci} \\
& & R_{abcd}=R_{cdab},  \label{SymRiemann_1} \\
& & R_{abcd}=-R_{bacd} \quad \mathrm{and} \quad R_{abcd}=-R_{abdc},
\label{SymRiemann_2} \\
& & R_{abcd}+R_{adbc}+R_{acdb}=0.  \label{SymRiemann_Cycl}
\end{eqnarray}
\item The differential Bianchi identity and the identities obtained by contraction of index pairs

\begin{eqnarray}
& & R_{abcd;e}+R_{abec;d}+R_{abde;c}=0, \label{AppBianchi_1} \\
& & R_{ abcd}^{~~~~~;d}= R_{ac;b}-R_{bc;a},  \label{AppBianchi_2a} \\
& &R_{ ab}^{~~;b}= (1/2)  R_{;a}. \label{AppBianchi_2b}
\end{eqnarray}
\end{enumerate}

\subsubsection{Geometric identities}

To refer to equations in \cite{DecaniniFolacci2007}, we will denote equation numbers as [DF 1] etc. Using the ``symmetry" properties of the Ricci and the Riemann tensors (\ref{SymRicci})-(\ref{SymRiemann_Cycl}), the Bianchi identity and its consequences (\ref{AppBianchi_1})-(\ref{AppBianchi_2b}), the commutation of covariant derivatives  (\ref{CD_NabNabTensor}), along with several geometric zeroth order identities for the curvature tensor and Ricci tensor (which will not be displayed), the following geometric identity is obtained [DF 7]:
\begin{eqnarray}\label{Simpl_R0_O6_1}
&  R_{pqrs} \Box R^{pqrs} = 4 \, R_{pq ; rs}R^{prqs}
+2 \, R_{pq}R^p_{~ rst} R^{qrst } \nonumber \\
&  \qquad - \, R_{pqrs}R^{pq}_{~~ uv} R^{rsuv} - 4\,
R_{prqs} R^{p ~ q}_{~ u ~ v} R^{r u s
v}.
\end{eqnarray}

Particular results can be written in terms of the Weyl tensor by substituting out trace parts and working with identities for the Ricci tensor. Or we can consider the vacuum case with vanishing Ricci tensor in applications, which leads to an extra constraint on $C_{ab[cd;e]}$ which will simplify the SPIs.

For the SPIs involving the first and second covariant derivatives of the Riemann tensor we may employ the geometrical identities [DF 8-10, 13-15, 19, 20-29, 32, 33-34],
including:

\begin{equation}\label{Simpl_R1_O5_3}
 R^{pqrs}R_{pqra;s}=\frac{1}{2} \, R^{pqrs}R_{pqrs;a},
\end{equation}

\begin{equation}\label{Simpl_R2_O6_6}
R^{pqrs} R_{pqr a;sb}= \frac{1}{2} \, R^{pqrs} R_{pqrs ; a b},   
\end{equation}

\begin{eqnarray}\label{Simpl_R3_O5_1}
& R^{pqr}_{~~~ a}R_{rqpb;c} =\frac{1}{2} \,
R^{pqr}_{~~~ a} R_{pqrb;c}  ,
\end{eqnarray}
and the geometrical identities [DF 38-39, 40-47] and [DF 48-55] including, in particular:

\begin{eqnarray}\label{Simpl_R4_O6_NEW1}
& &  \Box R_{abcd} =  R_{ac;bd} - R_{bc;ad} -R_{ad;bc} + R_{bd;ac} \nonumber \\
& & \qquad + R^p_{~ c}R_{pdab} - R^p_{~
d}R_{pcab}\nonumber \\
& & \qquad +2\, R^{p ~ q}_{~ a ~
d}R_{pbqc} - 2\, R^{p ~ q}_{~ a ~
c}R_{pbqd} \nonumber \\
& & \qquad - 2\, R^{p ~ q}_{~ a ~
b}R_{pcqd} + 2\, R^{p ~ q}_{~ a ~
b}R_{pdqc}.
\end{eqnarray}

\noindent The geometrical identity (\ref{Simpl_R4_O6_NEW1}) in vacuum becomes:

\begin{eqnarray}\label{Simpl_R4_O6_NEW1VAC}
& &  \Box C_{abcd} =  2\, C^{p ~ q}_{~ a ~d}C_{pbqc} - 2\, C^{p ~ q}_{~ a ~c}C_{pbqd} \nonumber \\
& & \qquad - 2\, C^{p ~ q}_{~ a ~b}C_{pcqd} + 2\, C^{p ~ q}_{~ a ~ b}C_{pdqc},
\end{eqnarray}
whence (\ref{Simpl_R0_O6_1}) reduces to (in vacuum):
\begin{eqnarray}\label{Simpl_R0_O6_1VAC}
&  C_{pqrs} \Box C^{pqrs} =  - \, C_{pqrs}C^{pq}_{~~ uv} C^{rsuv} - 4\,
C_{prqs} C^{p ~ q}_{~ u ~ v} C^{r u s v}.
\end{eqnarray}

\noindent For black holes with a vacuum exterior, these identities may help simplify the related discriminants.

\subsubsection{Conserved tensor quantities}

In addition, the results of \cite{DecaniniFolacci2007} provide irreducible expressions for the metric variations (i.e., for  the functional derivatives with respect to the metric tensor) of the action terms associated with the 17 basis elements for the so-called FKWC curvature invariants of order six. For every scalar $S$ that appears in the action (gravitational Lagrangian) we obtain by variation (since these geometric tensors 
are automatically conserved due to the invariance of the actions under spacetime diffeomorphisms) a symmetric conserved rank-2 tensor, e.g., $S_{ab}$  with $S_{ab}^{~~;b}=0$ (from which syzygys can be obtained, such as $S_{ab}^{~~;b} S^{a~~;c}_{~c} =0$). As above, to refer to the $n^{th}$ equation in \cite{CH2} we write [CH n].

Choosing $S$ to be the Ricci scalar, $R$, we obtain

\begin{equation}\label{einstein} 
S_{ab}=R_{ab} -  \frac{1}{2}Rg_{ab},  
\end{equation}
yielding the usual contracted Bianchi identity.
Choosing $S=R_{pqrs}R^{pquv} R^{rs}_{\phantom{rs} uv}$ we obtain [CH 6]:
\begin{eqnarray}\label{FD_06_9} 
S_{ab} &= 24\,  R^{p \phantom{(a}; qr}_{\phantom{p } (a} R_{|pqr|  b)} 
-12\, R^p_{\phantom{p} a;q} R_{p b}^{\phantom{p b};q} + 12\, 
R^p_{\phantom{p} a;q} R^q_{\phantom{q} b;p}    \nonumber \\ 
& \quad + 3\,  
R^{pqrs}_{\phantom{pqrs};a} R_{pqrs ; b }  -6\, 
R^{pqr}_{\phantom{pqr}a;s}R_{pqr b}^{\phantom{pqr b};s} 
-6\, R^{pq}R^{rs}_{\phantom{rs} pa}R_{rs q b}   \nonumber \\ 
& \quad +12\, R^{p r q 
s}R^t_{\phantom{t} pq a}R_{t rs b}  
+ \frac{1}{2}g_{ab} [R_{pqrs}R^{pquv} R^{rs}_{\phantom{rs}uv} ]. 
\end{eqnarray} 
In vacuum this becomes:

\begin{eqnarray}\label{FD_06_9a} 
S_{ab}  = 3\, C^{pqrs}_{\phantom{pqrs};a} C_{pqrs ; b }   
-6\,C^{pqr}_{\phantom{pqr}a;s}C_{pqr b}^{\phantom{pqr b};s} +12\, C^{p r q 
s}C^t_{\phantom{t} pq a}C_{t rs b}   \nonumber \\ 
+\frac{1}{2} g_{ab} [C_{pqrs}C^{pquv} C^{rs}_{\phantom{rs} 
uv}].
\end{eqnarray}

Also, in vacuum, by varying  
$S= C_{pqrs} C^{pqrs}$ we obtain a  
symmetric conserved rank-2 tensor which depends on quadratic polynomial 
contractions of the Weyl tensor [CH 3]:
\begin{equation}\label{C1} 
S_{ab}=C_{almn}C_{b}^{~lmn}+ C_{blmn}C_{a}^{~lmn}. 
\end{equation}.

\subsection{Syzygies on the Horizon}
If an invariant $I$ vanishes on a surface, then the pullback of $\nabla I$ will also vanish on the surface. In principle, we can apply this to construct syzygies using the covariant derivatives of the Ricci and Weyl tensors that hold on the horizon. For example, suppose that a necessary algebraic condition is of the form $F(^{1}I, ^{2}I)=0$. Then by differentiation (covariant differentiation is simply partial differentiation here), we obtain $(F_1) ^{1}I_{,a} + (F_2) ^{2}I_{,a} = 0$ (where $F_i$ denotes differentiation with respect to $^{i}I$),  whence by contraction with  $^{1}I_{,a},^{2}I_{,a}$ we obtain scalar identities; in particular, by addition, we obtain:
\begin{equation}
(F_1^2) {^{1}}I_{,a} {^{1}}I^{,a}  =   (F_2^2) {^{2}}I_{,a}{^{2}}I^{,a}.
\end{equation}

\newpage

\section{Appendix D: Necessary Type {\bf II/D} Conditions for the Weyl Tensor Using Discriminants} \label{AppendixOuch}

Treating the Weyl tensor as a trace-free operator on the vector space of bivectors, we can apply the discriminant
analysis to produce necessary conditions for the operator to be of type {\bf II/D}, using the following traces of the powers
of the Weyl operator (in terms of the $W_i$ defined earlier): 
\beq\tilde{W}_2 = 8 W_2,
\tilde{W}_3  = 16 W_3, \tilde{W}_4 = 32 W_4, \tilde{W}_6 = 128 W_6, \nonumber \eeq
\noindent  and 
 \beq \tilde{W}_5 \equiv C_{abcd}C^{cd}_{~~pq}C^{pq}_{~~r s}C^{rs}_{~~tu}C^{tuab}. \nonumber \eeq
 
\noindent The necessary conditions that the Weyl operator is of type {\bf II/D} are ${^5} D_6 = {^6}D_5 =0$, where these are given by the following equations: 
\beq {^6}D_5 &=& 6\,{{ \tilde{W}_5}}^{4}-\frac12 \,{{ \tilde{W}_4}}^{5}+{\frac {251}{60}}\,{{ \tilde{W}_2}}^{4
}{ \tilde{W}_3}\,{ \tilde{W}_4}\,{ \tilde{W}_5}-{\frac {41}{10}}\,{{ \tilde{W}_2}}^{3}{ \tilde{W}_3}
\,{ \tilde{W}_5}\,{ \tilde{W}_6}+\frac12\,{{ \tilde{W}_2}}^{2}{{ \tilde{W}_3}}^{2}{ \tilde{W}_4}\,{\it 
\tilde{W}_6} \nonumber \\ 
&&-{\frac {99}{10}}\,{{ \tilde{W}_2}}^{2}{ \tilde{W}_3}\,{{ \tilde{W}_4}}^{2}{ \tilde{W}_5}+\frac25\,{ \tilde{W}_2}\,{{ \tilde{W}_3}}^{3}{ \tilde{W}_4}\,{ \tilde{W}_5}+{\frac {49}{576}}\,{{
 \tilde{W}_2}}^{7}{{ \tilde{W}_3}}^{2}-\frac{1}{48}\,{{ \tilde{W}_2}}^{7}{ \tilde{W}_6}+\frac{1}{48}\,{{ \tilde{W}_2
}}^{6}{{ \tilde{W}_4}}^{2}\nonumber \\ 
&&+{\frac {17}{32}}\,{{ \tilde{W}_2}}^{4}{{ \tilde{W}_3}}^{4}+{
\frac {73}{200}}\,{{ \tilde{W}_2}}^{5}{{ \tilde{W}_5}}^{2}-\frac38 \,{{ \tilde{W}_2}}^{4}{{
 \tilde{W}_4}}^{3}-\frac29 \,{ \tilde{W}_2}\,{{ \tilde{W}_3}}^{6}-\frac23 \,{{ \tilde{W}_2}}^{4}{{ \tilde{W}_6
}}^{2}+\frac74 \,{{ \tilde{W}_2}}^{2}{{ \tilde{W}_4}}^{4}\nonumber \\ 
&&+{\frac {8}{15}}\,{{ \tilde{W}_3}}^{
5}{ \tilde{W}_5}+{\frac {9}{8}}\,{{ \tilde{W}_3}}^{4}{{ \tilde{W}_4}}^{2}-6\,{ \tilde{W}_2}\,{
{ \tilde{W}_6}}^{3}+6\,{{ \tilde{W}_4}}^{2}{{ \tilde{W}_6}}^{2}-{\frac {11}{30}}\,{{\it 
\tilde{W}_2}}^{6}{ \tilde{W}_3}\,{ \tilde{W}_5}-{\frac {263}{288}}\,{{ \tilde{W}_2}}^{5}{{ \tilde{W}_3}
}^{2}{ \tilde{W}_4}\nonumber \\ 
&&+\frac38 \,{{ \tilde{W}_2}}^{5}{ \tilde{W}_4}\,{ \tilde{W}_6}+{\frac {49}{72}}\,
{{ \tilde{W}_2}}^{4}{{ \tilde{W}_3}}^{2}{ \tilde{W}_6}-{\frac {313}{90}}\,{{ \tilde{W}_2}}^{3}
{{ \tilde{W}_3}}^{3}{ \tilde{W}_5}+{\frac {251}{144}}\,{{ \tilde{W}_2}}^{3}{{ \tilde{W}_3}}^{2
}{{ \tilde{W}_4}}^{2}-{\frac {25}{24}}\,{{ \tilde{W}_2}}^{2}{{ \tilde{W}_3}}^{4}{ \tilde{W}_4}
\nonumber \\ 
&&-{\frac {13}{12}}\,{{ \tilde{W}_2}}^{3}{{ \tilde{W}_4}}^{2}{ \tilde{W}_6}-{\frac {106}{
25}}\,{{ \tilde{W}_2}}^{3}{ \tilde{W}_4}\,{{ \tilde{W}_5}}^{2}+{\frac {256}{25}}\,{{\it 
\tilde{W}_2}}^{2}{{ \tilde{W}_3}}^{2}{{ \tilde{W}_5}}^{2}+2\,{ \tilde{W}_2}\,{{ \tilde{W}_3}}^{4}{\it 
\tilde{W}_6}+{\frac {7}{8}}\,{ \tilde{W}_2}\,{{ \tilde{W}_3}}^{2}{{ \tilde{W}_4}}^{3}\nonumber \\ 
&&+6\,{{ \tilde{W}_2
}}^{2}{ \tilde{W}_4}\,{{ \tilde{W}_6}}^{2}+{\frac {27}{5}}\,{{ \tilde{W}_2}}^{2}{{ \tilde{W}_5
}}^{2}{ \tilde{W}_6}-3\,{ \tilde{W}_2}\,{{ \tilde{W}_3}}^{2}{{ \tilde{W}_6}}^{2}-{\frac {68}{5
}}\,{ \tilde{W}_2}\,{ \tilde{W}_3}\,{{ \tilde{W}_5}}^{3}-\frac{11}{2} \,{ \tilde{W}_2}\,{{ \tilde{W}_4}}^{3}{
 \tilde{W}_6}\nonumber \\ 
&&+{\frac {99}{10}}\,{ \tilde{W}_2}\,{{ \tilde{W}_4}}^{2}{{ \tilde{W}_5}}^{2}-{
\frac {26}{5}}\,{{ \tilde{W}_3}}^{3}{ \tilde{W}_5}\,{ \tilde{W}_6}-\frac{11}{2} \,{{ \tilde{W}_3}}^{2}{
{ \tilde{W}_4}}^{2}{ \tilde{W}_6}+{\frac {12}{5}}\,{{ \tilde{W}_3}}^{2}{ \tilde{W}_4}\,{{\it 
\tilde{W}_5}}^{2}+{\frac {14}{5}}\,{ \tilde{W}_3}\,{{ \tilde{W}_4}}^{3}{ \tilde{W}_5}\nonumber \\ 
&&+12\,{ \tilde{W}_3
}\,{ \tilde{W}_5}\,{{ \tilde{W}_6}}^{2}-18\,{ \tilde{W}_4}\,{{ \tilde{W}_5}}^{2}{ \tilde{W}_6}+{
\frac {39}{5}}\,{ \tilde{W}_2}\,{ \tilde{W}_3}\,{ \tilde{W}_4}\,{ \tilde{W}_5}\,{ \tilde{W}_6}, \nonumber 
\eeq


\beq  
&{^6}D_6 = {\frac {221}{36}}\,{{ \tilde{W}_2}}^{4}{{ \tilde{W}_3}}^{2}{ \tilde{W}_4}\,{{ \tilde{W}_6}}^{2
}+{\frac {4141}{3600}}\,{{ \tilde{W}_2}}^{4}{{ \tilde{W}_3}}^{2}{{ \tilde{W}_5}}^{2}{
 \tilde{W}_6}+{\frac {3529}{3000}}\,{{ \tilde{W}_2}}^{4}{ \tilde{W}_3}\,{ \tilde{W}_4}\,{{\it 
\tilde{W}_5}}^{3}+{\frac {3361}{432}}\,{{ \tilde{W}_2}}^{3}{{ \tilde{W}_3}}^{2}{{ \tilde{W}_4}}^{
3}{ \tilde{W}_6}& \nonumber \\ 
&+{\frac {10379}{7200}}\,{{ \tilde{W}_2}}^{3}{{ \tilde{W}_3}}^{2}{{ \tilde{W}_4
}}^{2}{{ \tilde{W}_5}}^{2}+{\frac {19}{240}}\,{{ \tilde{W}_2}}^{3}{ \tilde{W}_3}\,{{\it 
\tilde{W}_4}}^{4}{ \tilde{W}_5}+{\frac {53}{360}}\,{{ \tilde{W}_2}}^{2}{{ \tilde{W}_3}}^{5}{\it 
\tilde{W}_5}\,{ \tilde{W}_6}+{\frac {3119}{576}}\,{{ \tilde{W}_2}}^{2}{{ \tilde{W}_3}}^{4}{{\it 
\tilde{W}_4}}^{2}{ \tilde{W}_6}& \nonumber \\ 
&-{\frac {41}{1200}}\,{{ \tilde{W}_2}}^{2}{{ \tilde{W}_3}}^{4}{\it 
\tilde{W}_4}\,{{ \tilde{W}_5}}^{2}+{\frac {799}{720}}\,{{ \tilde{W}_2}}^{2}{{ \tilde{W}_3}}^{3}{{
 \tilde{W}_4}}^{3}{ \tilde{W}_5}+{\frac {23}{18}}\,{ \tilde{W}_2}\,{{ \tilde{W}_3}}^{6}{ \tilde{W}_4
}\,{ \tilde{W}_6}+{\frac {29}{45}}\,{ \tilde{W}_2}\,{{ \tilde{W}_3}}^{5}{{ \tilde{W}_4}}^{2}{
 \tilde{W}_5}& \nonumber \\ 
&-{\frac {337}{300}}\,{{ \tilde{W}_2}}^{3}{ \tilde{W}_3}\,{{ \tilde{W}_5}}^{3}{
 \tilde{W}_6}-{\frac {637}{600}}\,{{ \tilde{W}_2}}^{3}{{ \tilde{W}_4}}^{2}{{ \tilde{W}_5}}^{2}
{ \tilde{W}_6}-{\frac {133}{60}}\,{{ \tilde{W}_2}}^{2}{{ \tilde{W}_3}}^{3}{ \tilde{W}_5}\,{{
 \tilde{W}_6}}^{2}-{\frac {139}{8}}\,{{ \tilde{W}_2}}^{2}{{ \tilde{W}_3}}^{2}{{ \tilde{W}_4}}^
{2}{{ \tilde{W}_6}}^{2}& \nonumber \\ 
&-{\frac {463}{300}}\,{{ \tilde{W}_2}}^{2}{ \tilde{W}_3}\,{{ \tilde{W}_4
}}^{2}{{ \tilde{W}_5}}^{3}-{\frac {43}{6}}\,{ \tilde{W}_2}\,{{ \tilde{W}_3}}^{4}{ \tilde{W}_4}
\,{{ \tilde{W}_6}}^{2}+{\frac {26}{25}}\,{ \tilde{W}_2}\,{{ \tilde{W}_3}}^{4}{{ \tilde{W}_5}}^
{2}{ \tilde{W}_6}-\frac{1}{25} \,{ \tilde{W}_2}\,{{ \tilde{W}_3}}^{3}{ \tilde{W}_4}\,{{ \tilde{W}_5}}^{3}& \nonumber \\ 
&-{
\frac {25}{16}}\,{ \tilde{W}_2}\,{{ \tilde{W}_3}}^{2}{{ \tilde{W}_4}}^{4}{ \tilde{W}_6}+{
\frac {327}{400}}\,{ \tilde{W}_2}\,{{ \tilde{W}_3}}^{2}{{ \tilde{W}_4}}^{3}{{ \tilde{W}_5}}^{2
}+{\frac {29}{120}}\,{ \tilde{W}_2}\,{ \tilde{W}_3}\,{{ \tilde{W}_4}}^{5}{ \tilde{W}_5}-{
\frac {77}{60}}\,{{ \tilde{W}_3}}^{5}{ \tilde{W}_4}\,{ \tilde{W}_5}\,{ \tilde{W}_6}& \nonumber \\ 
&+{\frac {22
}{5}}\,{{ \tilde{W}_2}}^{2}{ \tilde{W}_3}\,{ \tilde{W}_5}\,{{ \tilde{W}_6}}^{3}+{\frac {21}{5}
}\,{{ \tilde{W}_2}}^{2}{ \tilde{W}_4}\,{{ \tilde{W}_5}}^{2}{{ \tilde{W}_6}}^{2}+{\frac {71}{4}
}\,{ \tilde{W}_2}\,{{ \tilde{W}_3}}^{2}{ \tilde{W}_4}\,{{ \tilde{W}_6}}^{3}-\frac{3}{10} \,{ \tilde{W}_2}\,{{
 \tilde{W}_3}}^{2}{{ \tilde{W}_5}}^{2}{{ \tilde{W}_6}}^{2}& \nonumber \\ 
&-{\frac {79}{20}}\,{ \tilde{W}_2}\,{
{ \tilde{W}_4}}^{3}{{ \tilde{W}_5}}^{2}{ \tilde{W}_6}+{\frac {49}{10}}\,{{ \tilde{W}_3}}^{3}{
 \tilde{W}_4}\,{ \tilde{W}_5}\,{{ \tilde{W}_6}}^{2}-{\frac {31}{20}}\,{{ \tilde{W}_3}}^{2}{{
 \tilde{W}_4}}^{2}{{ \tilde{W}_5}}^{2}{ \tilde{W}_6}-\frac{3}{10} \,{ \tilde{W}_3}\,{{ \tilde{W}_4}}^{4}{\it 
\tilde{W}_5}\,{ \tilde{W}_6}& \nonumber \\ 
&-6\,{ \tilde{W}_3}\,{ \tilde{W}_4}\,{ \tilde{W}_5}\,{{ \tilde{W}_6}}^{3}-{\frac {
163}{5760}}\,{{ \tilde{W}_2}}^{9}{ \tilde{W}_3}\,{ \tilde{W}_4}\,{ \tilde{W}_5}+\frac{1}{16} \,{{ \tilde{W}_2
}}^{8}{ \tilde{W}_3}\,{ \tilde{W}_5}\,{ \tilde{W}_6}+{\frac {527}{864}}\,{{ \tilde{W}_2}}^{7}{
{ \tilde{W}_3}}^{2}{ \tilde{W}_4}\,{ \tilde{W}_6}& \nonumber \\ 
&+{\frac {337}{1440}}\,{{ \tilde{W}_2}}^{7}{
 \tilde{W}_3}\,{{ \tilde{W}_4}}^{2}{ \tilde{W}_5}+{\frac {1127}{1728}}\,{{ \tilde{W}_2}}^{6}{{
 \tilde{W}_3}}^{3}{ \tilde{W}_4}\,{ \tilde{W}_5}-{\frac {3491}{4320}}\,{{ \tilde{W}_2}}^{5}{{
 \tilde{W}_3}}^{3}{ \tilde{W}_5}\,{ \tilde{W}_6}-{\frac {103}{27}}\,{{ \tilde{W}_2}}^{5}{{\it 
\tilde{W}_3}}^{2}{{ \tilde{W}_4}}^{2}{ \tilde{W}_6}& \nonumber \\ 
&-{\frac {15487}{14400}}\,{{ \tilde{W}_2}}^{5}{
{ \tilde{W}_3}}^{2}{ \tilde{W}_4}\,{{ \tilde{W}_5}}^{2}-{\frac {13}{20}}\,{{ \tilde{W}_2}}^{5}
{ \tilde{W}_3}\,{{ \tilde{W}_4}}^{3}{ \tilde{W}_5}-{\frac {7759}{3456}}\,{{ \tilde{W}_2}}^{4}{
{ \tilde{W}_3}}^{4}{ \tilde{W}_4}\,{ \tilde{W}_6}-{\frac {3787}{2160}}\,{{ \tilde{W}_2}}^{4}{{
 \tilde{W}_3}}^{3}{{ \tilde{W}_4}}^{2}{ \tilde{W}_5}& \nonumber \\ 
&-{\frac {289}{720}}\,{{ \tilde{W}_2}}^{3}{
{ \tilde{W}_3}}^{5}{ \tilde{W}_4}\,{ \tilde{W}_5}+{\frac {349}{360}}\,{{ \tilde{W}_2}}^{5}{
 \tilde{W}_3}\,{ \tilde{W}_5}\,{{ \tilde{W}_6}}^{2}+{\frac {569}{1200}}\,{{ \tilde{W}_2}}^{5}{
 \tilde{W}_4}\,{{ \tilde{W}_5}}^{2}{ \tilde{W}_6}-{\frac {751}{720}}\,{{ \tilde{W}_2}}^{6}{
 \tilde{W}_3}\,{ \tilde{W}_4}\,{ \tilde{W}_5}\,{ \tilde{W}_6}& \nonumber \\ 
&+{\frac {337}{72}}\,{{ \tilde{W}_2}}^{4
}{ \tilde{W}_3}\,{{ \tilde{W}_4}}^{2}{ \tilde{W}_5}\,{ \tilde{W}_6}+{\frac {1529}{360}}\,{{
 \tilde{W}_2}}^{3}{{ \tilde{W}_3}}^{3}{ \tilde{W}_4}\,{ \tilde{W}_5}\,{ \tilde{W}_6}-{\frac {253}{30
}}\,{{ \tilde{W}_2}}^{3}{ \tilde{W}_3}\,{ \tilde{W}_4}\,{ \tilde{W}_5}\,{{ \tilde{W}_6}}^{2}& \nonumber \\ 
&-{\frac 
{201}{100}}\,{{ \tilde{W}_2}}^{2}{{ \tilde{W}_3}}^{2}{ \tilde{W}_4}\,{{ \tilde{W}_5}}^{2}{\it 
\tilde{W}_6}-{\frac {67}{20}}\,{{ \tilde{W}_2}}^{2}{ \tilde{W}_3}\,{{ \tilde{W}_4}}^{3}{ \tilde{W}_5}\,
{ \tilde{W}_6}-{\frac {115}{24}}\,{ \tilde{W}_2}\,{{ \tilde{W}_3}}^{3}{{ \tilde{W}_4}}^{2}{
 \tilde{W}_5}\,{ \tilde{W}_6}& \nonumber \\ 
&+{\frac {87}{10}}\,{ \tilde{W}_2}\,{ \tilde{W}_3}\,{{ \tilde{W}_4}}^{2}
{ \tilde{W}_5}\,{{ \tilde{W}_6}}^{2}+{\frac {9}{10}}\,{ \tilde{W}_2}\,{ \tilde{W}_3}\,{ \tilde{W}_4}
\,{{ \tilde{W}_5}}^{3}{ \tilde{W}_6}-{\frac {151}{864}}\,{{ \tilde{W}_2}}^{8}{{ \tilde{W}_3}}^
{2}{{ \tilde{W}_4}}^{2}-3\,{ \tilde{W}_2}\,{{ \tilde{W}_5}}^{2}{{ \tilde{W}_6}}^{3}& \nonumber \\ 
&-\frac32  \,{\it 
\tilde{W}_4}\,{{ \tilde{W}_5}}^{4}{ \tilde{W}_6}-{\frac {33}{2}}\,{ \tilde{W}_2}\,{ \tilde{W}_4}\,{{
 \tilde{W}_6}}^{4}+{\frac {28}{25}}\,{{ \tilde{W}_2}}^{2}{{ \tilde{W}_3}}^{2}{{ \tilde{W}_5}}^
{4}+{\frac {1379}{4800}}\,{{ \tilde{W}_2}}^{4}{{ \tilde{W}_3}}^{4}{{ \tilde{W}_5}}^{2}+\frac{1}{10} \,{{ \tilde{W}_3}}^{3}{{ \tilde{W}_4}}^{4}{ \tilde{W}_5}& \nonumber \\ 
&-{\frac {391}{36}}\,{{ \tilde{W}_2}
}^{3}{{ \tilde{W}_4}}^{3}{{ \tilde{W}_6}}^{2}+{\frac {17}{12}}\,{{ \tilde{W}_3}}^{2}{{
 \tilde{W}_4}}^{3}{{ \tilde{W}_6}}^{2}-{\frac {71}{2400}}\,{{ \tilde{W}_2}}^{6}{{ \tilde{W}_4}
}^{2}{{ \tilde{W}_5}}^{2}-{\frac {55}{12}}\,{{ \tilde{W}_2}}^{4}{ \tilde{W}_4}\,{{ \tilde{W}_6
}}^{3}-{\frac {17}{36}}\,{{ \tilde{W}_2}}^{7}{ \tilde{W}_4}\,{{ \tilde{W}_6}}^{2}& \nonumber \\ 
&-{
\frac {19}{15}}\,{{ \tilde{W}_3}}^{3}{{ \tilde{W}_5}}^{3}{ \tilde{W}_6}+2\,{ \tilde{W}_3}\,{{
 \tilde{W}_5}}^{3}{{ \tilde{W}_6}}^{2}+3\,{{ \tilde{W}_4}}^{2}{{ \tilde{W}_5}}^{2}{{ \tilde{W}_6}}^{
2}+{\frac {2375}{10368}}\,{{ \tilde{W}_2}}^{4}{{ \tilde{W}_3}}^{6}{ \tilde{W}_4}-{\frac 
{27}{800}}\,{{ \tilde{W}_2}}^{7}{{ \tilde{W}_5}}^{2}{ \tilde{W}_6}& \nonumber \\ 
&+{\frac {661}{864}}\,{
{ \tilde{W}_2}}^{6}{{ \tilde{W}_3}}^{2}{{ \tilde{W}_4}}^{3}-{\frac {55}{36}}\,{{ \tilde{W}_2}}
^{6}{{ \tilde{W}_4}}^{3}{ \tilde{W}_6}-{\frac {46}{75}}\,{{ \tilde{W}_2}}^{4}{{ \tilde{W}_5}}^
{2}{{ \tilde{W}_6}}^{2}+{\frac {1}{120}}\,{{ \tilde{W}_2}}^{8}{ \tilde{W}_4}\,{{ \tilde{W}_5}}
^{2}-{\frac {56}{375}}\,{{ \tilde{W}_2}}^{6}{ \tilde{W}_3}\,{{ \tilde{W}_5}}^{3}& \nonumber \\ 
&+{\frac 
{5213}{20736}}\,{{ \tilde{W}_2}}^{6}{{ \tilde{W}_3}}^{4}{ \tilde{W}_6}+{\frac {41}{40}}
\,{ \tilde{W}_2}\,{{ \tilde{W}_4}}^{2}{{ \tilde{W}_5}}^{4}+{\frac {1267}{9600}}\,{{\it 
\tilde{W}_2}}^{7}{{ \tilde{W}_3}}^{2}{{ \tilde{W}_5}}^{2}+{\frac {73}{400}}\,{{ \tilde{W}_3}}^{4}
{{ \tilde{W}_4}}^{2}{{ \tilde{W}_5}}^{2}+{\frac {1091}{1296}}\,{{ \tilde{W}_2}}^{5}{{
 \tilde{W}_3}}^{4}{{ \tilde{W}_4}}^{2}& \nonumber \\ 
&-{\frac {37}{576}}\,{{ \tilde{W}_2}}^{8}{{ \tilde{W}_3}}
^{3}{ \tilde{W}_5}-{\frac {25}{24}}\,{{ \tilde{W}_2}}^{2}{{ \tilde{W}_4}}^{5}{ \tilde{W}_6}+{
\frac {19}{180}}\,{{ \tilde{W}_3}}^{7}{ \tilde{W}_4}\,{ \tilde{W}_5}+{\frac {13}{20}}\,{
{ \tilde{W}_2}}^{2}{{ \tilde{W}_5}}^{4}{ \tilde{W}_6}+\frac25 \,{{ \tilde{W}_3}}^{2}{ \tilde{W}_4}\,{{
 \tilde{W}_5}}^{4}& \nonumber \\ 
&+{\frac {119}{6912}}\,{{ \tilde{W}_2}}^{10}{{ \tilde{W}_3}}^{2}{ \tilde{W}_4
}-{\frac {79}{24}}\,{{ \tilde{W}_2}}^{3}{{ \tilde{W}_3}}^{2}{{ \tilde{W}_6}}^{3}-{\frac 
{7801}{41472}}\,{{ \tilde{W}_2}}^{7}{{ \tilde{W}_3}}^{4}{ \tilde{W}_4}-{\frac {215}{6912
}}\,{{ \tilde{W}_2}}^{9}{{ \tilde{W}_3}}^{2}{ \tilde{W}_6}& \nonumber \\ 
&+{\frac {13}{11520}}\,{{ \tilde{W}_2
}}^{11}{ \tilde{W}_3}\,{ \tilde{W}_5}-\frac45 \,{ \tilde{W}_2}\,{ \tilde{W}_3}\,{{ \tilde{W}_5}}^{5}+\frac74 \,{ \tilde{W}_2}\,{{ \tilde{W}_4}}^{4}{{ \tilde{W}_6}}^{2}-{\frac {175}{288}}\,{{ \tilde{W}_3}
}^{4}{{ \tilde{W}_4}}^{3}{ \tilde{W}_6}-{\frac {247}{192}}\,{{ \tilde{W}_2}}^{4}{{ \tilde{W}_3
}}^{2}{{ \tilde{W}_4}}^{4}& \nonumber \\ 
&-{\frac {31}{50}}\,{{ \tilde{W}_2}}^{3}{ \tilde{W}_4}\,{{\it 
\tilde{W}_5}}^{4}+\frac{1}{15} \,{ \tilde{W}_3}\,{{ \tilde{W}_4}}^{3}{{ \tilde{W}_5}}^{3}+{\frac {37}{2}}
\,{{ \tilde{W}_2}}^{2}{{ \tilde{W}_4}}^{2}{{ \tilde{W}_6}}^{3}+{\frac {259}{200}}\,{{
 \tilde{W}_2}}^{2}{{ \tilde{W}_4}}^{4}{{ \tilde{W}_5}}^{2}-{\frac {25}{1152}}\,{{ \tilde{W}_2}
}^{10}{ \tilde{W}_4}\,{ \tilde{W}_6}& \nonumber \\ 
&-{\frac {221}{432}}\,{{ \tilde{W}_2}}^{6}{{ \tilde{W}_3}}^
{2}{{ \tilde{W}_6}}^{2}+{\frac {19}{12960}}\,{{ \tilde{W}_2}}^{5}{{ \tilde{W}_3}}^{5}{
 \tilde{W}_5}+{\frac {313}{1152}}\,{{ \tilde{W}_2}}^{8}{{ \tilde{W}_4}}^{2}{ \tilde{W}_6}-{
\frac {161}{648}}\,{{ \tilde{W}_2}}^{3}{{ \tilde{W}_3}}^{6}{ \tilde{W}_6}-{\frac {14479}
{10368}}\,{{ \tilde{W}_2}}^{3}{{ \tilde{W}_3}}^{4}{{ \tilde{W}_4}}^{3}& \nonumber \\ 
&+\frac{1}{24} \,{{ \tilde{W}_2}}^
{2}{{ \tilde{W}_3}}^{7}{ \tilde{W}_5}-{\frac {959}{1728}}\,{{ \tilde{W}_2}}^{2}{{ \tilde{W}_3}
}^{6}{{ \tilde{W}_4}}^{2}-\frac{1}{12} \,{ \tilde{W}_2}\,{{ \tilde{W}_3}}^{8}{ \tilde{W}_4}+{\frac {35}
{9}}\,{{ \tilde{W}_2}}^{5}{{ \tilde{W}_4}}^{2}{{ \tilde{W}_6}}^{2}+{\frac {109}{32}}\,{{
 \tilde{W}_2}}^{4}{{ \tilde{W}_4}}^{4}{ \tilde{W}_6}& \nonumber \\ 
&-{\frac {99}{400}}\,{{ \tilde{W}_2}}^{4}{{
 \tilde{W}_4}}^{3}{{ \tilde{W}_5}}^{2}+{\frac {325}{216}}\,{{ \tilde{W}_2}}^{3}{{ \tilde{W}_3}
}^{4}{{ \tilde{W}_6}}^{2}-{\frac {11083}{13500}}\,{{ \tilde{W}_2}}^{3}{{ \tilde{W}_3}}^{
3}{{ \tilde{W}_5}}^{3}+{\frac {29}{72}}\,{{ \tilde{W}_2}}^{2}{{ \tilde{W}_3}}^{2}{{\it 
\tilde{W}_4}}^{5}-{\frac {19}{90}}\,{ \tilde{W}_2}\,{{ \tilde{W}_3}}^{6}{{ \tilde{W}_5}}^{2}& \nonumber \\ 
&+{
\frac {373}{1152}}\,{ \tilde{W}_2}\,{{ \tilde{W}_3}}^{4}{{ \tilde{W}_4}}^{4}+{\frac {1}{
110592}}\,{{ \tilde{W}_2}}^{15}-{\frac {1}{324}}\,{{ \tilde{W}_3}}^{10}+\frac15 \,{{
 \tilde{W}_5}}^{6}+6\,{{ \tilde{W}_6}}^{5}+{\frac {2}{27}}\,{{ \tilde{W}_3}}^{8}{ \tilde{W}_6}
+{\frac {71}{864}}\,{{ \tilde{W}_3}}^{6}{{ \tilde{W}_4}}^{3}& \nonumber \\ 
&-\frac{1}{32} \,{ \tilde{W}_2}\,{{
 \tilde{W}_4}}^{7}-{\frac {25}{36}}\,{{ \tilde{W}_3}}^{6}{{ \tilde{W}_6}}^{2}+{\frac {
229}{1125}}\,{{ \tilde{W}_3}}^{5}{{ \tilde{W}_5}}^{3}-{\frac {1}{72}}\,{{ \tilde{W}_3}}^
{2}{{ \tilde{W}_4}}^{6}+{\frac {25}{12}}\,{{ \tilde{W}_2}}^{3}{{ \tilde{W}_6}}^{4}+{
\frac {19}{6}}\,{{ \tilde{W}_3}}^{4}{{ \tilde{W}_6}}^{3}& \nonumber \\ 
&+\frac{1}{24}\,{{ \tilde{W}_4}}^{6}{\it 
\tilde{W}_6}-{\frac {1}{100}}\,{{ \tilde{W}_4}}^{5}{{ \tilde{W}_5}}^{2}-7\,{{ \tilde{W}_3}}^{2}{{
 \tilde{W}_6}}^{4}-{{ \tilde{W}_4}}^{3}{{ \tilde{W}_6}}^{3}-{\frac {7}{18432}}\,{{ \tilde{W}_2
}}^{13}{ \tilde{W}_4}-{\frac {17}{27648}}\,{{ \tilde{W}_2}}^{12}{{ \tilde{W}_3}}^{2}& \nonumber \\ 
&+{
\frac {1}{1536}}\,{{ \tilde{W}_2}}^{12}{ \tilde{W}_6}+{\frac {59}{9216}}\,{{ \tilde{W}_2
}}^{11}{{ \tilde{W}_4}}^{2}+{\frac {2339}{165888}}\,{{ \tilde{W}_2}}^{9}{{ \tilde{W}_3}}
^{4}-{\frac {1}{2400}}\,{{ \tilde{W}_2}}^{10}{{ \tilde{W}_5}}^{2}-{\frac {251}{
4608}}\,{{ \tilde{W}_2}}^{9}{{ \tilde{W}_4}}^{3}& \nonumber \\ 
&-{\frac {3779}{186624}}\,{{ \tilde{W}_2}
}^{6}{{ \tilde{W}_3}}^{6}+{\frac {11}{576}}\,{{ \tilde{W}_2}}^{9}{{ \tilde{W}_6}}^{2}+{
\frac {61}{256}}\,{{ \tilde{W}_2}}^{7}{{ \tilde{W}_4}}^{4}+{\frac {19}{1944}}\,{{
 \tilde{W}_2}}^{3}{{ \tilde{W}_3}}^{8}-{\frac {179}{384}}\,{{ \tilde{W}_2}}^{5}{{ \tilde{W}_4}
}^{5}+{\frac {61}{216}}\,{{ \tilde{W}_2}}^{6}{{ \tilde{W}_6}}^{3}& \nonumber \\ 
&+{\frac {1467}{
20000}}\,{{ \tilde{W}_2}}^{5}{{ \tilde{W}_5}}^{4}+{\frac {43}{192}}\,{{ \tilde{W}_2}}^{3
}{{ \tilde{W}_4}}^{6}. & \nonumber \eeq

\newpage

\glsaddall
\printnoidxglossary[type=\acronymtype,title={Appendix E: Abbreviations}]
\newpage

\section*{Acknowledgements} 

The work was supported by NSERC of Canada (A.C.), and through the Research Council of Norway, Toppforsk grant no. 250367: Pseudo- Riemannian Geometry and Polynomial Curvature Invariants: Classification, Characterisation and Applications (D.M.).

\bibliographystyle{unsrt-phys}
\bibliography{GHReferences}

\end{document}